\newcommand\hypo{\Hypo} %
\newcommand\infer{\Infer}
\newcommand{\tran}[4][]{#2 \xrightarrow{#3}_{#1} #4}
\newcommand{\ntran}[4][]{#2 \stackrel{#3}{\nrightarrow}_{#1} #4}
\newcommand{\refinedBy}[1][T]{\sqsubseteq_{#1}}
\newcommand{\traces}[2][]{\mathit{traces}_{#1}(#2)}
\newcommand{\trace}[1][]{\langle #1 \rangle}
\newcommand{\event}[1][]{\mathit{#1}}
\newcommand{\action}[2][]{\event[#2]_{#1}}
\newcommand{\acend}[2][]{\action[#1]{e}^{\rfct[#2]}}
\newcommand{\acendr}[2][]{\action[#1]{\overline{e}}^{\rfct[#2]}}
\newcommand{\acmit}[2][]{\action[#1]{m}^{\rfct[#2]}}
\newcommand{\acopr}[2][]{\action[#1]{o}^{\rfct[#2]}}
\newcommand{\prCh}[2][x]{\,?{#1}:{#2} \operatorname{\rightarrow}}
\newcommand{\genpar}[1][]{\mathop{\parallel}\limits_{#1}}
\newcommand{\rfct}[1][f]{\mathsf{#1}}
\newcommand{\phase}[2][\sigma]{#1^{(#2)}}
\newcommand{\inact}[1][\rfct]{0^{#1}}
\newcommand{\activ}[1][f]{\mathit{#1}}
\newcommand{\mitig}[1][\rfct]{\overline{#1}}
\tikzset{
  blockdiag/.style={
    dim/.style={align=flush left},
    tit/.style={align=left,anchor=south west},
    itm/.style={align=center,minimum height=3em,minimum width=3cm,fill=gray!20},
    lab/.style={align=center,midway}
  },
  machine/.style={
    inner sep=2pt,
    outer sep=0pt,
    node distance=2cm,
    every edge/.style={draw,black,-{Stealth}},
    every loop/.style={-{Stealth}}
  },
  riskstr/.style={
    inner sep=2pt,
    outer sep=0pt,
    node distance=2cm,
    every edge/.style={very thick,draw,black,-{Stealth}},
    every loop/.style={-{Stealth}}
  },
  prjnc/.style={circle,draw},
  st/.style={circle,fill=black!20,align=center},
  gst/.style={circle,fill=ForestGreen!50,align=center},
  bst/.style={circle,fill=red!50,align=center},
  dynev/.style={sloped,above,midway},
  dynevin/.style={sloped,above,midway},
  dynevout/.style={sloped,above,midway},
  logev/.style={below,dashed,midway},
  hybev/.style={below,very thick,gray,align=center,midway},
  label/.style={align=center,midway},
  endg/.style={red},
  mitg/.style={ForestGreen}
}
\newmdtheoremenv[%
hidealllines=true,
leftmargin=20,%
rightmargin=20,
backgroundcolor=gray!30,%
settings={\small},
ntheorem]{example}{Example}
\newtheorem{definition}{Definition}%
\newtheorem{remark}{Remark}%
\newtheorem{theorem}{Theorem}%
\newtheorem{proof}{Proof}%
\newtheorem{corollary}{Corollary}%
\newtheorem{lemma}{Lemma}%
  \Crefname{theorem}{Theorem}{Theorems}
  \Crefname{lemma}{Lemma}{Lemmas}
  \Crefname{property}{Property}{Properties}
  \Crefname{equation}{Formula}{Formulas}
  \Crefname{proof}{Proof}{Proofs}
\tikzset{
  riskm/.pic={
    \begin{scope}[riskstr,node distance=3cm]
      \footnotesize
      \node[initial,gst] (s1) {\checkmark\\$\{s_1,s_3$,\\$s_5\}$};
      \node[bst] (s3) [right of=s1] {$\lightning$\\$\{s_2,s_4\}$};
      \node[st] (s6) [right of=s3] {?\\$\{s_6\}$};
      \draw
      (s1) edge[dynev,endg,bend left] node {$\{in_1,out_1\}$} (s3)
      (s3) edge[dynev,mitg,bend left] node[below] {$\{in_2,out_3\}$} (s1)
      (s1) edge[hybev,loop above] node {$\{in_1,out_2$,\\$act_1,act_3\}$} (s1)
      (s3) edge[hybev,dotted] node {$\{act_5\}$} (s6)
      (s3) edge[hybev,loop above] node {$\{in_2,out_4$,\\$act_4\}$} (s3);
    \end{scope}
  },
  procm/.pic={
    \begin{scope}[machine]
      \footnotesize
      \node[initial,st] (s1) {$s_1$};
      \node[prjnc] (j1) [right of=s1] {};
      \node[st] (s2) at ($(j1)+(2,.75)$) {$s_2$};
      \node[st] (s3) at ($(j1)+(2,-.75)$) {$s_3$};
      \node[prjnc] (j2) [right of=s2] {};
      \node[st] (s4) at ($(j2)+(2,-.75)$) {$s_4$};
      \node[st] (s5) at ($(j2)+(2,.75)$) {$s_5$};
      \node[st] (s6) at ($(s3)+(2,0)$) {$s_6$};
      \draw
      (s1) edge[dynev] node {$in_1$} (j1)
      (j1) edge[dynev] node {$\Lambda_1\colon out_1$} (s2)
      (j1) edge[dynev,below] node {$1-\Lambda_1\colon out_2$} (s3)
      (s3) edge[logev,sloped,above,bend left] node {$act_1$} (s1)
      (s1) edge[logev,loop below] node {$act_2$} (s1)
      (s2) edge[dynev] node {$in_2$} (j2)
      (j2) edge[dynev] node {$\Lambda_2\colon out_3$} (s5)
      (j2) edge[dynev,below] node {$1-\Lambda_2\colon out_4$} (s4)
      (s5) edge[logev,bend right,sloped] node {$act_3$} (s1)
      (s2) edge[logev,sloped] node {$act_5$} (s6)
      (s4) edge[logev,loop below] node {$act_4$} (s4);
    \end{scope}
  },
  abstlev/.pic={
    \begin{scope}[blockdiag]
      \node[itm] (riskm) at (0,0) {\textbf{Risk Model} $\mathfrak{R}$};
      \node[itm,draw,dotted,fill=white] (procm) at
      ($(riskm)+(0,-8em)$) {Process Model $P$};
      \draw[arrows={-latex},thick,tips=proper]
      (riskm) edge[dotted] node[lab,fill=white] {abstracts\\from} (procm)
      ;
    \end{scope}
  }
}
\begin{document}
\label{firstpage}

\title%
{Risk Structures: Towards Engineering Risk-aware
  Autonomous Systems}
\author%
{Mario Gleirscher\\
  Computer Science Department, University of York, York, UK%
\thanks{\emph{Correspondence and offprint requests to:}
  Mario Gleirscher, Univerity of York, Deramore Lane,
  Heslington, York YO10 5GH, UK.  e-mail:
  mario.gleirscher@york.ac.uk
  \newline
  This work is supported by the Deutsche Forschungsgemeinschaft~(DFG) under
  Grants~no.~GL~915/1-1 and~GL~915/1-2. 
  \copyright\ 2019. This manuscript is
  made available under the CC-BY-NC-ND 4.0 license
  \url{http://creativecommons.org/licenses/by-nc-nd/4.0/}.
  \newline
  \textbf{Reference Format:}
  Gleirscher, M.. \emph{Risk Structures: Towards Engineering Risk-aware
  Autonomous Systems} (\today). Unpublished working paper. 
  Department of Computer Science, University of York, United
  Kingdom.
}}
\date{\today}

\maketitle

\begin{abstract}
  Inspired by widely-used techniques of causal modelling in risk,
  failure, and accident analysis, this work discusses a com positional
  framework for risk modelling.  Risk models capture fragments of the
  space of risky events likely to occur when operating a machine in a
  given environment.  Moreover, one can build such models into
  machines such as autonomous robots, to equip them with the ability
  of risk-aware perception, monitoring, decision making, and control.
  With the notion of a risk factor as the modelling primitive, the
  framework provides several means to construct and shape risk models.
  Relational and algebraic properties are investigated and proofs
  support the validity and consistency of these properties over the
  corresponding models.  Several examples throughout the discussion
  illustrate the applicability of the concepts.  Overall, this work
  focuses on the qualitative treatment of risk with the outlook of
  transferring these results to probabilistic refinements of the
  discussed framework.
\end{abstract}
\keywords{Causal modelling \and risk \and analysis \and modelling\and
  safety monitoring \and risk mitigation \and robots \and autonomous
  systems}

\tableofcontents

\section{Introduction}
\label{sec:introduction}

For surgical robot assistance, the summary of safety mechanisms by
\citet{Howe1999-RoboticsSurgery} ranges from active and passive
mechanisms over independent safety monitors to supervisory control.
This summary gives an impression of the number of risk factors to be
handled by dedicated mechanisms in safety-critical robots.  Moreover,
\citeauthor{Howe1999-RoboticsSurgery} already suggested that full
robot autonomy requires improved sensor fusion and qualitative
reasoning.
Pushing this idea much further,
\citet{Holland2003-RobotsInternalModels}
discuss how autonomous robots might be able to internalise models of
consciousness.  Inspired by this direction, this work aims at the use
of finite symbolic %
models to integrate a form of consciousness of risk in such systems,
hereafter called \emph{risk awareness}.  To achieve this, let us first
revisit some basics of the construction of failure and risk models.

Highly automated machines can be both faulty and engage in dangerous
events and are expected to handle faults and dangerous events
automatically.
\emph{Hazards} generalise the concept of faults, errors, and failures
to any kind of dangerous events.  The notion of \emph{risk} then
allows to discuss temporal and causal relationships between such
events.
Some hazards can be avoided by design.  However, for many
hazards, only the \emph{likelihood of their occurrence} or the
\emph{severity of their consequences} can be reduced.
Engineers use specific models to analyse these hazards and to achieve
their reduction.  In formal verification, we identify the models we
accept, the ones we do not accept, and compare models to decide which
ones we accept more than others.
In this work, we consider systems that can fail or engage in dangerous
events.  With the help of models, we study how we can handle such
events.  Let us further motivate this with examples.

\begin{example}[Qualitative Evaluation of the Risk of Functional Failures]
  \label{exa:evalfuncfault}
  In a car airbag, the event ``airbag release'' is associated with two
  general functional hazards: \emph{failure on demand}~(i.e.,\xspace action
  not performed when requested or when its guard is enabled) and
  \emph{spurious trip}~(i.e.,\xspace action performed when not requested or
  when its guard is not enabled).  The risk incorporated by these
  hazards depends on the current state of the driving process.  For
  each of these hazards, we can separate the driving process into
  \emph{situations}.  This step yields a table whose cells can be
  filled with \emph{risk information}, particularly, an analysis of
  the probability
  and/or the consequences of an event occurrence:\\
  \begin{tabular}{lp{5cm}p{5cm}}
    \toprule
    Context: Vehicle in \dots
    & \textbf{Spurious trip}
    & \textbf{Failure on demand}
    \\
    \multicolumn{3}{c}{\dots of \textbf{airbag}:}
    \\\midrule
    \dots \textbf{manual mode}
    & 1. \emph{consequences from distraction \& bag shock}
    & (irrelevant) \\
    \dots \textbf{autonomous mode}
    & 2. \emph{consequences from bag shock}
    & (irrelevant) \\
    \dots \textbf{collision}
    & (irrelevant)
    & 3. \emph{consequences from crash without airbag}
    \\\bottomrule
  \end{tabular}\\
  Whereas case 1 can lead to a fatal car crash due to \emph{loss of
    control}~(i.e.,\xspace by distraction) %
  of the driving process, case 2 as a generalisation of case
  1 may cause serious injuries but unlikely a fatal crash.  Case 3
  is different, though the loss of control %
  is irrelevant and risk %
  is inherited from situations without airbag.
\end{example}

The number of \emph{contexts} or \emph{situations}, \emph{functions},
and \emph{hazards} requires a large number of these tables to be
evaluated at design time or during operation.  Additionally, these
tables are related according to dependencies between situations,
functions, and hazards~(e.g.\xspace loss of human control \emph{requires}
spurious trip in manual mode, risk analysis of the airbag \emph{refers
  to} crash risk analysis without an airbag, risk analysis in manual
mode is an \emph{extension of} risk analysis in autonomous mode).

\begin{example}[Qualitative Evaluation of the Risk of Operational Incidents]
  \label{exa:evaloperincid}
  Consider a collision of a vehicle with another object.  The overall
  risk depends on the likelihood of a collision from the
  current state and the possible consequences of the collision.\\
  \begin{tabular}{lp{4.5cm}p{5cm}}
    \toprule
    Context: Process with \dots 
    & \textbf{Near-collision}
    & \textbf{Collision} \\
    \multicolumn{3}{c}{\dots of \textbf{vehicle}:}
    \\\midrule
    \dots \textbf{following vehicle} %
    & probability of collision
    & consequences of passive collision %
    \\ 
    \dots \textbf{leading vehicle} %
    & probability of collision
    & consequences of rear-end collision
    \\
    \dots \textbf{oncoming vehicle}
    & probability of collision
    & consequences of head-on collision
    \\\bottomrule
  \end{tabular}
\end{example}

\Cref{exa:evaloperincid} illustrates the probabilistic relationship
between the two events \emph{near-collision} and \emph{collision}, and
the relationship between operational incidents and functions, that is,\xspace the
airbag as the subject of risk assessment in \Cref{exa:evalfuncfault}
is the safety function mitigating risk after collision events.
Note the difference between near-collision and the three other
hazards, collision, and spurious trip and failure on demand of the
airbag.  The three latter form \emph{hard}
events inasmuch as the focus lies on \emph{consequence estimation}
whereas near-collision can be understood as a \emph{soft}
event where the focus of risk assessment lies on \emph{probability
  estimation}.

\subsection{Abstractions for Machine Safety %
  and Risk Awareness} %
\label{sec:abstraction}

Risk assessment of an autonomous robot encompasses
\begin{itemize}
\item the analysis of chains of undesired events the machine can
  engage in, qualitatively~(i.e.,\xspace from a causal viewpoint) and
  quantitatively~(e.g.\xspace from a probabilistic viewpoint), and
\item the analysis of what the machine is capable of~(i.e.,\xspace from a
  functional, situational, and performance viewpoint).
\end{itemize}
Autonomous robots have to handle many such event chains during
operation.  Hence, the tables mentioned above have to be identified
and pre-filled at design time~(e.g.\xspace by using qualitative risk
matrices).  Some of these tables will have to be continuously
re-assessed at run-time~(e.g.\xspace by prediction and quantitative
risk assessment).
Such robots need to continuously judge risk stemming from their past
and planned behaviours, using introspection, estimating the current
state, and predicting future states of the whole process.  In the
following, we will use the term \emph{process} to refer to the
operation of a robot in its physical environment.

The \Cref{exa:evalfuncfault,exa:evaloperincid} stimulate two questions
when designing run-time mitigation measures: \emph{Which situations,
  functions, and hazards are there?  How do we consider all these in a
  manageable run-time model?}
A run-time risk model identifies the undesirable or \emph{dangerous}
subset and the desirable or \emph{safe} subset of the state space of
the process.  Knowing the dangerous subset helps assessing the
measures for avoiding to reach this subset or for leaving this subset.
Knowing the safe subset helps assessing the measures for not leaving
this subset or for reentering this subset.
If one of these sets is completely identified, we can derive the other
one by set complement.  However, often a risk model helps labelling
only some fragments of safe and dangerous states such that a set of
\emph{unlabelled states} is left.  Moreover, instead of a dichotomous
scale~(i.e.,\xspace safe or dangerous), the risk model can help evaluating
risk per state on a cardinal scale~(e.g.\xspace \emph{risk level} of a state
as a continuous measure~\citep{Sanger2014-RiskAwareControl}), or
using fuzzy sets~(i.e.,\xspace degrees of membership of a state in both the
safe and dangerous sets).

Starting form unlabelled states, we can first be permissive, that is,\xspace
successively \emph{identify the dangerous subset}~(e.g.\xspace by estimating
risk levels) and reduce the safe subset accordingly.  Alternatively,
we can successively \emph{expand the safe subset}~(e.g.\xspace by estimating
risk levels) until we conclude unacceptable risk from our model.
While both approaches occur in safety cultures, the latter can
sometimes be too restrictive~(i.e.,\xspace the whole state space is a-priori
unsafe).
The following discussion will focus on the permissive approach.

Risk awareness results from the fact that an autonomous robot complies
with or refines the \emph{risk model} at run-time.
Consequently, our main questions are: \emph{What constitutes a
  powerful risk model to make autonomous robots risk-aware? Moreover,
  how can we systematically engineer a consistent and valid risk model
  for an autonomous robot?}
To separate concerns in the modelling of autonomous robots, we
consider two abstractions, the \emph{process model} $P$ and the
\emph{risk model} $\mathfrak{R}$ as shown in \Cref{fig:abslevels}.

$P$ captures the behaviours we might observe in the actual process,
for example,\xspace the behaviours that are generated from a robot in its environment
continuously making decisions, performing logical actions~(i.e.,\xspace
changing the data state, $act_i$, dotted arcs), stimulating physical
actions~(i.e.,\xspace generating control inputs, $in_i$, solid arcs), and
producing and observing outcomes~(i.e.,\xspace sensing process outputs,
$out_i$, solid arcs).  Actions and outcomes are the events of interest
in $P$.  Events and states~($s_i$) represent the \emph{observables}
to reason about behaviour.  Uncertainty in $P$ allows several
outcomes or successor states from one action~(e.g.\xspace $in_1,in_2$)
associated with parameters~($\Lambda_i$) forming probability
distributions on the outcomes.  A dynamical model of $P$~(e.g.\xspace
a hybrid automaton) can be used instead of an uncertainty model~(e.g.\xspace
a \textsc{Markov} decision process).

$\mathfrak{R}$ abstracts from $P$ and is comprised of a set of
\emph{risk factors}, each classifying $P$'s state space into a safe
region~(i.e.,\xspace green node signifying the desirable subset), a risky
region~(i.e.,\xspace red node signifying the undesirable subset), and an
unlabelled region~(grey node).  $\mathfrak{R}$ classifies and reduces the
event space~(i.e.,\xspace the \emph{alphabet}) of $P$ to events relevant for
risk assessment, that is,\xspace \emph{endangerments}~(red arcs) and
\emph{mitigations}~(green arcs).
We assume to establish observational \emph{refinement} or some
bisimilarity between $P$ and $\mathfrak{R}$.  In $\mathfrak{R}$, we have many choices
for abstraction, for example,\xspace we can focus on logical actions~($act_i$),
process responses~($out_i$), control inputs~($in_i$), or any
combination thereof.  We can also craft and compare several risk
models of the same process, each representing the view of a specific
risk analysis.
The remainder of this work will deal with a formal framework for the
systematic construction of consistent and valid risk models.

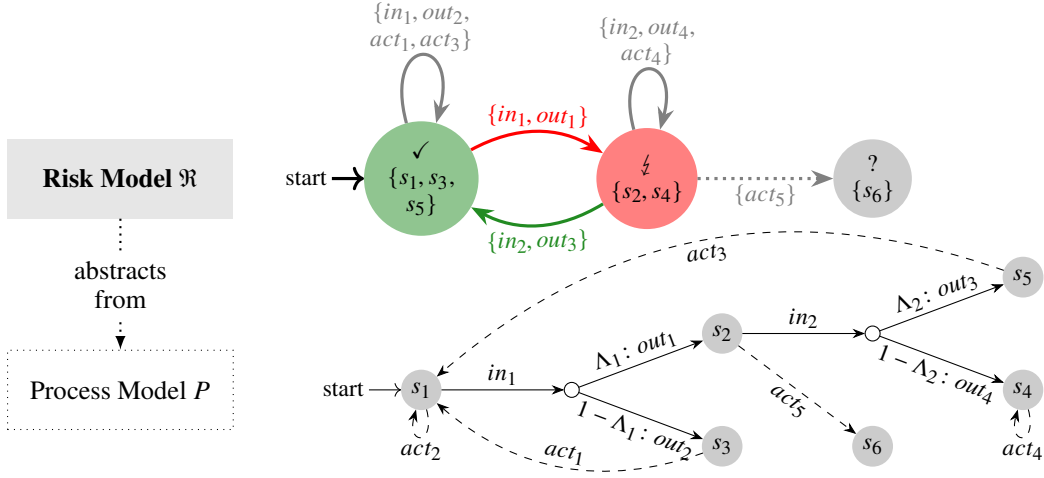
\begin{figure}
  \centering
  \begin{tikzpicture}
    \pic[at={(0,5)}]{abstlev};
    \pic[at={($(riskm)+(4,0)$)}]{riskm};
    \pic[at={($(procm)+(4,0)$)}]{procm};
  \end{tikzpicture}
  \caption{Two abstraction levels: Simple risk model $\mathfrak{R}$ with a
    single risk factor partitioning the state space of process model
    $P$ and, this way, forming a view of $P$ with respect to this
    specific risk factor
    \label{fig:abslevels}}
\end{figure}

\subsection{Related Work}
\label{sec:risks:relwork}

Here, we will discuss related work 
\begin{inparaitem}
\item in the area of algebraic methods for risk assessment,
\item in dependability of repairable systems, 
\item in safety monitoring, and
\item in risk-sensitive control and risk-aware planning.
\end{inparaitem}

\paragraph{Algebraic Methods in System Risk Assessment.}

To the best knowledge of the author, this work is the first to provide
an algebraic account of safety risk modelling for risk-aware
autonomous systems.
However, algebraic methods have been proposed for IT network security
risk assessment.
For example,\xspace\citet{Hamdi2003-Algebraicspecificationnetwork} formalise the
security risk management life cycle as an algebraic specification with
the aim of consistency checking of particular risk analyses viewed as
individual algebras.  Probability of occurrence and severity of
consequence are modelled as metrics over attacks~(i.e.,\xspace action sets) to
select optimal countermeasures for attacks using multi-objective
optimisation.  Although their approach focuses security risk
management at design-time, their framework is inspiring for the
further development of the work at hand.

\paragraph{Dependability Methods for Repairable Systems.}
\label{sec:depend-meth-repa}

Risk structures overlap in their potential use with works in the field
of dependability assessment.  This relationship gets visible, for example,\xspace in
the approach of \citet{Unanue2018-ExplicitModellingTreatment}.
The authors annotate a component architecture with a fault model,
synthesise a failure-repair
automaton, %
generate a temporal fault/repair tree, %
synthesise minimum cut sequences from this tree, and
construct an extended form of \textsc{Petri} net %
to calculate failure probabilities
from these sequences.
While they focus on pure failure assessment of generic systems,
risk structures generalise their approach by including severity of
consequences to enable qualitative reasoning about risk rather than
only about failure.

\paragraph{Safety Monitoring in Autonomous Systems.}
\label{sec:relwork:monitors}

One of the main intentions behind risk structures is their use as
active monitors and mechanisms for handling undesired events.
This has been an active areas in robotics research for many years.

\citet{Sobek1988-Integratedplanningexecution} propose a robot
architecture
where a safety monitor identifies obstacles not recognised in a
previous planning step.
Such events trigger local corrective actions, for example,\xspace obstacle avoidance
while a mitigation monitor checks for the success of a corrective
action to hand over to the main planner again.
Similarly, \citet{Simmons1994-Structuredcontrolautonomous}
speak of deliberative components to handle normal situations whereas
reactive
behaviours %
are activated to handle exceptional situations.

\citet{Guiochet2008-OnlineSafetyMonitoring} propose a risk model
using mode transition systems
where each mode represents a safety constraint
and modes are partially ordered.
Refining this approach,
\citet{Mekki-Mokhtar2012-SafetyTriggerConditions} distinguish between
safe, warning, and catastrophic states with safety trigger conditions.
Based on this framework, \citet{Machin2018-SMOFSafetyMOnitoring}
apply model checking to determine whether catastrophic states can be
reached.
The authors also present a tree-based algorithm for the synthesis of
mitigation strategies, that is,\xspace minimal sequences of control interventions
to reach the nearest safe state from any warning state and fulfilling
validity, permissiveness, and safety properties.
While their approach is useful for the refined modelling of hazards,
the presented work could enhance their framework with an algebraic
method.
Particularly, the work at hand allows to specify dependencies between
hazard-related variables. %

Based on the ``monitoring oriented programming''
approach~\citep{Meredith2011-overviewMOPruntime},
\citet{Huang2014-ROSRVRuntimeVerification} present a monitoring
infrastructure for checking trace properties~(specified in different
formalisms)
observable from communications between modules running on the robot
platform ROS.\footnote{Robot operating system, see
  \url{https://www.ros.org}.}  Their approach indicates how the risk
model presented here can be implemented as a monitor using their
framework.

\citet{Sorin2016-RulebasedDynamic} propose a framework for the
modelling and generation of safety monitors for ROS-based autonomous
robots using if-then rules and safety action specifications.  The
approach is tested with an automatic vehicle in a farm
environment. %
While their framework considers many important implementation aspects,
the presented risk model could add a methodological layer to their
approach, improving scalability and the formal treatment of
conflicting or interfering mitigation actions.

\paragraph{Risk-sensitive Control and Risk-aware Planning in
  Autonomous Systems.}
\label{sec:relwork:risksenscontr}

\citet{Althoff2007} discusses an approach to metric reachability
for linearisable dynamics.  Efficiency of the reachability analyser is
achieved by discretising the linearised model.  The authors apply this
approach to collision avoidance control in autonomous vehicles.  For
this, they discuss an abstraction of the discretised model into a
Markov chain where the pair-wise convolution of reachable sets serves
the calculation of collision risk of an ego vehicle with an
oncoming %
vehicle.

Several applications of stochastic optimal control %
aim at minimising collision risk of autonomous robots and vehicles.
\citet{Althoff2011-Safetyassessmentrobot} work with a probabilistic
version of inevitable collision
states~\citep{Fraichard2004-Inevitablecollisionstates} to
approximate collision probability and cost beyond the planning horizon
from Monte Carlo sampling of trajectories.  These
metrics allow the ranking of the simulated trajectories
in navigation decisions.
\citet{Pereira2013-RiskawarePath} experiment with a minimum
expected risk planner and a risk-aware Markov decision process in
autonomous underwater vehicle navigation exposed to perturbations by
ocean currents.
\citet{Sanger2014-RiskAwareControl} discusses a framework for
risk-aware control where ``movement is governed by estimates of risk
based on uncertainty about the current state and knowledge of the cost
of errors.''  The author illustrates an implementation of this
framework for autonomous driving by a neuronal network.
\citet{Feyzabadi2014-Riskawarepath} propose an efficient
risk-aware path planning algorithm using constrained Markov decision
processes, illustrating their approach by an autonomous indoor
navigation problem.
\citet{Muller2014-Riskawaretrajectory}
formalise collision risk by a Gamma distribution of the state and
uncertain %
distance to the nearest obstacle.

For discrete planning and navigation,
\citet{Shalev-Shwartz2018-FormalModelSafe} define a parametric
model
for the investigation of collision-free autonomous driving.  Their
model implements driving rules according to the Duty of Care approach
from Tort law,
assuming \emph{proper response} of all relevant traffic
participants %
in typical driving scenarios.
For efficient planning, the action space of the vehicle is discretised
to solve the corresponding %
optimisation problem.
The authors also determine the probability of sensing mistakes
after applying a triple redundancy pattern to the sensor sub-system.
\citet{Chen2018-RearEndCollision} discuss the prediction of
rear-end collision %
risk.  Based on the current vehicle state and the sensed
environment, a \textsc{Kalman} filter predicts the next state of the vehicle
and its environment at the end of a monitoring interval. %
The predicted state represents evidence in the %
\textsc{Bayes}ian network for the estimation of the collision
probability.  Low risk is translated into a warning, high risk into a
control intervention which, however, is not focused in their work.

\citet{McCausland2013-proactiveriskaware} investigate risk-driven
self-organisation of a communicating collective of mobile robots
acting as environment sensor nodes.  Their risk model is comprised of
three factors and represents a fixed node-local risk metric which is
continuously evaluated.

In all these works, risk represents either a~(chance) constraint not
to be violated or a minimisation criterion for determining an optimal
plan.  Stochastic models allow the per-state estimation of expected
risk.  The notion of severity interval in the work at hand would add
another possibly helpful uncertainty factor to such approaches and
allow the per-state estimation of an \emph{expected risk range}.
Complementary to these control-theoretic approaches, the work at hand
provides an algebra for constructing risk models~(i.e.,\xspace state space,
value function, action space), hence, allowing reasoning about the
composition of models of a variety of risks beyond collision and
relating these models via refinement.  This account addresses the
systematic construction and verification of risk models.  The
mentioned control approaches suggest implementations of controllers
from validated risk models.

\subsection{Contributions and Overview}
\label{sec:contribution}

This work contributes to the state of the art of formal engineering
methodology for highly automated safety-critical systems,
particularly, for the engineering of risk-aware robots and autonomous
systems, in several regards:
\begin{itemize}
\item Humans can qualitatively identify or predict dangerous
  situations and take actions to avoid or recover from such
  situations.  Based on that idea, this work approaches
  \emph{qualitative risk modelling}~(\Cref{sec:risk:model}),
  yet %
  making provisions to refine the presented model into quantitative
  risk models~(\Cref{sec:relwork:risksenscontr}).

\item It provides a \emph{conceptual framework} for risk modelling,
  formalises this framework, and proves \emph{algebraic} properties of
  the proposed
  concepts~(\Cref{sec:risk:model,sec:risk:mitorders,sec:risk:constraints,sec:risk-structures}),
  hence, going beyond the works summarised in
  \Cref{sec:risks:relwork}.  The framework %
  further develops the ideas in
  \citep{Gleirscher2018-Strukturenfurdie,Gleirscher2017-FVAV}.
\item Bridging the gap to works in \Cref{sec:relwork:monitors}, it
  investigates how \emph{safety} of machines such as autonomous robots
  can be evaluated at design time and at run-time based on the
  proposed framework~(\Cref{sec:risk:safety}).
\item It establishes a relationship to \emph{other techniques}
  conventionally and widely used in risk, failure, and accident
  analysis~(\Cref{sec:depend-meth-repa,sec:risk:constraints}).
\item It discusses several \emph{examples} to motivate the choice of
  these concepts~(\Crefrange{exa:evalfuncfault}{exa:depend1}) and
  explains the specifics and pitfalls of the abstraction to be made by
  users of the framework~(\Cref{rem:abstroforder1,rem:abstroforder2}).
\item Complementary to the works in \Cref{sec:relwork:risksenscontr},
  it characterises \emph{risk awareness} as the internalisation of
  expressive and \emph{structured} risk models by robot controllers to
  evaluate, predict, and memorise risk by both introspection and
  environment sensing~(\Cref{sec:risk-structures}).
\item It describes how risk models represent \emph{acceptance
    specifications}~(i.e.,\xspace the refinement of the risk model by the
  process, after hiding irrelevant events, expresses risk acceptance)
  and \emph{run-time monitors} that continuously decide whether and
  how the process violates safety %
  or achieves co-safety %
  (\Cref{sec:notes-monitoring}). %
\end{itemize}
The remainder of this article is structured as follows: After an
introduction to the field~(\Cref{sec:risks:background}) including the
formal preliminaries~(\Cref{sec:formal-preliminaries}), the
\Cref{sec:risk:model,sec:risk:mitorders,sec:risk:constraints,sec:risk-structures}
present the contribution followed by a brief discussion in
\Cref{sec:rap:discussion}.  The article concludes with
\Cref{sec:concl-future-work}.  Proof details are listed in
\Cref{sec:proofs}.

\section{Background}
\label{sec:risks:background}

This section provides some background on risk modelling and
assessment, failure analysis, and run-time monitoring.  It also draws
a relationship to the application domain of robots and autonomous
systems.

\subsection{Notions of Risk}
\label{sec:risk}

Risk can be characterised as the possibility of \emph{undesired}
outcomes~(e.g.\xspace hazardous events and states) of an action with several
alternative but \emph{uncertain
outcomes}~\citep{%
  Kaplan1981-QuantitativeDefinitionRisk}.
In systems engineering, risk is usually assessed by 
measuring
\begin{inparaitem}
\item the probability of hazards,
\item the severity of their consequences in case of their occurrence,
\item the probability of occurrence of these consequences~(e.g.\xspace an
  accident) after hazards have occurred, and
\item the exposure of the system to these
  hazards~\citep{Leveson1995-SafewareSystemSafety}.
\end{inparaitem}
Variations and simplifications of these measures are in use across
different disciplines, application domains, and corresponding
standards.  
For the estimation of these measures, it is necessary to understand
causal relationships between events.  Reasoning about \emph{causal
propositions} has been formalised in philosophy, mathematics, and
computer science~%
\citep{%
  Lewis1973-Causation%
}.
\emph{Probabilistic risk assessment}~(PRA) focuses on the use of
stochastic models to quantify such
propositions~\citep{Kumamoto2007-Satisfyingsafetygoals}.  Apart from
probabilistic models, a variety of \emph{uncertainty models} and
analysis techniques are used in engineering, for example,\xspace as summarised by
\citet{Oberguggenberger2015-Analysiscomputationhybrid} for structural
safety, the reduction of the risk of dangerous incidents in civil
engineering projects.

\subsection{The Risk of Undesired Events} 
\label{sec:relwork:undesiredevents}

\paragraph{System Accidents, Dangerous Incidents and Failures.}
\label{sec:relwork:accincfail}

Unfortunately, knowledge about risk mostly results from system
accidents.  To learn from such events, accident researchers use
methods such as AcciMaps~\citep{%
  Svedung2002-Graphicrepresentationaccident}.
For prevention and in response to accidents, methods such as
hazard operability studies~(HazOp), event tree analysis~(ETA),
and layer of protection analysis~(LOPA) allow the systematic
identification and investigation of dangerous events and the
derivation and assessment of interventions.
\emph{Fault tree analysis}~(FTA) and \emph{failure mode effects
  analysis}~(FMEA) are widely used in structured analysis, assessment,
and reduction of system or process failures.  FTA follows a deductive
scheme of causal reasoning, FMEA an inductive one.

There are many variations, extensions, %
combinations of these techniques, integrated into assurance
approaches~\citep{McDermid1994-Supportsafetycases}, enriched with
system models~(e.g.\xspace the use of UML in
HazOp~\citep{Guiochet2010-ExperienceModelBased}),
coming along with intuitive visual languages, and tailored for
specific applications or stages in the system life
cycle~\citep{Ericson2015-HazardAnalysisTechniques}.
All these techniques serve the analysis of undesired events, their
causes and consequences, with the aim of reducing risk.
\begin{inparaitem}
\item \emph{Reliability techniques} help reducing the risk of
  \emph{failures}~(i.e.,\xspace reduce fault-proneness and increase
  fault-tolerance) of mechanical and mechatronic
  systems~\citep{%
    Birolini2017-ReliabilityEngineering}
\item \emph{Safety techniques} focus on reducing the risk of
  \emph{dangerous failures} of control systems and
  software~\citep{Leveson1995-SafewareSystemSafety}.
\end{inparaitem}
Both directions differ, usually overlap, and are embedded into the
context of dependability engineering.

\paragraph{Analysis and Reduction of Random Failures.}
\label{sec:relwork:randfail}

The body of scientific literature on failure analysis is overwhelming.
The following overview of techniques focuses on a small fraction of
model-based and formal techniques for failure analysis at design-time.
Before summarising several techniques, we will have a brief look onto
the formal concepts of FTA because it is one of the most widely
practised versatile techniques with many powerful
extensions~\citep{Ericson2015-HazardAnalysisTechniques}.

A fault tree is a causal model of a system relating an undesired
\emph{top-level event} $\mathit{e_{TL}}$ with a set $B$ of \emph{basic
  events} using various kinds of \emph{gates}~(e.g.\xspace \texttt{AND},
\texttt{OR}, \texttt{NOT}) %
connecting these events.  A \emph{minimum cut set}~(MCS) is a minimum
set of events required to occur to activate $e_{TL}$.  \emph{Dynamic
  fault trees} are an important extension of fault trees, allowing to
model the order of events and, thus, leading to \emph{minimum cut
  sequences}.  A minimum cut sequence describes a minimum set of
events that have to occur in this sequence to activate $e_{TL}$.
Fault trees can also be expressed by connecting specific sets
$MCSets\subseteq2^{B}$ in the disjunctive normal form in the
antecedent and with $e_{TL}$ in the consequent of the following
implication:
\[
  \big( \bigvee_{S\in \mathit{MCSets}}\bigwedge_{e_B \in S} e_B \big) \Rightarrow \mathit{e_{TL}}
\]
where $MCSets$ denotes all MCSs and $e_B$
stands for a \emph{basic event}. %

FTA and FMEA are regularly used in combination and have been given
formal semantics also based on probabilistic
models~\citep{Kumamoto2007-Satisfyingsafetygoals}.  For
quantitative analysis, many approaches use \textsc{Markov}
models~\citep{Dehlinger2008-DynamicEventFaultTree}, or at a higher
level of abstraction, variants of stochastic \textsc{Petri}
nets~\citep{%
  Papadopoulos2011}. %
For dynamic fault trees,
\citet{Kabir2018-UncertaintyAwareDynamic} show how fuzzy numbers
can be used %
to integrate qualitative expert opinions on unknown failure rates into
a generalised stochastic \textsc{Petri} net for quantitative reliability
evaluation.

Complex dynamic fault trees can also be converted into probabilistic
automata for the checking of stochastic temporal
properties~\citep{%
  Boudali2007-Dynamicfaulttree}
and efficiently synthesise failure
rates~\citep{Volk2016-AdvancingDynamicFault}.
Given a model of the system under consideration, fault trees can be
synthesised from failure
annotations~\citep{Unanue2018-ExplicitModellingTreatment} or from
counterexamples generated by model
checkers~\citep{%
  Leitner-Fischer2013-ProbabilisticFaultTree}.

Bow tie analysis~(BTA) combines FTA and FMEA or ETA.
\citet{Denney2017-ModelDrivenDevelopment} developed a tool for
modelling and managing many and large handcrafted and interrelated
bow tie diagrams and for quantitative assessment.  The authors
explain how bow ties can guide the construction of assurance cases.
Using a hierarchical control model of the process, system-theoretic
process analysis~(STPA) \citep{Leveson2012-EngineeringSaferWorld} aims
at holistic safety assessment.  STPA shares with BTA and HazOp its aim
to bridge the gap between failure and accident analysis.  STPA's
conceptual framework~(called STAMP)
\citep{Leveson2004-newaccidentmodel} promotes the light-weight and
abstract modelling of control hierarchies.  Variants of STPA have been
proposed specifically for the analysis of accidents~(CAST)
\citep{Stringfellow2010-AccidentAnalysisHazard}.
Finally, why-because analysis~(WBA)
\citep[Ch.~20]{Ladkin2001-CausalSystemAnalysis} provides a
conceptual, formal, and graphical framework for cross-technology and
cross-disciplinary root cause identification. Like the majority of the
aforementioned techniques, WBA is primarily deductive.

Depending on the possibilities, failures can be reduced in multiple
ways.  Accordingly, a variety of paradigms is available.  Irreducible
undesired events~(e.g.\xspace equipment failures, maloperation), once
identified and assessed, can be drastically reduced by fault-tolerant
design~(e.g.\xspace\citep{Littlewood2011-ReasoningReliabilityDiverse}) or
active or passive measures~(e.g.\xspace safety functions, physical
separation).  For example,\xspace \citet{Michalos2015-DesignConsiderationsSafe}
provide an overview of safety measures built into and around robots
collaborating with humans in manufacturing automation.

\paragraph{Analysis and Reduction of Systematic Failures.}
\label{sec:relwork:systfail}

As opposed to random failures, systematic failures suggest
further means of reduction or even avoidance, applicable much earlier
in the engineering process.

Following quality management paradigms in
manufacturing, %
engineering can also be viewed as a stochastic
process~\citep{%
Littlewood1991-Softwarereliabilitymodelling}.  Stochastic
variables are formed by the circumstances~(e.g.\xspace location, time,
technique) under which faults are introduced and detected.
Such faults are typically systematic~(also called development
failures~\citep{Avizienis2004-Basicconceptstaxonomy}) because they
have an impact on the specification and design leading to operational
failures that can usually be fully reproduced as opposed to random
failures.

To reduce early root causes, particularly, for systematic failures,
\emph{formal methods} have been proposed.  The main argument for the
use of such mathematically founded techniques is their power in detecting
errors and inconsistencies in requirements, algorithms, and
designs~(e.g.\xspace invariant violation, deadlock, starvation
\citep{Roscoe2010, Schneider1999-ConcurrentRealTime}) early before
the system is built or put in operation.
In the context of combining different paradigms, fault trees~(see
page~\pageref{sec:relwork:randfail}) are often used as a feedback tool
for incremental design improvement.  For example,\xspace
\citet{Hansen1998-SafetyAnalysisSoftware} discuss how formal fault
trees can be used to derive safety requirements to validate the
software design during its step-wise development.
Because the introduction of formal methods is difficult and
proposed methods occurred to be inadequate,
\citet{Bowen1993-Safetycriticalsystems} investigated the
applicability of formal methods in safety-critical software
engineering.  Most importantly, the authors pointed out a typical 
problem of formal methods, the difficulty of safety provisions and
measurements when concerned with a degree of safety or safety as risk
instead of being an invariant not to violate.

\paragraph{Analysis and Mitigation at Run-time.}
\label{sec:backg:monitors}

One striking advantage of using model-based techniques, particularly,
formal methods, is the possibility of using property specifications and
models at run-time.
For example,\xspace in run-time verification or
monitoring~\citep{%
  Leucker2009-briefaccountruntime}, properties are checked during
system operation by recording observation traces~(e.g.\xspace values of
observed variables) and checking these traces for violations~(safety)
or for acceptance~(co-safety).  The checking task is performed by
independent components, sometimes called watchdogs, safety monitors,
or policing
functions~\citep{Bogdiukiewicz2017-FormalDevelopmentPolicing}.
Monitoring can be used to derive probabilistic statements about system
health at any time during operation, for example,\xspace using \textsc{Bayes}ian
networks~\citep{Iamsumang2018-MonitoringLearningAlgorithms}.

\subsection{Formal Preliminaries}
\label{sec:formal-preliminaries}

This section prepares the preliminaries for the formalisation in the
later sections.

\paragraph{Processes.}
\label{sec:kripke-struct-risk}

We use communicating sequential processes~(CSP)~\citep{Hoare1985}
for system modelling.
Let $\Sigma$ be the set of all~(labels for) \emph{events} called the
\emph{alphabet} of a process.  We distinguish two special events:
$\checkmark$ signifies the \emph{successful termination} of a process
and $\tau$ signifies the \emph{invisible event} resulting
from abstraction~(i.e.,\xspace \emph{hiding} or constrained observation).
We require
$\checkmark,\tau \not\in \Sigma$ and define
$\Sigma^{\checkmark} = \Sigma \cup \{\checkmark\}$,
$\Sigma^{\tau} = \Sigma \cup \{\tau\}$, and
$\Sigma^{\checkmark,\tau} = \Sigma^{\checkmark} \cup
\Sigma^{\tau}$.

\begin{definition}[Process]
  \label{def:process}
  A \emph{process} is an expression of the following form:
  \[
    P,Q ::=
    a \operatorname{\rightarrow} P
    \mid \prCh{A} P
    \mid P \sqcap Q
    \mid P \operatorname{\square} Q
    \mid P \genpar[A] Q
    \mid P ; Q
    \mid P \setminus A
    \mid \mathit{SKIP}
    \mid \mathit{STOP}
  \]
  where $a \in \Sigma$ and $A \subseteq \Sigma$.  Let $\mathcal{P}$ be
  the set of all~(labels for) \emph{processes}~(or, equivalently, their
  control states) %
  with $\mathit{SKIP}, \mathit{STOP} \in \mathcal{P}$.
\end{definition}

The syntax consists of the following constructs:
\begin{inparaitem}
\item \emph{event prefix}~($a \operatorname{\rightarrow} P$), 
\item \emph{prefix choice}~($\prCh{A} P$), 
\item \emph{internal or nondeterministic choice}~($P \sqcap Q$), 
\item \emph{general choice}~($P \operatorname{\square} Q$),
\item \emph{generalised parallel composition}~($P \genpar[A] Q$),
\item \emph{sequential composition}~($P ; Q$),
\item \emph{event hiding}~($P \setminus A$),\footnote{To avoid confusion
    with the hiding operator $\setminus$, we will use $\backprime$ for
    set subtraction in CSP expressions.}
\item \emph{termination}~($\mathit{SKIP}$), and
\item \emph{deadlock}~($\mathit{STOP}$).
\end{inparaitem}

We attach meaning to each expression $P\in\mathcal{P}$ according to
\Cref{def:process} by providing a recursive scheme denoting the
behaviour of $P$.  Among several ways of doing this, we focus on the
\emph{traces model} $\mathcal{T}$.  Traces are finite sequences over
$\Sigma^{\checkmark}$ abstracting from details~(e.g.\xspace internal state)
of $P$'s executions or
behaviours~\citep{Schneider1999-ConcurrentRealTime}: ``a trace is
a record of the visible events of an execution.''  In $\mathcal{T}$, we
use a function $\traces{P}$ to obtain the \emph{trace semantics} of $P$.
For example,\xspace we have $\traces{\mathit{STOP}} = \{\trace\}$.  The event $\checkmark$
represents the counterpart of $\mathit{SKIP}$ in $\mathcal{T}$ as denoted by
$\traces{\mathit{SKIP}} = \{\trace,\trace[\checkmark]\}$.
$initials(P) = \{a \mid \trace[a] \in \traces{P}\}$ denotes the set of
all \emph{initial} events of $P$.
If we can establish the relation $traces(P) \subseteq traces(Q)$ for
two processes $P$ and $Q$, we write $P \refinedBy Q$ and say that $Q$
\emph{refines} $P$~(or $P$ is refined by $Q$).

With \Cref{def:process}, we can model observable behaviour of systems
and refer to distinct portions of such behaviour using control state
labels.
For a comprehensive account of CSP and a hierarchy of CSP models, the
inclined reader may consult~\citep{%
  Roscoe2010}.

\paragraph{Transition Systems.}
\label{sec:transition-systems}

For the investigation of the operational semantics of both risk models
and CSP, labelled transition systems~(LTS,
\cite{Baier2008-PrinciplesModelChecking}) will be defined along
with notational conventions.

\begin{definition}[Labelled Transition System, LTS]
  \label{def:risk:lts}
  A LTS is a tuple $\mathfrak{S} = (S,E,\rightarrow,S_0)$ with a set
  $S$ of \emph{states}, a set $E$ of \emph{events}, a relation
  $\rightarrow\;\subseteq S \times E \times S$ representing
  transitions between these states when engaging in these events, and a
  set $S_0$ of \emph{initial states}.
\end{definition}
$runs(s,\mathfrak{S})$ denotes the runs~(i.e.,\xspace state/event traces)
observable from the process modelled by $\mathfrak{S}$ in state $s$.
We write $\tran{s}{e}{s'}$ if $(s,e,s') \in\;\rightarrow$,
$\tran{s}{e}{}$ if $\exists s'\in S \colon (s,e,s') \in\;\rightarrow$,
$\tran{s}{}{}$ if
$\exists e \in E , s'\in S \colon (s,e,s') \in\;\rightarrow$,
$\ntran[]{s}{e}{}$ if
$\not\exists s'\in S \colon (s,e,s')\in\;\rightarrow$, and
$\ntran[]{s}{}{}$ if
$\not\exists e \in E ,s'\in S \colon (s,e,s')\in\;\rightarrow$.
The omission of event labels and initial states leads to the more
general form $(S,\rightarrow)$ with
$\rightarrow\;\subseteq S \times S$, called transition system~(TS).

\paragraph{Temporal Logic.}
\label{sec:temporal-logic}

For the investigation of constraints on risk models, we employ
linear-time temporal
logic~\citep[TL]{Manna1991-TemporalLogicReactive}.  For TL
formulae $\phi$, $\psi$, a run $\rho\in runs(s,\mathfrak{S})$, and the
truth constants $\textsc{t}$ and $\textsc{f}$,
\begin{inparaitem}
\item the operator $\circ\phi$ expresses that $\phi$ holds of the
  \emph{next state} of $\rho$ and
\item $\phi\mathrel{\mathsf{U}}\psi$ expresses that $\phi$ holds of \emph{every}
  state \emph{until} $\psi$ holds of a state, with $\psi$ required to
  eventually hold.
\item For convenience, $\square\phi = \phi \mathrel{\mathsf{U}} \textsc{f}$
  denotes that $\phi$ holds of every state of a \emph{whole} run,
\item $\Diamond\phi = \textsc{t} \mathrel{\mathsf{U}} \phi$ expresses that
  $\phi$ holds \emph{eventually} or for some future state, and
\item $\phi\mathrel{\mathsf{W}}\psi = \phi\mathrel{\mathsf{U}}\psi \lor
  \square \phi$ allows $\psi$ to never actually hold. %
\item For a last state of any run prefix, $\blacklozenge\phi$ denotes that
  $\phi$ has held \emph{before} or in the \emph{past} represented by
  this prefix.
\end{inparaitem}
TL formulas can be interpreted for \emph{events} in the same way.  In
timed extensions of
TL~\citep{Koymans1990-Specifyingrealtime},
we allow time constraints of the form ``${\sim}t$''
to be attached to some of these operators, with
$\sim \;\in \{<,>,\leq,\geq\}$. For example,\xspace $\Diamond_{<\,t}\phi$
expresses that $\phi$ holds for \emph{some} future state \emph{before}
$t$ time units will have elapsed.
Events in runs can be treated in a similar way and the satisfaction
relation %
can be extended to sets of runs of $\mathfrak{S}$.  A comprehensive
treatment is provided in~\citep{Baier2008-PrinciplesModelChecking}.

\section{Risk Elements}
\label{sec:risk:model}

This section introduces a model to \emph{capture beliefs} about risk
and risk causality based on a process $P$~(\Cref{def:process}).  It is
not unusual to consider one part $S$ of $P$ as the ``system under
consideration'' interacting with another part $E$ of $P$ called
``environment'' or ``context''.  We may consider two settings:
$P = S \genpar[A] E$ with a set of shared events $A$, or $P = E(S)$
where $E$ is a term using $S$.  

Based on the discussion in \Cref{sec:risk}, we view \emph{risk} as the
\emph{possibility%
  \footnote{For the sake of simplicity of the presented framework, we omit
    probabilistic aspects for the time being.}  of undesired states}
\begin{inparaenum}[(i)]
\item \emph{reachable} by $P$,
\item \emph{finitely causal}\footnote{$\traces{P}$ contains finite
    traces reaching such states such that causes can be represented by
  well-founded sets.}, and
\item \emph{not entirely avoidable} in $P$.
\end{inparaenum}
Undesired states both result from \emph{undesired events} and may
cause or at least increase the risk of further undesired events,
hence, forming causal chains of events.  As highlighted in
\Cref{sec:introduction}, in our risk models, we concentrate on
\emph{undesired fractions of uncertain outcomes of process actions},
that is,\xspace observable events changing the level of risk from process state to
process state.  Let us consider further examples before we formalise
these concepts.

\begin{example}[Road Traffic and Brakes]
  \label{exa:risk:roadtraffic}
  \emph{Collisions} are examples of undesired events, accidents
  resulting in undesired states entailing human injury and damaged
  property.  Even \emph{near-collisions} are undesired events, by
  definition posing the risk of actual collisions.  By backward
  reasoning, for example,\xspace an observable loss of a car's braking function
  constitutes an undesired event leading to an undesired state of the
  car because of operating without functioning brakes.  Clearly, such
  a vehicle is in a riskier state than with functioning brakes.
\end{example}

\Cref{exa:risk:roadtraffic} highlights why the state bi-partition in
\Cref{fig:abslevels} from the viewpoint of a single risk factor is too
coarse.  Failures of a brake or an anti-lock braking system are not
repairable at run-time, in other words, \emph{direct mitigation} is
not possible.  Hence, \Cref{fig:riskfactor} introduces a third
partition as described below.  Moreover, a failure of the brakes
\emph{restricts} %
the direct mitigation of near-collisions~(\Cref{exa:evaloperincid}) by
braking.  This example also shows how two risk factors are related.

\begin{example}[Autonomous Vehicles and Brakes]
  \label{exa:risk:avandbrakes}
  There is a difference between \emph{human-operated cars}, where
  human operators might be \emph{aware} of missing brakes, and
  \emph{autonomous vehicles}~(AVs) where \emph{being aware} is left to
  the automation.  In the former type of cars, alert %
  human operators will try to react.  Independent of human
  capabilities, what matters is that they get aware and are given the
  chance to take responsibility of emergency control.  However, AVs
  are left alone in such a situation, unlikely will their vendors
  manage to legitimately push away such responsibility.  We might
  expect from AVs to implement at least as much responsibility as
  society would have expected from qualified human drivers in the
  corresponding driving situations.
\end{example}

\Cref{exa:risk:avandbrakes} motivates what safety engineers can do to
equip autonomous machines with the ability to run highly specific
\emph{mitigation mechanisms} in risky states, to develop beliefs about
past operations and to predict risk in future operations of a machine
and certify that such mechanisms actually improve safety.

\subsection{Risk Factors}
\label{sec:risk:factors}

\emph{Risk factors} form the basic
elements %
of the approach to risk modelling as discussed in the following.  We
define the notion of a risk factor using an LTS, describe its
properties and meaning, provide a translation into CSP, and discuss an
algebra of risk factors.

\begin{definition}[Risk Factor]
  \label{def:riskfactor}
  Let $(\mathit{Ph}, \Sigma^{\rfct}, \rightarrow)$ be a LTS according to
  \Cref{def:risk:lts}.  Extending this LTS, a \emph{risk factor} is a
  tuple
  $\rfct = (\mathit{Ph}, \Sigma^{\rfct}, \rightarrow, \preceq_{\mathsf{\rfct}}, s)$ with
  \begin{itemize}
  \item a set $\mathit{Ph}$ of \emph{phases} of $\rfct$,
  \item a set
    $\Sigma^{\rfct}\subset 2^{\Sigma^{\tau}}
    \setminus \{\varnothing\}$ specifying \emph{significant events} for
    $\rfct$,
  \item a labelled transition relation
    $\rightarrow\; \subseteq \mathit{Ph} \times \Sigma^{\rfct} \times
    \mathit{Ph}$, 
  \item a partial order
    $\preceq_{\mathsf{\rfct}} \;\subseteq \mathit{Ph}^2$, and
  \item a pair $s \in \mathbb{R}^2_+$ with $s^{(1)} \leq s^{(2)}$
    for the \emph{severity of the least and worst expected
      consequences} of $\rfct$.\footnote{By usual convention, for an
      ordered $n$-tuple $t$ and $i \in [1..n]$, we write $t^{(i)}$ to
      refer to the value of the $i$-th element of $t$.  Furthermore,
      if $t$ has a uniquely named element $e$, we write $t.e$ to refer
      to the value of $e$ in $t$.}
  \end{itemize}
  We call $c$ the \emph{severity (interval)} of $\rfct$.  Let $\mathbb{F}$ be the
  set of all risk factors in the remainder of this work.
\end{definition}

We only consider risk factors with finite
$\mathit{Ph}$ and $\Sigma^{\rfct}$.  Furthermore, we
regard risk factors $\rfct$ with $\rightarrow$ according to
\Cref{fig:riskfactor} with
$\mathit{Ph} = \{\inact,\activ,\mitig\}$ for the phases
\emph{inactive}~($\inact$, typically, the initial and desired phase
of $\rfct$), %
\emph{active}~($\activ$), and \emph{mitigated}~($\mitig$),
where $\preceq_{\mathsf{\rfct}}$ is at least\footnote{Some applications might
  give rise to a \emph{linear} $\preceq_{\mathsf{\rfct}}$ by adding at most one
  out of
  $\{ (\inact,\mitig), (\mitig,\inact) \}$.}
the reflexive transitive closure of
$\{ (\activ,\inact), (\activ,\mitig) \}$, and with
$\Sigma^{\rfct} = \{
e^{\rfct},
\overline{e}^{\rfct},
m^{\rfct},
m_d^{\rfct},
m_r^{\rfct},
o_n^{\rfct}, %
o_m^{\rfct}, %
o_e^{\rfct} \}$. %
The labels in \Cref{fig:riskfactor} indicate the meanings of these
events.  \Cref{tab:risk:eventsets} describes how these events can be
used for modelling risk factors.

\begin{figure}
  \subfloat[Symbolic transition system for $\rfct$]{
\begin{tikzpicture}
  [->,>=stealth,thick,
  node distance=10em,
  scale=.7,every node/.style={transform shape},
  shorten >=1pt,auto,%
  state/.style={circle,fill=LightBlue!50}]

  \node[state] (0) {$\inact$}; %
  \node[state] (c) [above right of=0] {$\mitig$}; %
  \node[state] (1) [below right of=c] {$\activ$};
  \node (init) at ($(0)+(0,-1)$) {}; 

  \path
  (init) edge[gray] (0)
  (0) edge[red]  node[align=center] {$e^{\rfct}$ \dots endangerment} (1) 
  (0) edge[loop left,gray] node[align=right]
  {$o_n^{\rfct}$\\ nominal\\operation\dots} (0)
  (1) edge[bend right,swap,DarkGreen] node[align=center]
  {$m^{\rfct}$ \dots mitigation} (c)
  (c) edge[bend right,DarkGreen,dashed] node[swap,align=center]
  {recovery\dots $\;m_r^{\rfct}$} (0) %
  (1) edge[bend left,DarkGreen,near start] node[align=center] {$m_d^{\rfct}$
    \dots direct mitigation} (0) %
  (c) edge[gray] node[near start,swap,align=right]
  {endanger-\\ment \dots $\;\overline{e}^{\rfct}$} (1)
  (1) edge[loop right,gray] node[align=left]
  {$o_e^{\rfct}$\\ \dots endangered\\operation} (1)
  (c) edge[loop above,gray] node[align=center]
  {$o_m^{\rfct}$ \dots mitigated operation} (c);
\end{tikzpicture}

     \label{fig:riskfactor:graph}
  }
  \subfloat[Types of events for abstracting from $P$ to $\rfct$]{
    \small
    \begin{tabular}[b]{lp{5cm}r}
      \toprule
      \textbf{Symbol}
      & \textbf{\dots represents events} \dots
      & \textbf{Ex.}
      \\\midrule
      $e^{\rfct}$ 
      & leading to $\activ$, $P$ in endangered oper.
      & \ref{exa:strredrf} 
      \\
      $\overline{e}^{\rfct}$
      & leading to $\activ$, $P$ in endangered oper.
      & \ref{exa:strredrf} 
      \\
      $m^{\rfct}$
      & leading to $\mitig$, $P$'s mitigated
        operation
      & \ref{exa:strredrf}
      \\
      $m_d^{\rfct}$
      & leading to $\inact$, $P$'s nominal operation
      & \ref{exa:dirredrf}
      \\
      $m_r^{\rfct}$
      & recovering to nominal operation of $P$
      & \ref{exa:indredrf}
      \\
      $o_n^{\rfct}$
      & not leaving $\inact$
      \\
      $o_e^{\rfct}$
      & not leaving $\activ$
      & \ref{exa:strredrf}
      \\
      $o_m^{\rfct}$
      & not leaving $\mitig$
      \\\bottomrule
    \end{tabular}
    \label{tab:risk:eventsets}
  }
  \caption{Risk factor $\rfct$
    \label{fig:riskfactor}}
\end{figure}

The three state preserving events can complement the endangerment and
mitigation events.
By definition, if $e = \varnothing$ then
$(p,e,p')\not\in\;\rightarrow$ for any $p,p'\in \mathit{Ph}$.  We can
now distinguish further kinds of risk factors:
\begin{itemize}
\item If $m^{\rfct} \cup m_d^{\rfct} = \varnothing$ then we call
  $\rfct$ \emph{final}, otherwise \emph{reducible}.
\item For any reducible $\rfct$ with $m^{\rfct} \neq \varnothing$, if
  $\overline{e}^{\rfct} \subset e^{\rfct}$ then we call $\rfct$
  \emph{strongly reducible}. %
\item If $m_d^{\rfct} = \varnothing$ then we call $\rfct$
  \emph{indirectly reducible}.
\end{itemize}
The type of risk factor described in \Cref{fig:riskfactor:graph} is
\emph{minimal} inasmuch as it comprises the minimal set of elements of
\emph{generic risk atoms}.  However, phases other than the ones
shown can be distinguished in specific applications.

\paragraph{Abstractions underlying Events.}

Events in the CSP interpretation are atomic
observations~\cite[Ch.~1.5]{Roscoe2010}.  The initiation and
termination of events representing complex enduring real-world
phenomena are to be viewed as non-separable aspects of these events.
Consequently, much care is necessary when making assumptions about the
atomicity of such events interfering with other events.  Hence, it is
reasonable to use the types of events listed for risk factors to model
the \emph{initiation, termination, or other significant events} of
  the corresponding sub-processes~(i.e.,\xspace endangerment and mitigation
  processes) in the process $P$.

\begin{example}[Final Risk Factors]
  \label{exa:risk:finalriskfactor}
  Road accidents as well as nuclear power-plant accidents form courses
  of events.  Severe human injury or loss and machine, environmental,
  and property damage typically happen during such accidents.  If
  required, we can model such injury or damage as \emph{final} risk
  factors and, thus, can stop to discuss possible mitigations.  This
  way, final risk factors define the \emph{scope} of a risk model.

  Viewing \emph{damage or injury} as risk factors allows their
  treatment within the same framework.  We might later introduce
  mitigations and convert a final into a reducible risk factor.
  Airbags as a mitigation of certain types of human injury for certain
  types of collisions represent a historical example.
\end{example}

\begin{example}[Strongly Reducible Risk Factors]
  \label{exa:strredrf}
  We instantiate the LTS pattern given in \Cref{fig:riskfactor}:
  Consider the \textbf{event} $\acend{db}$~(instance of $\acend{f}$)
  that \emph{a car's braking function degrades} resulting in an
  \textbf{operational state} $\activ[db]$ of the car where any further
  use of its brakes~($\acopr[e]{db}$) will likely differ from the
  expectation of a human operator when trying to reduce speed in a
  typical driving situation.  One conservative
  \textbf{mitigation}~($\acmit{f}$) would be to drive by and halt as
  safely as possible and, thus, reach a \textbf{state} $\mitig[db]$
  from which only a strict subset~(i.e.,\xspace
  $\acendr{db} \subset \acend{db}$) of the original endangerment event
  can be observed.  Hence, we call $\rfct[db]$ a strongly reducible
  risk factor.
\end{example}

\begin{example}[Indirectly Reducible Risk Factors]
  \label{exa:indredrf}
  If our application requires an intermediate stable state
  $\mitig$ for the mitigation of a risk factor before
  returning to the inactive phase $\inact$, we speak of indirectly
  reducible risk factors.

  A \emph{leaking or damaged battery} of an AV would be such a case as
  well as an aircraft \emph{running out of fuel}.  If we do not want to
  consider ways to fuel such machines during operation but by reaching
  what is typically called a ``safe state,'' we might model such
  situations by indirectly reducible risk factors.  The
  \emph{mitigated} phase would represent the ``safe state'' with respect to\xspace this
  risk factor.  This way, our model captures how to reach this phase.
  In our example, this can happen by the atomic events
  \emph{successfully halting at the next car repair shop} and
  \emph{successful accomplishment of an emergency landing},
  respectively.  From this phase, we can recover~($\acmit[r]{f}$) to
  the inactive phase $\inact$.
\end{example}

\begin{example}[Directly Reducible Risk Factors]
  \label{exa:dirredrf}
  Driving \emph{too close to a front vehicle} is a risk factor that in
  many situations can be dealt with by braking correspondingly and,
  thus, resulting in a state where this risk factor is inactive again.
  The described braking event can be accounted as an event in
  $\acmit[d]{f}$.  We call this a \emph{direct mitigation}. %
\end{example}

\paragraph{Risk Factors in CSP.}
\label{sec:risk:riskfactors-in-csp}

Given $\ntran[]{p}{\tau}{}$ for any $p\in\mathit{Ph}$, the
risk factor $\rfct$ from \Cref{fig:riskfactor} can be represented as a
sequential, mutually recursive CSP process
$\mathfrak{R}_{\rfct}$~(\Cref{def:process}) as follows:
\begin{align*}
  \mathfrak{R}_{\rfct}
  &= \inact
    \tag{init}\\
  \inact
  &= \prCh{\acend{f}} \activ
    \operatorname{\square} \prCh{\acopr[n]{f}} \inact
    \tag{inactive}\\
  \activ
  &= \prCh{\acmit[d]{f}} \inact
    \operatorname{\square} \prCh{\acopr[e]{f}} \activ
    \operatorname{\square} \prCh{\acmit{f}} \mitig
    \tag{active}\\
  \mitig
  &= \prCh{\acendr{f}} \activ
    \operatorname{\square} \prCh{\acmit[r]{f}} \inact
    \operatorname{\square} \prCh{\acopr[m]{f}} \mitig
    \tag{mitigated}
\end{align*}

If
$\acopr[n]{f} \subseteq \Sigma\setminus \acend{f}, \acopr[m]{f}
\subseteq \Sigma\setminus (\acmit[r]{f} \cup \acendr{f})$,
and
$\acopr[e]{f} \subseteq \Sigma\setminus (\acmit{f} \cup \acmit[d]{f})$
then the general choice gets an external choice and $\mathfrak{R}_{\rfct}$
is \emph{deterministic}, otherwise $\mathfrak{R}_{\rfct}$ is nondeterministic.
This mapping enables an algebraic treatment of risk factors in CSP.
Later, in \Cref{sec:risk-structures}, we will use a map
$[\hspace{-.15em}[\cdot]\hspace{-.15em}]_{r}$ that establishes the equivalence
$[\hspace{-.15em}[\inact]\hspace{-.15em}]_{r} = (\mathit{Ph}, \Sigma^{\rfct},
\rightarrow,\{\inact\})$ for a factor $\rfct$ according to
\Cref{def:riskfactor} and initialised with the phase $\inact$.
Note the use of $\rfct$ as a symbol for the risk factor as a
transition system~(\Cref{fig:riskfactor:graph}) and the use of
$\activ$ for the CSP process that models this risk factor in its
active phase.  Different fonts signify the semantic difference.

\begin{remark}
  Risk factors can be used to model \emph{risky fractions of or
    propositions about} a process %
  and its behaviour.  For example,\xspace final risk factors can be used to model, for example,\xspace
  \emph{permanent and off-line repairable faults}, and reducible risk
  factors serve the modelling of, for example,\xspace \emph{transient and on-line
    repairable faults}.  This model only allows to talk of risks
  identified as risk factors and cannot be used to reason about
  ``absolute safety.''  This epistemic limit is inherent to~(risk)
  modelling and can only be dealt with from the outside of the
  framework.
\end{remark}

\begin{remark}
  The notions of \emph{systematic and random
    faults}~\citep{%
    Birolini2017-ReliabilityEngineering}
  can be represented as follows:
  A \emph{systematic fault} can be seen as an observable undesired
  event whose preconditions or causes~(i.e.,\xspace $\inact,\acend{f}$) can
  be predicted, reconstructed, reproduced, or otherwise deduced,
  identified, and sufficiently determined.  Hence, each~(class of)
  systematic fault(s) can be associated with a deterministic risk
  factor.

  A \emph{random fault} can be seen as an observable undesired event
  whose preconditions or causes are only partially known or even
  unknown.  One way to represent this lack of knowledge by a risk
  factor is to use nondeterminism:\footnote{Another way, not pursued
    in this article, would be to provide a probabilistic extension of
    risk factors.}  From $\inact$, we allow
  $rnd = \acopr[n]{f} \cap \acend{f} \neq \varnothing$.  $rnd$
  represents potential but incomplete causes of $\rfct$.  Note that
  this choice makes sense inasmuch as
  $\acopr[n]{f} \supseteq \acend{f}$ denotes that we know $\rfct$ but
  the least possible, namely nothing, about its causes.  Having
  observers\footnote{The usage of a risk factor as a monitor automaton
    is further discussed in \Cref{sec:notes-monitoring}.}  for events
  and phases, the risk factor would form a model of a process $P$
  where an observation of an event of $P$ in $rnd$ is sometimes
  followed by an observation of the phase $\inact$ and sometimes by
  an observation of the phase $\activ$.
  $\rfct$ in \Cref{fig:riskfactor:ndet} is a generalisation of $\rfct$
  in \Cref{fig:riskfactor:graph}~(i.e.,\xspace
  $\rfct_{(\ref{fig:riskfactor:ndet})}\refinedBy\rfct_{(\ref{fig:riskfactor:graph})}$).
  Each observable event of $P$ that belongs to $rnd_0$~($rnd$ for
  $\rfct$, $\overline{rnd}$ for $\mitig$) is leading to an
  anonymous phase~($\bullet$) succeeded by a $\tau$ event
  representing
  internal choice. %
  For mitigation events $\acmit[d]{f}$ and $\acmit{f}$ to be observed
  from the active phase $\activ$, uncertainty is modelled by the set
  $rnd$, again to be followed by either of the three phases of the
  risk factor depending on what can be observed in $P$.
\end{remark}

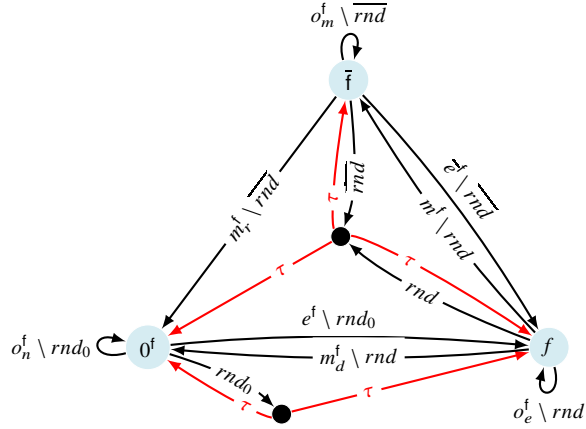
\begin{figure}
  \centering

\begin{tikzpicture}[>=latex,line join=bevel,scale=0.7, every node/.style={scale=.8}]
\begin{scope}
  \pgfsetstrokecolor{black}
  \definecolor{strokecol}{rgb}{1.0,1.0,1.0};
  \pgfsetstrokecolor{strokecol}
  \definecolor{fillcol}{rgb}{1.0,1.0,1.0};
  \pgfsetfillcolor{fillcol}
\end{scope}
\begin{scope}
  \pgfsetstrokecolor{black}
  \definecolor{strokecol}{rgb}{1.0,1.0,1.0};
  \pgfsetstrokecolor{strokecol}
  \definecolor{fillcol}{rgb}{1.0,1.0,1.0};
  \pgfsetfillcolor{fillcol}
\end{scope}
  \node (inactive) at (148.56bp,45.174bp) [draw=white,ellipse,circle,fill=LightBlue!50] {$\inact$};
  \node (active) at (364.56bp,45.174bp) [draw=white,ellipse,circle,fill=LightBlue!50] {$\activ$};
  \node (mitigated) at (256.56bp,189.17bp) [draw=white,ellipse,circle,fill=LightBlue!50] {$\mitig$};
  \node (a1) at (220.56bp,9.1736bp) [draw=white,circle, fill,fill=black] {};
  \node (a2) at (252.39bp,104.84bp) [draw=white,circle, fill,fill=black] {};
  \draw [->,arrows=-latex,thick] (inactive) ..controls (217.0bp,52.357bp) and (281.93bp,52.534bp)  .. node[midway,above,fill=white!100] {$e^{\rfct} \setminus rnd_0$} (active);
  \draw [->,arrows=-latex,thick] (active) ..controls (323.25bp,88.541bp) and (296.14bp,124.33bp)  .. node[sloped,fill=white!100] {$m^{\rfct} \setminus rnd$} (mitigated);
  \draw [red,->,arrows=-latex,thick] (a1) ..controls (233.61bp,12.436bp) and (286.98bp,25.78bp)  .. node[midway,sloped,fill=white!100] {$\tau$} (active);
  \draw [->,arrows=-latex,thick] (active) ..controls (310.27bp,65.57bp) and (276.12bp,84.401bp)  .. node[sloped,fill=white!100] {$rnd$} (a2);
  \draw [->,arrows=-latex,thick] (mitigated) ..controls (261.34bp,142.41bp) and (259.93bp,126.17bp)  .. node[sloped,fill=white!100] {$\overline{rnd}$} (a2);
  \draw [->,arrows=-latex,thick] (inactive) ..controls (191.08bp,31.857bp) and (204.94bp,24.253bp)  .. node[sloped,fill=white!100] {$rnd_0$} (a1);
  \draw [->,arrows=-latex,thick] (active) ..controls (296.12bp,37.991bp) and (231.18bp,37.813bp)  .. node[sloped,fill=white!100] {$m_d^{\rfct} \setminus rnd$} (inactive);
  \draw [->,arrows=-latex,thick] (mitigated) ..controls (219.01bp,139.11bp) and (192.26bp,103.45bp)  .. node[sloped,fill=white!100] {$m_r^{\rfct} \setminus \overline{rnd}$} (inactive);
  \draw [->,arrows=-latex,thick] (mitigated) ..controls (299.59bp,143.59bp) and (326.76bp,107.6bp)  .. node[above,sloped] {$\overline{e}^{\rfct} \setminus \overline{rnd}$} (active);
  \draw [->,arrows=-latex,thick] (active) to[loop below] node[below] {$o_e^{\rfct} \setminus rnd$} (active);
  \draw [->,arrows=-latex,thick] (mitigated) to[loop above] node[above] {$o_m^{\rfct} \setminus \overline{rnd}$} (mitigated);
  \draw [red,->,arrows=-latex,thick] (a2) ..controls (247.62bp,110.37bp) and (246.81bp,129.0bp)  .. node[midway,sloped,fill=white!100] {$\tau$} (mitigated);
  \draw [red,->,arrows=-latex,thick] (a2) ..controls (242.68bp,99.262bp) and (210.33bp,80.672bp)  .. node[midway,sloped,fill=white!100] {$\tau$} (inactive);
  \draw [red,->,arrows=-latex,thick] (a1) ..controls (213.8bp,6.9263bp) and (197.27bp,13.269bp)  .. node[midway,sloped,fill=white!100] {$\tau$} (inactive);
  \draw [red,->,arrows=-latex,thick] (a2) ..controls (263.86bp,105.54bp) and (303.06bp,86.486bp)  .. node[midway,sloped,fill=white!100] {$\tau$} (active);
  \draw [->,arrows=-latex,thick] (inactive) to[loop left] node[left] {$o_n^{\rfct} \setminus rnd_0$} (inactive);
\end{tikzpicture}

   \caption{Nondeterministic risk factor $\rfct$
    \label{fig:riskfactor:ndet}}
\end{figure}

\subsection{Risk Spaces}
\label{sec:risk:space}

Risk factors give rise to further concepts: \emph{risk states, risk
  spaces, mitigation orders}, and \emph{risk structures}.
Let $n\in\mathbb{N}$ and $F = \{\rfct_i\mid i\in[1..n] \} \subset \mathbb{F}$ be a
finite set of risk factors~(\Cref{def:riskfactor}) where
$\rfct_i = (\mathit{Ph}_i, \Sigma_i^{\rfct_i}, \rightarrow_i,
\preceq_{\rfct_i}, c_i)$.

\begin{definition}[Risk State]
  \label{def:riskstate}
  Assume that risk factors are unique, that is,\xspace
  $i \neq j \Rightarrow \rfct_i \neq \rfct_j \land \mathit{Ph}_i \cap
  \mathit{Ph}_j = \varnothing$.
  Then, a \emph{risk state} is a faithful total injection
  $\sigma \colon F \rightarrow \bigcup_{\rfct \in F} \mathit{Ph}_{\rfct}$, that is,\xspace
  $\forall \rfct \in F \colon \sigma(\rfct)\in\mathit{Ph}_{\rfct}$.
\end{definition}

Observe that from \Cref{fig:riskfactor} follows
$\forall \rfct, \rfct[g] \in F \colon \Sigma^{\rfct} =
\Sigma^{\rfct[g]}$, that is,\xspace all risk factors correspond to processes with
the same alphabet, call it $\Sigma^F$.  This has no influence on risk
space composition as described below.  However, from this follows that
the corresponding CSP processes are composed in parallel synchronously
by $\parallel$ such that the process underlying each risk factor is
always ready to agree with some event of the environment.\footnote{In
  testing, this is sometimes called
  input-enabled %
  \citep{Tretmans2008-ModelBasedTesting}.} %
This construction guarantees deadlock freedom.

A risk state abstracts from states of a process
$P$~(\Cref{sec:kripke-struct-risk}) by focusing on risk-related
information in form of state propositions associated with the phases
of the risk factors.

\begin{definition}[Risk Space]
  \label{def:riskstatespace}
  For a set of risk factors $F$, a \emph{risk space} $R(F)$ is the
  function space
  given by
  \begin{equation*}
    \label{eq:riskstatespace}
    R(F) = \{ \sigma \in F \rightarrow \bigcup_{\rfct \in F} \mathit{Ph}_{\rfct}
    \mid 
    \sigma\; \text{is a total injection} \land
    \forall \rfct \in F \colon \sigma(\rfct)\in\mathit{Ph}_{\rfct} \}
  \end{equation*}
  We omit the parameter $F$ from $R$ if it is clear from the
  context and denote the set of all risk spaces by $\mathcal{R}$.
\end{definition}
Let
$phase\colon R \times \mathbb{F} \rightarrow \bigcup_{\rfct\colon\mathbb{F}}
\mathit{Ph}_{\rfct}$ be a map yielding the \emph{current phase} of a risk
factor $\rfct$ in a risk state $\sigma$.
The infix operator scheme
$\cdot\mid_{F'}\colon R(F) \times \mathbb{F} \rightarrow
R(F')$ describes a projection from the risk space
$R(F)$ to the risk space $R(F')$ where
$F'\subseteq F$.

\label{cnv:stateindex}%
We allow the convention
$\sigma^{(i)} = \sigma(\rfct_i) = phase(\sigma,\rfct_i)$ when
referring to the phase of the risk factor $\rfct_i$.
We can view $R$ as a set of ordered $n$-tuples~(formed by the
Cartesian product of $\mathit{Ph}_i$ after fixing some linear order over
$F$) or as a set of sets~(formed by all equivalence classes of phase
permutations over $F$).  These views permit the treatment of
$\sigma \in R$ as an unordered tuple or index set.
Particularly, for $i \neq j$, the tuples
$(\sigma^{(i)},\sigma^{(j)}) \in \mathit{Ph}_i\times\mathit{Ph}_j$ and
$(\sigma^{(j)},\sigma^{(i)}) \in \mathit{Ph}_j\times\mathit{Ph}_i$ identify
exactly the same risk state.  Consequently,
$(\sigma^{(i)},\sigma^{(i)}) =
(phase(\sigma,\rfct_i),phase(\sigma,\rfct_i))$ collapses to
$\sigma^{(i)}$.  In the following, one of the two views of $R$ will
occasionally be more convenient for the discussion.

\begin{remark}
  By \Cref{def:riskstatespace}, $R$ is non-empty and finite if and only if\xspace $F$
  is non-empty and finite.
  $R$ defines the set of all states an
  arbitrary\footnote{Below, we also handle $R$ as ``the most
    general~(risk) structure'' for a specific set of risk factors.}
  combination of phases of risk factors might give rise to.  Given a
  complete set of risk factors identified by the risk analyst for a
  specific application, only a small subset of $R$ might eventually be
  relevant for the machine in operation.  In general, we assume that
  identifying this subset is difficult.  Moreover, the
  \emph{relevance} of a risk state can be seen as a gradual quantity
  determined at run-time based on its context in $R$ and the process
  state. %
\end{remark}

\begin{definition}[Equality and Compatibility of Risk States]
  \label{def:risk:eqcomp}
  Two risk states $\sigma,\sigma'\in R$ are \emph{equal}, written
  $\sigma = \sigma'$, if and only if\xspace their corresponding phases are equal,
  formally, $\forall i\in[1..n]\colon \sigma^{(i)} = \sigma'^{(i)}$.
  Generally, given $F_1, F_2 \subset \mathbb{F}$, $\sigma\in R(F_1)$ and
  $\sigma'\in R(F_2)$ are \emph{compatible}, written
  $\sigma \approx \sigma'$, if and only if\xspace
  $\forall \rfct \in F_1 \cap F_2 \colon \sigma(\rfct) = \sigma'(\rfct)$,
  particularly, if $F_1 \cap F_2 = \varnothing$.  
\end{definition}
Consequently, state
equality %
implies state compatibility.

\begin{definition}[Risk Space Composition]
  Then, the \emph{composition}
  $\otimes\colon\mathcal{R}\times\mathcal{R} \rightarrow \mathcal{R}$ of
  the two risk spaces $R(F_1)$ and $R(F_2)$ is defined by
  \begin{equation}
    \label{eq:riskstatespace:comp}
    R(F_1) \otimes R(F_2) = \{
    \sigma_1 \cup \sigma_2
    \mid
    \sigma_1 \in R(F_1) \land \sigma_2 \in R(F_2)
    \land 
    \sigma_1 \approx \sigma_2
    \}.
  \end{equation}
\end{definition}
An analogous constraint is used for parallel composition of risk
structures below in \Cref{eq:stateconsistency}.  Now, we can derive a
basic law relating the union of risk factors and the composition of
risk spaces.  Furthermore, it will turn out that $R$ is a
homomorphism.

\begin{lemma}[Exchange of $\cup$ and $\otimes$]
  \label{lemma:riskstatespace:exchange}
  \[
    R(F_1 \cup F_2) = R(F_1) \otimes R(F_2)
  \]
\end{lemma}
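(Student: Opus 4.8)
The plan is to establish the set equality by the standard technique of mutual inclusion, exploiting the fact that, by Definition~\ref{def:riskstate} together with the uniqueness assumption, a risk state over a factor set $F$ is nothing but a choice of one phase $\sigma(\rfct)\in\mathit{Ph}_{\rfct}$ per factor $\rfct\in F$. Because the phase sets of distinct factors are pairwise disjoint, any such phase-selecting total function is automatically injective, so injectivity never has to be argued separately and membership in $R(F)$ collapses to the single condition $\forall\rfct\in F\colon\sigma(\rfct)\in\mathit{Ph}_{\rfct}$. With this characterisation in hand, both inclusions reduce to purely set-theoretic manipulations of functions viewed as their graphs.

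For the inclusion $R(F_1\cup F_2)\subseteq R(F_1)\otimes R(F_2)$, I would take an arbitrary $\sigma\in R(F_1\cup F_2)$ and split it using the projection operator introduced above, setting $\sigma_1=\sigma\mid_{F_1}$ and $\sigma_2=\sigma\mid_{F_2}$. Each restriction again selects a valid phase per factor, hence $\sigma_1\in R(F_1)$ and $\sigma_2\in R(F_2)$. They are compatible, $\sigma_1\approx\sigma_2$, since on the overlap $F_1\cap F_2$ both agree with $\sigma$, and their union recovers $\sigma$ because $\operatorname{dom}(\sigma_1)\cup\operatorname{dom}(\sigma_2)=F_1\cup F_2$ and both functions coincide with $\sigma$ throughout their domains. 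Thus $\sigma=\sigma_1\cup\sigma_2$ witnesses membership in the composition.

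For the reverse inclusion I would pick any element $\sigma_1\cup\sigma_2$ of $R(F_1)\otimes R(F_2)$ with $\sigma_1\in R(F_1)$, $\sigma_2\in R(F_2)$ and $\sigma_1\approx\sigma_2$, and verify it lies in $R(F_1\cup F_2)$. The key step, and the only place where any care is needed, is to confirm that the set-theoretic union of the two graphs is again single-valued: this is exactly what the compatibility $\sigma_1\approx\sigma_2$ guarantees, since it forces $\sigma_1(\rfct)=\sigma_2(\rfct)$ on every shared factor $\rfct\in F_1\cap F_2$, so no factor is assigned two distinct phases. Its domain is $F_1\cup F_2$, each value $(\sigma_1\cup\sigma_2)(\rfct)$ equals $\sigma_1(\rfct)$ or $\sigma_2(\rfct)$ and therefore lies in $\mathit{Ph}_{\rfct}$, and injectivity is inherited for free from the disjointness of the phase sets. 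Hence $\sigma_1\cup\sigma_2\in R(F_1\cup F_2)$, completing the equality.

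The main obstacle is therefore conceptual rather than computational: making precise that ``$\sigma_1\cup\sigma_2$'' denotes the union of the two functions' graphs and that compatibility is precisely the well-definedness condition for that union to be a function. Once the observation that injectivity follows automatically from pairwise-disjoint phase sets is recorded, everything else is routine bookkeeping, and the homomorphism-like behaviour of $R$ announced just before the lemma reads off directly from the established identity.
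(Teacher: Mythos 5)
Your proposal is correct and takes essentially the same route as the paper's own proof: one inclusion by restricting $\sigma$ to $F_1$ and $F_2$ and checking faithfulness and compatibility of the restrictions, the other by showing that the compatible union of two risk states' graphs is again a faithful total function on $F_1 \cup F_2$. The only differences are cosmetic: the paper additionally argues uniqueness of the decomposition in both directions (which is not needed for the set equality itself), whereas you instead record the useful observation that injectivity is automatic from the pairwise disjointness of the phase sets assumed in Definition~\ref{def:riskstate}.
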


\begin{proof}[Proof Sketch.]
  The proof is by mutual existence and uniqueness:
  For each $\sigma \in R(F_1 \cup F_2)$ (i) there exists a $\sigma_1
  \cup \sigma_2 \in R(F_1) \otimes R(F_2)$ and (ii) this pair is
  unique, and (iii, iv) vice versa.
  Details on the proof can be taken from
  \Cref{proof:riskstatespace:exchange}.
\end{proof}

\begin{lemma}[Homomorphism]
  \label{lem:homo1}
  $R$ is a homomorphism in the context of $(\mathbb{F},\cup)$ and $(\mathcal{R},\otimes)$.
  \begin{center}
    \begin{tikzcd}
      F_1,F_2 \ar{r}{\cup} \ar{d}[blue]{R}
      & F_1 \cup F_2 \ar{d}[blue]{R}\\
      R(F_1),R(F_2) \ar{r}{\otimes} & R(F_1 \cup F_2)
    \end{tikzcd}
  \end{center}
\end{lemma}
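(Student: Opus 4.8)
The plan is to read the homomorphism property straight off the definition of a homomorphism and then reduce everything to \Cref{lemma:riskstatespace:exchange}. A map $R$ between two structures of the same signature, here $(\mathbb{F},\cup)$ (understood as the finite subsets of $\mathbb{F}$ under union) and $(\mathcal{R},\otimes)$, is a homomorphism exactly when it commutes with the binary operation, that is, when $R(F_1 \cup F_2) = R(F_1) \otimes R(F_2)$ for all finite $F_1, F_2 \subseteq \mathbb{F}$. But this equation is precisely the commuting square depicted in the statement: the path ``first $\cup$, then $R$'' and the path ``first $R$, then $\otimes$'' coincide. Hence my first step is simply to observe that this equation is verbatim the content of \Cref{lemma:riskstatespace:exchange}, which I may assume. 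Operation preservation is therefore immediate, and the diagram commutes.

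Before declaring $R$ a homomorphism, I would make the two supporting checks that give the word its meaning. First, $R$ is a genuine function: by \Cref{def:riskstatespace} each finite $F$ determines $R(F)$ uniquely, so there is no ambiguity in the vertical arrows. Second, I would confirm that both sides carry the same kind of structure. On the left, the finite subsets of $\mathbb{F}$ under $\cup$ form a commutative monoid with identity $\varnothing$. On the right, $\otimes$ inherits associativity and commutativity from $\cup$ through \Cref{lemma:riskstatespace:exchange}; for instance $(R(F_1) \otimes R(F_2)) \otimes R(F_3) = R((F_1 \cup F_2) \cup F_3) = R(F_1 \cup (F_2 \cup F_3)) = R(F_1) \otimes (R(F_2) \otimes R(F_3))$, and similarly for commutativity. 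For the identity, \Cref{def:riskstatespace} gives $R(\varnothing) = \{\varnothing\}$ (the singleton of the empty map), and \Cref{lemma:riskstatespace:exchange} yields $R(\varnothing) \otimes R(F) = R(\varnothing \cup F) = R(F)$, so $R$ also sends the $\cup$-identity to the $\otimes$-identity.

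Honestly there is no hard computational obstacle here: once \Cref{lemma:riskstatespace:exchange} is in hand, the homomorphism claim is essentially a relabelling of it. The only thing demanding care is the bookkeeping just described --- fixing the carriers as finite subsets of $\mathbb{F}$, confirming that $\otimes$ is well defined and closed on $\mathcal{R}$ (already granted by the composition definition in \Cref{eq:riskstatespace:comp}), and checking that $(\mathcal{R},\otimes)$ really is a monoid so that ``homomorphism'' is a meaningful assertion rather than mere notation. All of these reduce, in turn, to \Cref{lemma:riskstatespace:exchange} and the definitions, so I expect the write-up to be short and the commuting diagram to serve as the proof in pictorial form.
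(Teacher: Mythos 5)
Your core step coincides with the paper's final step: the commuting square is, verbatim, \Cref{lemma:riskstatespace:exchange}, and both you and the paper invoke that lemma for operation preservation. Where you genuinely diverge is in how the algebraic structure of $(\mathcal{R},\otimes)$ is established. The paper proves associativity of $\otimes$ \emph{directly}: it first shows the helper fact $\sigma_1 \approx (\sigma_2 \cup \sigma_3) \Leftrightarrow \sigma_1\approx\sigma_2 \land \sigma_1 \approx \sigma_3$ and then carries out a chain of set-level manipulations on the definition of $\otimes$, concluding that $(\mathcal{R},\otimes)$ is a semi-group, and only afterwards cites \Cref{lemma:riskstatespace:exchange} to finish. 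You instead \emph{derive} associativity (and commutativity) of $\otimes$ from \Cref{lemma:riskstatespace:exchange} itself, using that every element of $\mathcal{R}$ is of the form $R(F)$ and that $\cup$ is associative. That derivation is legitimate, since the exchange lemma was proved independently of any associativity claim, and it is shorter than the paper's computation; what the paper's direct computation buys is a statement about $\otimes$ that does not route through $R$ at all, i.e., the semi-group property is established as a fact about the operation alone.

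One side claim of yours conflicts with the paper's conventions. You read $R(\varnothing)$ set-theoretically as $\{\varnothing\}$ (the singleton of the empty map), conclude that it is the identity of $\otimes$, and hence that both sides are commutative monoids with $R$ preserving identities. The paper stipulates otherwise: by \Cref{def:riskspace:misc}, $R(\varnothing) = \varnothing$ and it is the \emph{zero} (absorbing) element of $\otimes$, not a unit, and correspondingly the paper claims and needs only semi-group structure, so the lemma is an assertion about semi-group homomorphism. Your identity check is not needed for the commuting square, so this does not invalidate your proof of the stated lemma, but as written it contradicts the paper's ($\otimes$-zero) law and should be dropped or rephrased to match the paper's reading of $R(\varnothing)$.
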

\begin{proof}[Proof Sketch.]
  We first make sure that we actually deal with semi-groups and then
  show by algebraic manipulation that $\otimes$ is associative.
  Details on the proof can be taken from \Cref{proof:homo1}.
\end{proof}

\paragraph{Special Risk States.}

We close this section with an analysis of specific classes of risk states.
$R(\varnothing)$ forms the \emph{empty risk space} and $R(\{\rfct\})$
the \emph{trivial risk space} for $\rfct$, leading to the following
law and equality:
\begin{corollary}
  \label{def:riskspace:misc}
  For any finite $F\subset\mathbb{F}$, $R(\varnothing)$ is the zero element of risk space
  composition with $\otimes$:
  \begin{align*}
    \label{eq:riskspace:zero}
    R(\varnothing) \otimes R(F)
    &= R(\varnothing) = \varnothing
    \tag{$\otimes$-zero}
    \\
    R(\{\rfct\})
    &= \{\rfct \mapsto \inact, \rfct \mapsto \activ, \rfct \mapsto \mitig\}
  \end{align*}
\end{corollary}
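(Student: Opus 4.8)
The plan is to treat the two asserted equalities separately, since they exercise different parts of the framework. For the zero-element law I would \emph{not} route through the exchange law \Cref{lemma:riskstatespace:exchange}: applied with $F_1=\varnothing$ that law would give $R(\varnothing)\otimes R(F)=R(\varnothing\cup F)=R(F)$, i.e.\ \emph{identity} behaviour, not the claimed \emph{zero} behaviour. The empty factor set is therefore the boundary case sitting \emph{outside} the intended scope of \Cref{lemma:riskstatespace:exchange}, and I would instead argue directly from the composition definition \Cref{eq:riskstatespace:comp}. The one fact I must secure first is that the empty risk space is genuinely empty, $R(\varnothing)=\varnothing$, as already recorded in the earlier remark stating that $R$ is non-empty if and only if $F$ is non-empty.

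Granting $R(\varnothing)=\varnothing$, the zero law is then immediate and essentially vacuous. By \Cref{eq:riskstatespace:comp}, $R(\varnothing)\otimes R(F)$ is the set of all $\sigma_1\cup\sigma_2$ with $\sigma_1\in R(\varnothing)$, $\sigma_2\in R(F)$ and $\sigma_1\approx\sigma_2$; since no admissible $\sigma_1$ exists, the defining set-builder ranges over an empty index and yields the empty set. Hence $R(\varnothing)\otimes R(F)=\varnothing=R(\varnothing)$, which is exactly the absorbing property.

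For the trivial risk space I would enumerate directly from \Cref{def:riskstatespace}. A member of $R(\{\rfct\})$ is a total injection $\sigma\colon\{\rfct\}\to\mathit{Ph}_{\rfct}$ with $\sigma(\rfct)\in\mathit{Ph}_{\rfct}$. Because the domain is the singleton $\{\rfct\}$, injectivity is automatic and $\sigma$ is completely determined by its single value, while totality forces that value to be defined. By \Cref{def:riskfactor} we have $\mathit{Ph}_{\rfct}=\{\inact,\activ,\mitig\}$, so there are exactly three admissible assignments, namely $\rfct\mapsto\inact$, $\rfct\mapsto\activ$ and $\rfct\mapsto\mitig$. Collecting them, and writing each singleton assignment in the form $\rfct\mapsto p$ as in the set-of-sets view of $R$, gives the stated equality.

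The main obstacle is the justification of $R(\varnothing)=\varnothing$. Read strictly set-theoretically, the codomain $\bigcup_{\rfct\in\varnothing}\mathit{Ph}_{\rfct}$ is empty and the unique empty function is a vacuously total injection, which would make $R(\varnothing)=\{\varnothing\}$ a singleton --- the \emph{identity} of $\otimes$ --- rather than the empty set. To obtain the zero behaviour I would therefore make explicit the convention (already implicit in the cited remark) that no risk state is defined over an empty set of factors, so that the empty function is excluded and $R(\varnothing)=\varnothing$. It is precisely this stipulation that removes the empty case from the hypotheses of \Cref{lemma:riskstatespace:exchange} and lets $R(\varnothing)$ act as a zero; I would flag it explicitly to forestall the apparent clash with the homomorphism of \Cref{lem:homo1}.
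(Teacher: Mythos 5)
Your proof is correct, and for the two positive claims it is essentially the argument the paper leaves implicit: the corollary carries no explicit proof in the paper and is meant to follow from the preceding remark that $R$ is non-empty and finite iff $F$ is non-empty and finite (i.e., $R(\varnothing)=\varnothing$), after which ($\otimes$-zero) is the vacuous instance of \Cref{eq:riskstatespace:comp} and the second equality is your direct enumeration of total injections on a singleton domain against $\mathit{Ph}_{\rfct}=\{\inact,\activ,\mitig\}$ from \Cref{def:riskfactor}. What you add is genuinely valuable and absent from the paper: the observation that, read strictly set-theoretically, \Cref{def:riskstatespace} yields $R(\varnothing)=\{\varnothing\}$ (the empty map is vacuously a total injection), under which $R(\varnothing)$ would be the \emph{identity} of $\otimes$, the corollary would be false, and \Cref{lemma:riskstatespace:exchange} would hold even at $F_1=\varnothing$; whereas under the paper's stipulation $R(\varnothing)=\varnothing$ the corollary holds but \Cref{lemma:riskstatespace:exchange} silently fails at $F_1=\varnothing$, since it would force $R(F)=R(\varnothing)\otimes R(F)=\varnothing$ for every $F$, contradicting that same remark for non-empty $F$. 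The paper nowhere restricts the exchange law (or \Cref{lem:homo1}) to non-empty factor sets, so the corollary and the exchange law are in genuine tension as stated, and your explicit convention---excluding the empty map from risk spaces and correspondingly excluding the empty case from the hypotheses of \Cref{lemma:riskstatespace:exchange,lem:homo1}---is exactly the repair needed for the two statements to coexist.
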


\begin{definition}[Locked Risk State]
  Based on \Cref{sec:risk:factors,def:riskstatespace} and given a set
  $F$ of $n$ risk factors, for any state $\sigma \in R$:
  \[
    \sigma \;\text{is \emph{risk-locked}} \iff %
    \forall i \in [1..n], \not\exists (p,e,p') \in\; \rightarrow_i
    \colon \sigma^{(i)} = p \land p \neq p'
  \]
  Otherwise, we call $\sigma$ \emph{risk-unlocked}.
\end{definition}

This notion is different from \emph{control stability} where the plant
has reached a stable state and from CSP's \emph{stable failures model}
$\mathcal{F}$ where stability refers to control states without invisible
internal events~(i.e.,\xspace $\tau$) and waiting for input.

States with an \emph{active final risk factor}
$\rfct$~(\Cref{sec:risk:factors}) take up a particularly bad role:
Such states, by definition, would expose the process $P$ to
\emph{residual risk infinitely long and often}, thus, making any
harmful consequences associated with $\rfct$ very likely.
On the one hand, such states, useful in modelling \emph{bad
  accidents}, are inevitable in any realistic risk model and, on the
other hand, a process $P$ should govern its~(probabilistic) choices
in order to not enter such states.

\section{Mitigation Orders}
\label{sec:risk:mitorders}

This section investigates various basic orders over risk spaces depending on
the qualitative and quantitative information available in the risk model.

\subsection{Qualitative Mitigation Orders}
\label{sec:qual-mitig-orders}

Let $R(F)$ be a risk space for a set of $n$ risk factors
$F \subseteq \mathbb{F}$ according to \Cref{def:riskstatespace}.
Again, we assume all risk factors are given indices in the range
$1..n$.  We use the convention of page~\pageref{cnv:stateindex} to
refer to parts of risk states by $\sigma^{(i)}$ with $i\in[1..n]$
and define a partial %
order $\preceq_{\mathsf{m}} \;\subseteq R \times R$ as
follows.

\begin{definition}[Fully Comparable Inclusive Mitigation Order]
  \label{def:fullincmitord}
  For any states $\sigma,\sigma' \in R$, define
  \[
    \sigma \preceq_{\mathsf{m}} \sigma' \iff \forall i \in [1..n]\colon
    \sigma^{(i)} \preceq_{\rfct_i} \sigma'^{(i)}
  \]
\end{definition}

By
$\sigma \prec_{\mathsf{m}} \sigma' \iff \sigma \preceq_{\mathsf{m}} \sigma' \land
\sigma \not=_{\mathsf{m}} \sigma'$, we induce the corresponding strict order.
$\sigma$ and $\sigma'$ are said to be \emph{incomparable} if and only if\xspace
$\sigma \not\preceq_{\mathsf{m}} \sigma' \land \sigma' \not\preceq_{\mathsf{m}}
\sigma$.  Intuitively, $\sigma \preceq_{\mathsf{m}} \sigma'$ signifies that
``$\sigma'$ is a better achievement in risk mitigation than
$\sigma$.''  However, note that $\preceq_{\mathsf{m}}$ requires full
comparability of two states.  It might be cumbersome to require such
comprehensive knowledge to determine which state is ``better or less
risky'' than another state.
Hence, in presence of irreducible~(i.e.,\xspace aleatory) uncertainty, we might
instead want to account for the partial knowledge in the orders of
risk factors' phases~(\Cref{def:riskfactor}) at the level of $R$ by
providing a relaxed partial order as follows.

\begin{definition}[Partially Comparable Inclusive Mitigation Order]
  \label{def:partincmitord}
  For states $\sigma,\sigma'\in R$, define
  \[
    \sigma \precsim_{\mathsf{m}} \sigma' \iff \forall i \in [1..n]\colon
    \sigma^{(i)} \preceq_{\rfct_i} \sigma'^{(i)} \lor
    \big((\sigma^{(i)},\sigma'^{(i)}) \not\in\; \preceq_{\rfct_i}
    \land
    (\sigma'^{(i)},\sigma^{(i)}) \not\in\; \preceq_{\rfct_i}\big)
  \]
\end{definition}
We use $\prec^{\sim}_{\mathsf{m}}$ and
$=_{\mathsf{m}}^{\sim}$ to distinguish the corresponding strict order
and equality for $\precsim_{\mathsf{m}}$ from $\prec_{\mathsf{m}}$.  Intuitively,
\Cref{def:partincmitord} requires a ``betterment in risk from $\sigma$
to $\sigma'$'' based exactly on the comparable phases.

\begin{lemma}
  \label{thm:risk:fullimplpartial}
  For any $\sigma,\sigma'\in R$,
  \[
    \sigma \preceq_{\mathsf{m}} \sigma' \Rightarrow \sigma \precsim_{\mathsf{m}} \sigma'
  \]
\end{lemma}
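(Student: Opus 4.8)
The plan is to unfold both definitions and observe that the implication reduces to a pointwise fact about the partial orders $\preceq_{\rfct_i}$ on each risk factor's phases. By \Cref{def:fullincmitord}, the hypothesis $\sigma \preceq_{\mathsf{m}} \sigma'$ means $\sigma^{(i)} \preceq_{\rfct_i} \sigma'^{(i)}$ for every $i \in [1..n]$. By \Cref{def:partincmitord}, the goal $\sigma \precsim_{\mathsf{m}} \sigma'$ requires that for every $i$ the disjunction
\[
  \sigma^{(i)} \preceq_{\rfct_i} \sigma'^{(i)} \lor
  \big((\sigma^{(i)},\sigma'^{(i)}) \not\in\; \preceq_{\rfct_i}
  \land
  (\sigma'^{(i)},\sigma^{(i)}) \not\in\; \preceq_{\rfct_i}\big)
\]
holds. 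So the task is purely to show that each disjunct's left side, which is exactly the hypothesis for index $i$, suffices.

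First I would fix an arbitrary $i \in [1..n]$ and invoke the hypothesis to obtain $\sigma^{(i)} \preceq_{\rfct_i} \sigma'^{(i)}$. This is precisely the left disjunct of the condition in \Cref{def:partincmitord}, so by introduction of disjunction the whole disjunction holds for this $i$. Since $i$ was arbitrary, the universally quantified statement defining $\precsim_{\mathsf{m}}$ is established, giving $\sigma \precsim_{\mathsf{m}} \sigma'$. The structure is thus a trivial $\forall$-introduction wrapped around a $\lor$-introduction, using the fact that a stronger pointwise condition (comparability with a specific direction) implies a weaker one (comparability in that direction \emph{or} incomparability).

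There is essentially no obstacle here: the partially comparable order is defined by deliberately relaxing the fully comparable order, so the implication is immediate by construction. The only point worth a sentence of care is confirming that the strict and equality variants $\prec^{\sim}_{\mathsf{m}}$ and $=_{\mathsf{m}}^{\sim}$ play no role, since the statement concerns only the non-strict relations $\preceq_{\mathsf{m}}$ and $\precsim_{\mathsf{m}}$; one need not track how the strict parts compare. The proof is a direct one-line logical entailment per index, and the main ``work'' is simply citing the two definitions correctly rather than any nontrivial reasoning about the phase orders $\preceq_{\rfct_i}$ themselves.
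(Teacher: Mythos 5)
Your proof is correct and follows essentially the same route as the paper's own: a $\forall$-elimination on the hypothesis, a $\lor$-introduction selecting the left disjunct of \Cref{def:partincmitord}, and a $\forall$-introduction to conclude. The paper's opening remark that $\preceq_{\mathsf{\rfct}}$ is antisymmetric is never actually used in its derivation, so your omission of it loses nothing.
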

\begin{proof}[Proof of \Cref{thm:risk:fullimplpartial}.]
  $\preceq_{\mathsf{\rfct}}$ is antisymmetric.
  By definition of $\preceq_{\mathsf{m}}$, we may assume
  \begin{align*}
    & \forall i \in [1..n] \colon
      \sigma^{(i)} \preceq_{\rfct_i} \sigma'^{(i)}
      \tag{$\forall$-elim}
    \\
    & \vdash
      \sigma^{(i)} \preceq_{\rfct_i} \sigma'^{(i)}
      \tag{$\lor$-intro1}
    \\
    & \vdash
      \sigma^{(i)} \preceq_{\rfct_i} \sigma'^{(i)} \lor
      \big((\sigma^{(i)},\sigma'^{(i)}) \not\in\; \preceq_{\rfct_i} \land\;
      (\sigma'^{(i)},\sigma^{(i)}) \not\in\; \preceq_{\rfct_i}\big)
      \tag{$\forall$-intro, assumption for each $i$}
    \\
    & \vdash
      \forall i \in [1..n] \colon
      \sigma^{(i)} \preceq_{\rfct_i} \sigma'^{(i)} \lor
      \big((\sigma^{(i)},\sigma'^{(i)}) \not\in\; \preceq_{\rfct_i} \land\;
      (\sigma'^{(i)},\sigma^{(i)}) \not\in\; \preceq_{\rfct_i}\big)
  \end{align*}
\end{proof}

\begin{corollary}
  \label{cor:risk:uneqimpl}
  \[
    \sigma' \neq_{\mathsf{m}}^{\sim} \sigma
    \Rightarrow \sigma' \neq_{\mathsf{m}} \sigma
  \]
\end{corollary}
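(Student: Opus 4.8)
The plan is to derive the corollary directly from \Cref{thm:risk:fullimplpartial} by contraposition, using the standard convention that the equality associated with each of the two orders is its symmetric part. Concretely, I would read $\sigma =_{\mathsf{m}} \sigma'$ as the conjunction $\sigma \preceq_{\mathsf{m}} \sigma' \land \sigma' \preceq_{\mathsf{m}} \sigma$ and, analogously, $\sigma =_{\mathsf{m}}^{\sim} \sigma'$ as $\sigma \precsim_{\mathsf{m}} \sigma' \land \sigma' \precsim_{\mathsf{m}} \sigma$. Under this reading the claim $\sigma' \neq_{\mathsf{m}}^{\sim} \sigma \Rightarrow \sigma' \neq_{\mathsf{m}} \sigma$ is logically equivalent to its contrapositive $\sigma' =_{\mathsf{m}} \sigma \Rightarrow \sigma' =_{\mathsf{m}}^{\sim} \sigma$, and it is the latter I would establish.

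First I would assume $\sigma' =_{\mathsf{m}} \sigma$ and unfold it into the two directed statements $\sigma' \preceq_{\mathsf{m}} \sigma$ and $\sigma \preceq_{\mathsf{m}} \sigma'$. Then I would apply \Cref{thm:risk:fullimplpartial} once in each direction — that lemma asserts the relational inclusion $\preceq_{\mathsf{m}}\;\subseteq\;\precsim_{\mathsf{m}}$ — obtaining $\sigma' \precsim_{\mathsf{m}} \sigma$ and $\sigma \precsim_{\mathsf{m}} \sigma'$. Conjoining these two yields exactly $\sigma' =_{\mathsf{m}}^{\sim} \sigma$, which closes the contrapositive and hence the corollary. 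An equivalent forward argument avoids contraposition at the cost of a case split: unfolding $\sigma' \neq_{\mathsf{m}}^{\sim} \sigma$ gives the disjunction $\sigma' \not\precsim_{\mathsf{m}} \sigma \lor \sigma \not\precsim_{\mathsf{m}} \sigma'$, and the contrapositive of \Cref{thm:risk:fullimplpartial} turns a failure of $\precsim_{\mathsf{m}}$ in either direction into a failure of $\preceq_{\mathsf{m}}$ in that same direction, so at least one of $\sigma' \preceq_{\mathsf{m}} \sigma$ and $\sigma \preceq_{\mathsf{m}} \sigma'$ must fail, which is precisely $\sigma' \neq_{\mathsf{m}} \sigma$. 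I would favour the contrapositive version for brevity.

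The only real point of care — and the step I expect to be the main obstacle — is making the definitions of $=_{\mathsf{m}}$ and $=_{\mathsf{m}}^{\sim}$ unambiguous. Since $\precsim_{\mathsf{m}}$ is not antisymmetric — two states can be mutually $\precsim_{\mathsf{m}}$-related yet differ on components that are $\preceq_{\rfct_i}$-incomparable — its symmetric part $=_{\mathsf{m}}^{\sim}$ is genuinely coarser than the state equality of \Cref{def:risk:eqcomp}, whereas for the antisymmetric order $\preceq_{\mathsf{m}}$ the symmetric part $=_{\mathsf{m}}$ coincides with state equality. I would therefore state explicitly that $=_{\mathsf{m}}^{\sim}$ denotes the symmetric part of $\precsim_{\mathsf{m}}$ rather than componentwise equality, so that the relational inclusion delivered by \Cref{thm:risk:fullimplpartial} transfers cleanly to the induced equalities. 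Once that convention is fixed, no reasoning at the level of individual phases $\sigma^{(i)}$ is needed: the corollary is a purely relational consequence of the lemma.
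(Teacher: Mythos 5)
Your proposal is correct and takes essentially the same route as the paper: the paper's own proof likewise reads $=_{\mathsf{m}}$ and $=_{\mathsf{m}}^{\sim}$ as the symmetric parts of $\preceq_{\mathsf{m}}$ and $\precsim_{\mathsf{m}}$ (its ``by conversion'' step) and then closes the contrapositive $\sigma' \preceq_{\mathsf{m}} \sigma \land \sigma \preceq_{\mathsf{m}} \sigma' \Rightarrow \sigma' \precsim_{\mathsf{m}} \sigma \land \sigma \precsim_{\mathsf{m}} \sigma'$ by two applications of \Cref{thm:risk:fullimplpartial}. Your explicit remark fixing the symmetric-part reading of the two equalities is a helpful clarification of a convention the paper leaves implicit, but it does not change the argument.
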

\begin{proof}[Proof of \Cref{cor:risk:uneqimpl}.]
    \begin{align}
    \notag
    \sigma' \neq_{\mathsf{m}}^{\sim} \sigma
    &\Rightarrow \sigma' \neq_{\mathsf{m}} \sigma
      \tag{by definition}
    \\
    \notag
    \neg(\sigma' \precsim_{\mathsf{m}} \sigma \land \sigma \precsim_{\mathsf{m}} \sigma')
    &\Rightarrow \neg(\sigma' \preceq_{\mathsf{m}} \sigma \land \sigma
      \preceq_{\mathsf{m}} \sigma')
      \tag{by conversion}
    \\
    \sigma' \precsim_{\mathsf{m}} \sigma \land \sigma \precsim_{\mathsf{m}} \sigma'
    &\Leftarrow \sigma' \preceq_{\mathsf{m}} \sigma \land \sigma
      \preceq_{\mathsf{m}} \sigma'
      \tag{by \Cref{thm:risk:fullimplpartial}}
  \end{align}
\end{proof}

\begin{corollary}[Converse of \Cref{thm:risk:fullimplpartial}]
  \label{cor:risk:fullimplpartial}
  \begin{align}
    \neg(\sigma \preceq_{\mathsf{m}} \sigma')
    &\Leftarrow \neg(\sigma \precsim_{\mathsf{m}} \sigma')
      \tag{by definition of $\not\preceq_{\mathsf{m}},\not\precsim_{\mathsf{m}}$}
    \\
    \sigma \not\preceq_{\mathsf{m}} \sigma'
    &\Leftarrow \sigma \not\precsim_{\mathsf{m}} \sigma'
      \tag{by negation and definition of $\preceq_{\mathsf{m}},\precsim_{\mathsf{m}}$}
    \\
    \label{eq:convfullimplpart}
    \sigma \succ_{\mathsf{m}} \sigma'
    \lor \big(
    (\sigma,\sigma') \not\in\; \preceq_{\mathsf{m}} \land\;
    (\sigma',\sigma) \not\in\; \preceq_{\mathsf{m}} \big)
    &\Leftarrow \sigma \succ^{\sim}_{\mathsf{m}} \sigma'
      \lor \big(
      (\sigma,\sigma') \not\in\; \precsim_{\mathsf{m}} \land\;
      (\sigma',\sigma) \not\in\; \precsim_{\mathsf{m}} \big)
  \end{align}
\end{corollary}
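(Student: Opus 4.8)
The plan is to read the three displayed lines as a single chain of implications whose composite is the asserted $\Leftarrow$, and to justify each link in turn. First I would note that the top line is precisely the contrapositive of \Cref{thm:risk:fullimplpartial}: since that lemma gives $\sigma \preceq_{\mathsf{m}} \sigma' \Rightarrow \sigma \precsim_{\mathsf{m}} \sigma'$, transposing under negation yields $\neg(\sigma \precsim_{\mathsf{m}} \sigma') \Rightarrow \neg(\sigma \preceq_{\mathsf{m}} \sigma')$, which is the first $\Leftarrow$ verbatim; this step is purely propositional and uses nothing beyond the already-proved lemma. The second line I would treat as a mere change of notation, rewriting $\neg(\sigma \preceq_{\mathsf{m}} \sigma')$ and $\neg(\sigma \precsim_{\mathsf{m}} \sigma')$ as the slashed relations $\sigma \not\preceq_{\mathsf{m}} \sigma'$ and $\sigma \not\precsim_{\mathsf{m}} \sigma'$, adding no mathematical content but preparing the negated orders for unfolding.

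The substantive link is the third line, where I would expand each negated order into the dichotomy \emph{strictly above or incomparable}. For $\not\preceq_{\mathsf{m}}$ I would split $\sigma \not\preceq_{\mathsf{m}} \sigma'$ according to whether $\sigma' \preceq_{\mathsf{m}} \sigma$ holds: if it does, then since $\sigma \not\preceq_{\mathsf{m}} \sigma'$ forces $\sigma \neq_{\mathsf{m}} \sigma'$, the induced strict order gives $\sigma \succ_{\mathsf{m}} \sigma'$; if it does not, both directions fail and the pair is incomparable, i.e.\ $(\sigma,\sigma') \not\in\; \preceq_{\mathsf{m}} \land (\sigma',\sigma) \not\in\; \preceq_{\mathsf{m}}$. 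These two subcases are exhaustive and disjoint, so $\sigma \not\preceq_{\mathsf{m}} \sigma'$ is equivalent to the left-hand disjunction of \eqref{eq:convfullimplpart}; the identical unfolding applies verbatim to $\not\precsim_{\mathsf{m}}$ with $\succ^{\sim}_{\mathsf{m}}$ and $=^{\sim}_{\mathsf{m}}$ in place of $\succ_{\mathsf{m}}$ and $=_{\mathsf{m}}$, giving the right-hand disjunction. Substituting both equivalent forms into the second line produces \eqref{eq:convfullimplpart}.

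The only place demanding care --- and hence the main obstacle --- is checking that this trichotomy-style unfolding is genuinely an equivalence on each side rather than a one-way implication. It hinges on the induced definition $\sigma \prec_{\mathsf{m}} \sigma' \iff \sigma \preceq_{\mathsf{m}} \sigma' \land \sigma \neq_{\mathsf{m}} \sigma'$ (and its tilde analogue) together with reading $=_{\mathsf{m}}$ and $=^{\sim}_{\mathsf{m}}$ as mutual comparability under the respective order; granting these, the decomposition is a general fact about relations and invokes no further order axioms, so the chain closes with no residual gaps and the composite $\Leftarrow$ of \eqref{eq:convfullimplpart} follows.
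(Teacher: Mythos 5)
Your proposal is correct, but it is organised differently from the paper's own proof. The paper establishes \Cref{eq:convfullimplpart} by a direct case analysis on the right-hand disjunction: if both $(\sigma,\sigma')$ and $(\sigma',\sigma)$ are outside $\precsim_{\mathsf{m}}$, it concludes incomparability in $\preceq_{\mathsf{m}}$ from \Cref{def:fullincmitord,def:partincmitord}; if $\sigma \succ^{\sim}_{\mathsf{m}} \sigma'$, it splits further into the subcase of some incomparable phases (yielding the incomparability disjunct) and the subcase $(\sigma',\sigma) \in\; \preceq_{\mathsf{m}}$, where it invokes \Cref{cor:risk:uneqimpl} to convert $\neq^{\sim}_{\mathsf{m}}$ into $\neq_{\mathsf{m}}$ and obtain $\sigma \succ_{\mathsf{m}} \sigma'$. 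You instead factor the argument into two per-order unfolding equivalences --- $\sigma \not\preceq_{\mathsf{m}} \sigma'$ iff ``strictly above or incomparable'' in $\preceq_{\mathsf{m}}$, and likewise for $\precsim_{\mathsf{m}}$ --- and then compose them with the contrapositive of \Cref{thm:risk:fullimplpartial}, so that the third line becomes a pure re-expression of the first two. This buys modularity: the cross-order step (the lemma's contrapositive) is cleanly separated from the intra-order case analysis, and the content of \Cref{cor:risk:uneqimpl} is absorbed into the mutual-comparability reading of $=_{\mathsf{m}}$ and $=^{\sim}_{\mathsf{m}}$ (a reading the paper itself relies on in its ``by conversion'' step), so you never need that corollary as a separate ingredient. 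The paper's route, by contrast, is a more direct case-check that explicitly reuses the earlier corollary. Both hinge on the same dichotomy and the same definitions, and your identification of the one delicate point --- that the unfolding must be an equivalence, which holds for any relation once the strict order is induced via $\prec_{\mathsf{m}} \;= \;\preceq_{\mathsf{m}} \cap \not=_{\mathsf{m}}$ --- is exactly right; in fact only one direction of each equivalence is strictly needed for the chain, so your argument proves slightly more than required.
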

\begin{proof}[Proof of \Cref{cor:risk:fullimplpartial}.]
  Case 1: If
  $(\sigma,\sigma') \not\in\; \precsim_{\mathsf{m}} \land\; (\sigma',\sigma)
  \not\in\; \precsim_{\mathsf{m}}$ then~(by
  \Cref{def:fullincmitord,def:partincmitord}) also
  $(\sigma,\sigma') \not\in\; \preceq_{\mathsf{m}} \land\; (\sigma',\sigma)
  \not\in\; \preceq_{\mathsf{m}}$ and, therefore,
  \Cref{eq:convfullimplpart}.  \hfill\resizebox{.5em}{.4em}{$\boxslash$}

  Case 2: If $\sigma \succ^{\sim}_{\mathsf{m}} \sigma' = \sigma'
  \precsim_{\mathsf{m}} \sigma \land \sigma' \neq_{\mathsf{m}}^{\sim} \sigma$,
  then we have either
  $(\sigma,\sigma') \not\in\; \preceq_{\mathsf{m}} %
  \land\;
  (\sigma',\sigma) \not\in\; \preceq_{\mathsf{m}}$~(because of some incomparable phases)
  which fulfils
  \Cref{eq:convfullimplpart}. Alternatively, we have
  $(\sigma',\sigma) \in\; \preceq_{\mathsf{m}}$ which means $\sigma$ and
  $\sigma'$ are fully comparable and, because of
  \Cref{cor:risk:uneqimpl}, we have $\sigma \succ_{\mathsf{m}}
  \sigma'$. 
\end{proof}

\subsection{Quantitative Mitigation Orders}

So far, we have seen how partial orders account for a lack of
knowledge and potential uncertainties about risk.  Now, we will have a
look at how we can use \emph{severity information}, if available for
specific risk factors, to interpolate knowledge gaps, model
uncertainty, and derive a linear order over $R$.

We continue with further definitions to deal with intervals:
Given a family of intervals $I = ([l_i,u_i))_{i\in[1..n]}$ over
$\mathbb{R}_+$ with $l_i \leq u_i$, we define the \emph{convex hull}
of $I$ by a map $\cdot^*\colon
2^{\mathbb{R}^2_+}\rightarrow\mathbb{R}^2_+$ as follows:
\begin{equation*}
  I^* = [min \{l_i\}_{i\in[1..n]}, max \{u_i\}_{i\in[1..n]})
\end{equation*}
Furthermore, let
$active \colon R(F) \rightarrow 2^{F}$ with
$active(\sigma) = \{\rfct \in F \mid \sigma(\rfct) = \activ \}$ be the map
returning the set of active factors of a risk state.  Moreover, let
$S\colon R\to \mathbb{R}^2_+$ with
\begin{equation*}
  S(\sigma) = (s_{\rfct})_{\rfct \in active(\sigma)}^*
\end{equation*}
be a map for the construction of the severity interval of a risk
state.\footnote{Note that $S$ over-approximates~(i.e.,\xspace constructs the
  convex hull from) sparsely distributed severity intervals.}  Whereas
the \emph{minimum severity} of a risk factor $\rfct$ is given by the
interval $[0,0)$, the minimum severity of a risk state $S(\sigma)$ is
the empty set $\varnothing$.
For any
two real-valued intervals $[a,b),[c,d)\in\mathbb{R}^2_+$,
\citet{Ishibuchi1990-Multiobjectiveprogrammingoptimization} %
define with $[a,b) \leq [c,d) \iff a \leq c \land b \leq d$ a partial
order over such intervals. %

Moreover, we say that two risk states $\sigma,\sigma' \in R$ are
\emph{severity-equivalent} if and only if\xspace their accumulated severity
intervals are equal, that is,\xspace
$\sigma \sim_{\mathsf{s}} \sigma' \iff S(\sigma) = S(\sigma')$.  We have that
$\sigma = \sigma' \Rightarrow \sigma \sim_{\mathsf{s}} \sigma'$ because the
factors that are in their active phases are identical.
The relation $\sim_{\mathsf{s}}$ is an equivalence relation because it is
reflexive, symmetric, and transitive~(all by the usual equivalence
over intervals).  Furthermore, $\sim_{\mathsf{s}}$ induces equivalence classes
$[\sigma]_{\sim_{\mathsf{s}}} = \{\sigma'\in R \mid \sigma'\sim_{\mathsf{s}} \sigma \}$ over
$R$ for any $\sigma \in R$ with the corresponding quotient
class~$R/\sim_{\mathsf{s}}$.
With the \emph{severity intervals} $(s_i)_{i\in[1..n]}$, we now define
an order over $R/\sim_{\mathsf{s}}$.

\begin{definition}[Strong Mitigation Order]
  \label{def:strmitord}
  For $[\sigma]_{\sim_{\mathsf{s}}}, [\sigma']_{\sim_{\mathsf{s}}} \in R/\sim_{\mathsf{s}}$, define
  \begin{equation*}
    [\sigma]_{\sim_{\mathsf{s}}} \leq_{\mathsf{m}} [\sigma']_{\sim_{\mathsf{s}}} \iff
    \forall
    \sigma\in[\sigma]_{\sim_{\mathsf{s}}},\sigma''\in[\sigma']_{\sim_{\mathsf{s}}} \colon
    S(\sigma) \geq S(\sigma'')
    \lor
    S(\sigma'') \subset S(\sigma).
  \end{equation*}
  $[\sigma]_{\sim_{\mathsf{s}}} \leq_{\mathsf{m}} [\sigma']_{\sim_{\mathsf{s}}}$ can be dropped from
  $R/\sim_{\mathsf{s}}$, yielding
  \begin{equation}
    \label{eq:strmitord:dropped}
    \forall
    \sigma\in[\sigma]_{\sim_{\mathsf{s}}},\sigma''\in[\sigma']_{\sim_{\mathsf{s}}} \colon
    \sigma \leq_{\mathsf{m}} \sigma'' \iff
    S(\sigma) \geq S(\sigma'')
    \lor
    S(\sigma'') \subset S(\sigma).
  \end{equation}
\end{definition}
$\leq_{\mathsf{m}}$ codifies the circumstance that the risk state $\sigma'$ is
``better'' if the union of its severity intervals is
\begin{enumerate}[(a)]
\item \emph{lower} in the ranking $\leq$ of interval numbers or
\item \emph{narrower} than the corresponding union for $\sigma$.
\end{enumerate}
Condition (b) conveys the intuition that the interval carries less
uncertainty about the consequences expected from $\sigma'$ than from
$\sigma$.
Equivalence classes in $R/\sim_{\mathsf{s}}$ abstract from the risk factors from
which the merged severity intervals originate.  This abstraction has
to be carefully taken into account when using $\leq_{\mathsf{m}}$ and,
therefore, when specifying severity.  Note that $\leq_{\mathsf{m}}$ is based
on the convex hull of severity intervals from the the active phases of
a pair of risk states.  Apart from the convex hull, interval addition
and multiplication are relevant for alternative mitigation orders as
we shall see below.  However, a detailed investigation is left for
future work.
Let us now consider some core properties of $\leq_{\mathsf{m}}$.

\begin{lemma}
  \label{thm:risk:mitorderlinearity}
  $\leq_{\mathsf{m}}$ is linear %
  over $R/\sim_{\mathsf{s}}$.
\end{lemma}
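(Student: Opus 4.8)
The plan is to establish \emph{connexity} of $\leq_{\mathsf{m}}$, that is,\xspace that for any two classes $[\sigma]_{\sim_{\mathsf{s}}},[\sigma']_{\sim_{\mathsf{s}}} \in R/\sim_{\mathsf{s}}$ at least one of $[\sigma]_{\sim_{\mathsf{s}}} \leq_{\mathsf{m}} [\sigma']_{\sim_{\mathsf{s}}}$ and $[\sigma']_{\sim_{\mathsf{s}}} \leq_{\mathsf{m}} [\sigma]_{\sim_{\mathsf{s}}}$ holds. The first step is to note that $S$ is constant on every equivalence class: by definition of $\sim_{\mathsf{s}}$ all representatives of a class carry the same severity interval, so the quantification over representatives in \Cref{def:strmitord} collapses, and it suffices to compare two fixed interval values. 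Picking representatives, I write $S(\sigma) = [a,b)$ and $S(\sigma') = [c,d)$ with $a \leq b$ and $c \leq d$ over $\mathbb{R}_+$. The goal thus reduces to showing that any two such severity intervals are related by $\leq_{\mathsf{m}}$ in at least one direction.

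The core argument is a case distinction on the relative position of the endpoints. Since $\leq$ is total on $\mathbb{R}_+$, the pairs $a,c$ and $b,d$ are each comparable, giving four cases. In the two \emph{aligned} cases, $a \leq c \land b \leq d$ yields $[a,b) \leq [c,d)$ in the interval order, hence $S(\sigma') \geq S(\sigma)$ and $[\sigma']_{\sim_{\mathsf{s}}} \leq_{\mathsf{m}} [\sigma]_{\sim_{\mathsf{s}}}$ through the first disjunct of \Cref{def:strmitord}, while $c \leq a \land d \leq b$ symmetrically gives $[\sigma]_{\sim_{\mathsf{s}}} \leq_{\mathsf{m}} [\sigma']_{\sim_{\mathsf{s}}}$. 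In the two \emph{crossed} cases, one interval is contained in the other: $a \leq c \land d \leq b$ forces $[c,d) \subseteq [a,b)$ and $c \leq a \land b \leq d$ forces $[a,b) \subseteq [c,d)$, so the second, containment disjunct of \Cref{def:strmitord} settles comparability whenever the inclusion is proper; when both endpoints coincide the two intervals are equal and the aligned cases already apply. As these four cases exhaust all endpoint configurations, connexity follows for nonempty severities.

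It then remains to treat the boundary case of an empty severity interval $S(\sigma) = \varnothing$, which arises for instance when $active(\sigma) = \varnothing$. Here I would use that $\varnothing \subset [c,d)$ for every nonempty interval: the empty-severity class is therefore comparable to every other class as a top element via the containment disjunct, and it is trivially comparable to itself, so connexity is preserved in full generality.

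I expect the calculation to be routine; the two delicate points are the following. First, one must check that the $\geq$-disjunct and the $\subset$-disjunct together leave no gap at shared endpoints, which is why the degenerate equal-endpoint configurations are routed back to the interval-order disjunct rather than to strict containment. Second, ``linear'' must be read here as connexity of a total relation rather than as a strict linear order: because, for example,\xspace $[1,2) \subset [0,2)$ makes both $\leq_{\mathsf{m}}$-directions hold between two distinct severity intervals, $\leq_{\mathsf{m}}$ is a \emph{total preorder} and not antisymmetric, so the substantive content of the lemma is exactly the comparability established above.
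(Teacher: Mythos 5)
Your comparability argument is essentially the paper's own: the paper likewise splits into the cases of no active factors, active factors on one side only, and active factors on both sides, and in the last case runs exactly your four-way endpoint analysis ($c \leq a \land d \leq b$, $c > a \land d \leq b$, etc.), discharging each case via the $\geq$-disjunct or the $\subset$-disjunct of \Cref{def:strmitord}. Where you genuinely diverge is on what ``linear'' asserts beyond comparability. The paper's proof has a second part, antisymmetry, proved by contradiction; but its conclusion is $[\sigma]_{\sim_{\mathsf{s}}} =_{\mathsf{m}} [\sigma']_{\sim_{\mathsf{s}}}$, where $=_{\mathsf{m}}$ is the equivalence induced by mutual ordering, so that step holds essentially by definition of $=_{\mathsf{m}}$ --- it is antisymmetry only up to the symmetrisation of the relation, which any total preorder satisfies. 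Your counterexample ($S(\sigma) = [0,2)$, $S(\sigma'') = [1,2)$: proper containment gives one direction, the interval order via $0 \leq 1 \land 2 \leq 2$ gives the other) is valid and shows that antisymmetry with respect to \emph{equality of $\sim_{\mathsf{s}}$-classes} genuinely fails, so your reading of the lemma as connexity of a total preorder is the defensible one; the paper's reading survives only because $=_{\mathsf{m}}$ is weaker than class equality. This matters downstream: the corollaries asserting singleton minima/maxima and a complete lattice for $(R/\sim_{\mathsf{s}},\leq_{\mathsf{m}})$ hold only modulo $=_{\mathsf{m}}$, a caveat your proof makes explicit and the paper's obscures. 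One further small discrepancy: in the empty-severity case the paper writes the containment disjunct as $\varnothing \subset (a,b)$ and concludes the empty-severity class lies at the bottom, whereas reading \Cref{def:strmitord} literally (as you do, which would require $S(\sigma) \subset \varnothing$ for that direction) places the empty-severity class at the top; either way comparability holds, so the lemma is unaffected, but your version is the one faithful to the definition as stated.
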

\begin{proof}[Proof Sketch.]
  We show by case analysis that any two risk states are comparable and
  $\leq_{\mathsf{m}}$ is antisymmetric. The complete proof is stated in
  \Cref{sec:proofs}.
\end{proof}

\begin{corollary}
  After dropping \Cref{thm:risk:mitorderlinearity} by
  \Cref{eq:strmitord:dropped}, we have that $\leq_{\mathsf{m}}$ is also
  linear over $R$.
\end{corollary}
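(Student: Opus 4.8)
The plan is to exploit the fact that, by \Cref{eq:strmitord:dropped}, the dropped relation $\leq_{\mathsf{m}}$ on $R$ is defined purely in terms of the severity intervals: for $\sigma,\sigma''\in R$ we have $\sigma \leq_{\mathsf{m}} \sigma''$ iff $S(\sigma) \geq S(\sigma'') \lor S(\sigma'') \subset S(\sigma)$. First I would make this dependence explicit by observing that the relation factors through the severity map, i.e.\ that it is the preimage under the quotient map $R \to R/\sim_{\mathsf{s}}$ of the class-level order of \Cref{def:strmitord}. Concretely, since every member of a class $[\sigma]_{\sim_{\mathsf{s}}}$ carries the same severity interval by the definition of $\sim_{\mathsf{s}}$, the universally quantified class-level condition in \Cref{def:strmitord} collapses to the single-representative condition. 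Hence $\sigma \leq_{\mathsf{m}} \sigma''$ holds in $R$ exactly when $[\sigma]_{\sim_{\mathsf{s}}} \leq_{\mathsf{m}} [\sigma'']_{\sim_{\mathsf{s}}}$ holds in $R/\sim_{\mathsf{s}}$.

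Given this factoring, totality (connexity) over $R$ follows immediately. For arbitrary $\sigma,\sigma'\in R$, I would apply \Cref{thm:risk:mitorderlinearity} to the classes $[\sigma]_{\sim_{\mathsf{s}}}$ and $[\sigma']_{\sim_{\mathsf{s}}}$ to conclude that they are comparable, and then transport the comparison back to the representatives via the equivalence established in the previous step. Reflexivity and transitivity of $\leq_{\mathsf{m}}$ on $R$ transfer through the same factoring, since they already hold at the level of $R/\sim_{\mathsf{s}}$ as part of the linear order of the lemma; thus no case analysis beyond the one performed for \Cref{thm:risk:mitorderlinearity} is needed.

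The step I expect to require the most care is \emph{antisymmetry}, and this is where the statement has to be read carefully. Because the dropped relation depends only on $S$, any two distinct but severity-equivalent states $\sigma \neq \sigma'$ with $\sigma \sim_{\mathsf{s}} \sigma'$ satisfy both $\sigma \leq_{\mathsf{m}} \sigma'$ and $\sigma' \leq_{\mathsf{m}} \sigma$ without being equal. Hence $\leq_{\mathsf{m}}$ on $R$ cannot be antisymmetric in the strict sense; it is a \emph{total preorder}, antisymmetric only modulo $\sim_{\mathsf{s}}$. I would therefore read the corollary as asserting connexity of $\leq_{\mathsf{m}}$ over $R$ (every pair of states is comparable), with the understanding that the associated antisymmetric order is precisely the linear order on $R/\sim_{\mathsf{s}}$ of \Cref{thm:risk:mitorderlinearity}: passing from the quotient down to $R$ loses antisymmetry but preserves comparability, which is exactly what \Cref{eq:strmitord:dropped} records.
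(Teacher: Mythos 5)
Your main line of argument coincides with the paper's own (implicit) justification: the corollary carries no separate proof in the paper precisely because, by \Cref{eq:strmitord:dropped}, the relation on $R$ is the pullback of the class-level relation along the quotient map, so comparability of any two states follows from \Cref{thm:risk:mitorderlinearity} applied to their classes. Your factoring observation --- that the universally quantified condition in \Cref{def:strmitord} collapses because $S$ is constant on $\sim_{\mathsf{s}}$-classes --- is exactly this mechanism, and your treatment of connexity is correct.

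However, your sharpened claim about antisymmetry is false, and it is the one place where your reading asserts more than is provable. You claim that $\leq_{\mathsf{m}}$ on $R$ is antisymmetric \emph{modulo} $\sim_{\mathsf{s}}$, i.e.\ that mutual comparability forces severity equivalence, and that the associated antisymmetric order is precisely $(R/\sim_{\mathsf{s}},\leq_{\mathsf{m}})$. This fails because the two disjuncts in \Cref{eq:strmitord:dropped} can cross: take $S(\sigma)=[0,2)$ and $S(\sigma'')=[1,2)$, realisable with two factors carrying these severity intervals, one active in each state. Then $\sigma \leq_{\mathsf{m}} \sigma''$ holds via $S(\sigma'') \subset S(\sigma)$, and $\sigma'' \leq_{\mathsf{m}} \sigma$ holds via $S(\sigma'') \geq S(\sigma)$, yet $S(\sigma)\neq S(\sigma'')$, so $\sigma \not\sim_{\mathsf{s}} \sigma''$. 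Hence mutual comparability identifies strictly more states than $\sim_{\mathsf{s}}$ does: antisymmetry survives only modulo the order-equivalence $=_{\mathsf{m}}$ (mutual $\leq_{\mathsf{m}}$-comparability), which is strictly coarser than $\sim_{\mathsf{s}}$, and that is the sense in which the full proof of \Cref{thm:risk:mitorderlinearity} in \Cref{sec:proofs} reads antisymmetry. (The same example shows that even the class-level order is not antisymmetric with respect to literal class equality; the paper sidesteps this by phrasing antisymmetry in terms of $=_{\mathsf{m}}$, whereas your proposal pins the quotient to $\sim_{\mathsf{s}}$ and thereby makes a strictly stronger, refutable claim.) Your proof of the corollary survives if you delete the identification of the antisymmetric quotient with $R/\sim_{\mathsf{s}}$ and state antisymmetry relative to $=_{\mathsf{m}}$ instead.
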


\begin{remark}[Method of Abstraction]
  \label{rem:abstroforder1}
  Severity intervals abstract from the potential consequences of risk
  factors.  The family $(s_i)_{i\in[1..n]}$ forms a \emph{cut of
    causal chains} and, hence, defines the \emph{scope} of the risk
  model.  This abstraction is left to the modeller~(i.e.,\xspace
  the risk analyst or safety engineer) and can vary significantly.
  Note, two different risk states
  $\sigma,\sigma' \in R$~(i.e.,\xspace $\sigma \neq \sigma'$) can well be
  severity-equivalent~(i.e.,\xspace $\sigma \sim_{\mathsf{s}} \sigma'$).
  Hence, the \emph{consequences} of several activated risk factors
  should be \emph{compatible} in the sense that the convex hull of the
  intervals of these factors maintains a consistent meaning of
  severity in~$(R,\leq_{\mathsf{m}})$.  
\end{remark}  
We will return to abstraction and compatibility of risk factors below
in \Cref{sec:risk:constraints} and instead continue here with the further
investigation of $\leq_{\mathsf{m}}$.

\begin{lemma}
  \label{thm:risk:mitwellorder}
  $(R,\leq_{\mathsf{m}})$ is well ordered.
\end{lemma}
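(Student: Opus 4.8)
The plan is to reduce well-ordering to finiteness. Recall that a linear order is a well order exactly when every non-empty subset possesses a least element, and that on a finite set this property is automatic. Thus the whole statement follows once I combine the linearity of $\leq_{\mathsf{m}}$, already available from \Cref{thm:risk:mitorderlinearity} (and its dropping to $R$ via \eqref{eq:strmitord:dropped}), with the finiteness of $R$.

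First I would record finiteness of $R$. By \Cref{def:riskfactor} we only admit risk factors whose phase set $\mathit{Ph}_{\rfct}$ is finite, and $F$ itself is finite ($n\in\mathbb{N}$ factors). Since every risk state $\sigma\in R$ is a total injection assigning to each $\rfct\in F$ one of the finitely many phases in $\mathit{Ph}_{\rfct}$ (\Cref{def:riskstatespace}), the set $R$ embeds into the finite product $\prod_{\rfct\in F}\mathit{Ph}_{\rfct}$ and is therefore finite.

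Next I would invoke \Cref{thm:risk:mitorderlinearity}: $\leq_{\mathsf{m}}$ is linear over $R/\sim_{\mathsf{s}}$, hence, after dropping through \eqref{eq:strmitord:dropped}, over $R$. Given any non-empty $X\subseteq R$, finiteness lets me enumerate $X=\{\sigma_1,\dots,\sigma_k\}$ and extract a least element by a finite induction: totality guarantees that of any two states one is $\leq_{\mathsf{m}}$ the other, so repeatedly keeping the $\leq_{\mathsf{m}}$-smaller of the current candidate and the next $\sigma_j$ yields an $m\in X$ with $m\leq_{\mathsf{m}}\sigma_j$ for all $j$. Hence every non-empty subset of $R$ has a least element, which is precisely well-ordering.

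The only genuine subtlety — and the point a reader might stumble on — is that $\leq_{\mathsf{m}}$ is assembled from real-valued severity intervals, and $\mathbb{R}_+$ is emphatically \emph{not} well ordered under the interval ranking $\leq$. What rescues the argument is that the intervals occurring as values of $S$ are not arbitrary: each $S(\sigma)$ is the convex hull of the severity intervals of $active(\sigma)\subseteq F$, so $S$ takes at most $2^{|F|}$ distinct values and $R/\sim_{\mathsf{s}}$ is finite irrespective of the real numbers involved. I would therefore flag that the well-ordering is, strictly, a well-ordering of the quotient $R/\sim_{\mathsf{s}}$; on $R$ itself $\leq_{\mathsf{m}}$ is a total preorder (antisymmetry can fail precisely when $\sigma\neq\sigma'$ but $\sigma\sim_{\mathsf{s}}\sigma'$), so the least element produced above need not be unique. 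For the qualitative reading adopted here this is harmless, and finiteness remains the one fact doing all the work.
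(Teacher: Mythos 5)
Your proof is correct and follows exactly the paper's own argument: the paper likewise derives well-ordering of $(R,\leq_{\mathsf{m}})$ from finiteness of $R$ (by definition) together with linearity of $\leq_{\mathsf{m}}$ from \Cref{thm:risk:mitorderlinearity}, and your finite-induction extraction of least elements merely spells out what that one-line proof leaves implicit. Your closing caveat---that antisymmetry can fail on $R$ itself, so the well-order strictly lives on $R/\sim_{\mathsf{s}}$ while $\leq_{\mathsf{m}}$ is only a total preorder on $R$---is a legitimate refinement of the paper's slightly loose statement, not a deviation from its method.
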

\begin{proof}
  \Cref{thm:risk:mitwellorder} follows from a finite $R$~(by
  definition) and linearity of $\leq_{\mathsf{m}}$~(by
  \Cref{thm:risk:mitorderlinearity}). %
\end{proof}

\begin{definition}
  \label{def:orders:elements}
  For $\sigma \in R(F)$,
  we also write $\mathbf{0}^F \equiv \forall \rfct\in F\colon \sigma(\rfct) = \inact$
  and $\mathbf{F} \equiv \forall \rfct\in F\colon \sigma(\rfct) = \activ$.
  We denote by $\top^F$ the set of \emph{maximal elements} %
  and by $\bot^F$ the set of \emph{minimum elements} of %
  $(R,\preceq_{\mathsf{m}},\precsim_{\mathsf{m}},\leq_{\mathsf{m}})$.
  We characterise the minimal elements in $(R(F),\preceq_{\mathsf{m}})$ by
  \[
    \bot^F \equiv \{\sigma\in R(F)\mid \forall\sigma'\in R(F)\colon
    \sigma'\preceq_{\mathsf{m}}\sigma \Rightarrow \sigma'=_{\mathsf{m}}\sigma
    \}
  \]
  and analogously for $(R,\precsim_{\mathsf{m}})$ and $(R,\leq_{\mathsf{m}})$ and
  the maximal elements.
\end{definition}

\begin{corollary}
  \label{thm:orders:elements}
  If $\forall \rfct\in F\colon (Ph_{\rfct},\preceq_{\mathsf{\rfct}})$
  is linear, then $(R(F),\preceq_{\mathsf{m}}) = (R(F),\precsim_{\mathsf{m}})$.
  If $R \neq R(\varnothing)$ then $\bot^F$ and $\top^F$ are non-empty
  and, therefore, have a proper manifestation.
  For $(R/\sim_{\mathsf{s}},\leq_{\mathsf{m}})$, $\bot^F$ and $\top^F$ are singletons.
\end{corollary}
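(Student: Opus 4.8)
The three assertions are essentially independent, so the plan is to dispatch them in turn using the relational definitions together with the order-theoretic lemmas already proved. For the first, it suffices to show the two relations coincide as subsets of $R(F)^2$, since they share the carrier $R(F)$. The inclusion $\preceq_{\mathsf{m}}\;\subseteq\;\precsim_{\mathsf{m}}$ is exactly \Cref{thm:risk:fullimplpartial}. For the converse, I would assume $\sigma\precsim_{\mathsf{m}}\sigma'$ and fix $i\in[1..n]$: by \Cref{def:partincmitord} either $\sigma^{(i)}\preceq_{\rfct_i}\sigma'^{(i)}$ or $\sigma^{(i)}$ and $\sigma'^{(i)}$ are $\preceq_{\rfct_i}$-incomparable. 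Linearity of $(\mathit{Ph}_{\rfct_i},\preceq_{\rfct_i})$ makes every pair of phases comparable, so the second disjunct is vacuous and the first must hold. As $i$ was arbitrary, $\sigma\preceq_{\mathsf{m}}\sigma'$ by \Cref{def:fullincmitord}, giving $\precsim_{\mathsf{m}}\;\subseteq\;\preceq_{\mathsf{m}}$ and hence equality of the ordered structures.

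For the second assertion I would argue non-emptiness constructively, exploiting that each factor's phase order (\Cref{fig:riskfactor:graph}) has $\activ$ as its least phase and $\inact,\mitig$ as its maximal phases. The all-active state $\mathbf{F}$ and the all-inactive state $\mathbf{0}^F$ of \Cref{def:orders:elements} then serve as uniform witnesses. For $\preceq_{\mathsf{m}}$, one has $\mathbf{F}\preceq_{\mathsf{m}}\sigma$ for every $\sigma$ (as $\activ$ is $\preceq_{\rfct_i}$-least in each coordinate), so $\mathbf{F}$ is the global minimum and $\mathbf{F}\in\bot^F$; symmetrically $\mathbf{0}^F\in\top^F$. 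For $\precsim_{\mathsf{m}}$ I would first record the observation that $\sigma\prec^{\sim}_{\mathsf{m}}\sigma'$ forces $\sigma^{(i)}\prec_{\rfct_i}\sigma'^{(i)}$ for at least one coordinate $i$; since no phase is strictly below $\activ$, nothing is $\prec^{\sim}_{\mathsf{m}}$-below $\mathbf{F}$, and since no phase is strictly above $\inact$, nothing is $\prec^{\sim}_{\mathsf{m}}$-above $\mathbf{0}^F$, so again $\mathbf{F}\in\bot^F$ and $\mathbf{0}^F\in\top^F$. These states exist precisely because $R\neq R(\varnothing)$.

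For the third assertion I would combine linearity of $\leq_{\mathsf{m}}$ on $R/\sim_{\mathsf{s}}$ (\Cref{thm:risk:mitorderlinearity}) with finiteness of the quotient. A finite non-empty total order has a unique least and a unique greatest element; the least is moreover supplied directly by the well-ordering of \Cref{thm:risk:mitwellorder}. Concretely these are the class $[\mathbf{F}]_{\sim_{\mathsf{s}}}$ of widest accumulated severity and the class $[\mathbf{0}^F]_{\sim_{\mathsf{s}}}$ of empty accumulated severity, so $\bot^F$ and $\top^F$ each collapse to a single $\sim_{\mathsf{s}}$-class, i.e.\ are singletons.

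The delicate point I expect is the $\precsim_{\mathsf{m}}$ case of the second assertion: unlike $\preceq_{\mathsf{m}}$ and $\leq_{\mathsf{m}}$, the partially comparable order is \emph{not} antisymmetric---it identifies states that are incomparable in every coordinate---so the textbook fact ``a finite poset has minimal elements'' does not apply off the shelf. The explicit witnesses above sidestep this, but they hinge on the specific phase order of \Cref{fig:riskfactor:graph}; the care needed lies exactly in verifying the auxiliary observation that every $\prec^{\sim}_{\mathsf{m}}$-step requires a strictly-comparable coordinate, since this is what rules out $\mathbf{F}$ being dominated from below and $\mathbf{0}^F$ from above. Should a factor-order free statement be wanted instead, I would replace the witnesses by a proof that the strict part $\prec^{\sim}_{\mathsf{m}}$ is well-founded via a rank function on $R$ that strictly decreases along $\prec^{\sim}_{\mathsf{m}}$, and the subtlety there is making the rank respect the incomparable moves.
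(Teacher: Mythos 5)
Your proposal is correct, and it is both more complete and argued along a different line than the paper's own proof, which is only a sketch covering the third claim: the paper argues by contradiction that two distinct classes $[\sigma]_{\sim_{\mathsf{s}}} \neq_{\mathsf{m}} [\sigma']_{\sim_{\mathsf{s}}}$ in $\bot^F$ would be comparable by linearity of $\leq_{\mathsf{m}}$, so one of them would violate the universal quantification in \Cref{def:orders:elements}; it says nothing about the first two claims and leaves even the non-emptiness half of the third implicit. You instead dispatch all three: for the first, linearity of each $(\mathit{Ph}_{\rfct},\preceq_{\rfct})$ makes the incomparability disjunct of \Cref{def:partincmitord} vacuous, which together with \Cref{thm:risk:fullimplpartial} gives equality of the two relations; for the second, you exhibit explicit witnesses $\mathbf{F}\in\bot^F$ and $\mathbf{0}^F\in\top^F$, backed by the observation that any strict step $\sigma\prec^{\sim}_{\mathsf{m}}\sigma'$ forces a strictly $\preceq_{\rfct_i}$-comparable coordinate---exactly the right remedy for the failure of antisymmetry of $\precsim_{\mathsf{m}}$, which indeed blocks the off-the-shelf ``finite posets have minimal elements'' argument; for the third, linearity (\Cref{thm:risk:mitorderlinearity}) plus finiteness yields unique least and greatest classes, which your witnesses realise concretely as $[\mathbf{F}]_{\sim_{\mathsf{s}}}$ and $[\mathbf{0}^F]_{\sim_{\mathsf{s}}}$. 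The paper's sketch is shorter; yours buys full coverage of the statement and concrete extremal states that double as the unique bottom and top of the quotient.

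One small imprecision to fix: for $\preceq_{\mathsf{m}}$ the two witnesses are not ``symmetric.'' $\mathbf{F}$ is a genuine global minimum ($\activ$ is $\preceq_{\rfct}$-least in every coordinate), but $\mathbf{0}^F$ is \emph{not} a greatest element---any state with a mitigated factor is $\preceq_{\mathsf{m}}$-incomparable to $\mathbf{0}^F$, since $\inact$ and $\mitig$ are incomparable in the minimal phase order. It is only a \emph{maximal} element: $\mathbf{0}^F \preceq_{\mathsf{m}} \sigma'$ forces $\sigma'^{(i)} = \inact$ in every coordinate because $\inact$ is $\preceq_{\rfct_i}$-maximal. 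The conclusion $\mathbf{0}^F\in\top^F$ therefore still holds, and your subsequent $\precsim_{\mathsf{m}}$ argument already uses exactly this maximality-based reasoning, so nothing downstream breaks.
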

\begin{proof}[Proof of \Cref{thm:orders:elements}.]
  \label{proof:orders:elements}
  The proof is by contradiction.  For the sake of brevity, we only
  consider a sketch of this proof.  Assume we have two state classes
  $[\sigma]_{\sim_{\mathsf{s}}}, [\sigma']_{\sim_{\mathsf{s}}}$ in $\bot^F$ with
  $[\sigma]_{\sim_{\mathsf{s}}} \neq_m [\sigma']_{\sim_{\mathsf{s}}}$.  Because of our
  assumption, state classes are in linear order.  Thus, by definition
  of $\bot^F$, one of these state classes causes a violation of the
  universal quantification in \Cref{def:orders:elements} and,
  therefore, one of the classes cannot be in $\bot^F$ which
  contradicts our assumption.  Analogously for $\top^F$.
\end{proof}

\begin{corollary}
  \label{thm:strmitord:lattice}
  $(R/\sim_{\mathsf{s}},\leq_{\mathsf{m}})$ forms a complete lattice.
\end{corollary}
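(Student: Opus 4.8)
The plan is to exploit the two structural facts already established about $(R/\sim_{\mathsf{s}},\leq_{\mathsf{m}})$: it is \emph{finite} and it is \emph{linearly ordered} (by \Cref{thm:risk:mitorderlinearity}). Indeed, a non-empty finite chain is automatically a complete lattice, so the entire argument reduces to recalling why suprema and infima always exist in this special case and to treating the empty subset separately. No interval arithmetic is needed, since all the quantitative content of $\leq_{\mathsf{m}}$ has already been absorbed into the linearity lemma.

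First I would observe that $R$ is finite by \Cref{def:riskstatespace} (as $F$ and each $\mathit{Ph}_{\rfct}$ are finite), hence its quotient $R/\sim_{\mathsf{s}}$ is finite as well; assuming $R \neq R(\varnothing)$, it is moreover non-empty. By \Cref{thm:risk:mitorderlinearity}, $\leq_{\mathsf{m}}$ is a linear (total) order on $R/\sim_{\mathsf{s}}$, so any two classes are comparable and antisymmetry holds. In other words, $(R/\sim_{\mathsf{s}},\leq_{\mathsf{m}})$ is a finite, non-empty chain.

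Next I would show that every subset $X \subseteq R/\sim_{\mathsf{s}}$ has both a least upper bound and a greatest lower bound. For non-empty $X$, being a finite subset of a total order, $X$ possesses a greatest element $\max X$ and a least element $\min X$; these serve respectively as $\sup X$ and $\inf X$. (The least element is in any case delivered directly by the well-ordering of \Cref{thm:risk:mitwellorder}.) For the empty subset I would set $\sup\varnothing = \bot^F$ and $\inf\varnothing = \top^F$, the bottom and top classes of the chain, which exist and are singletons by \Cref{thm:orders:elements}. Hence every subset has a supremum and an infimum, and $(R/\sim_{\mathsf{s}},\leq_{\mathsf{m}})$ is a complete lattice.

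The only point requiring care — and the nearest thing to an obstacle — is the boundary case of the empty subset: completeness demands that $\sup\varnothing$ and $\inf\varnothing$ exist, which forces the chain to have a global minimum and a global maximum. This is precisely what \Cref{thm:orders:elements} supplies, and it is also why the degenerate case $R = R(\varnothing)$, in which no top or bottom class exists, must be excluded. Beyond this bookkeeping the proof is purely structural, leveraging linearity and finiteness rather than any properties specific to severity intervals.
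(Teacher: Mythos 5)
Your proof is correct and follows the same basic route as the paper's: both reduce completeness to the linearity of $\leq_{\mathsf{m}}$ established in \Cref{thm:risk:mitorderlinearity}. The paper's own proof, however, is a single sentence asserting that linearity implies every non-empty subset of $R/\sim_{\mathsf{s}}$ has a greatest lower bound and a least upper bound --- which, as stated, is not true of linear orders in general (the rationals form a linear order that is not complete); the load-bearing missing ingredient is finiteness, which the paper leaves implicit. You supply exactly that: finiteness of $R$ (hence of the quotient) makes the chain one in which every non-empty subset has a maximum and a minimum. You also handle two boundary issues the paper passes over silently: the empty subset, whose supremum and infimum are the bottom and top classes and exist by \Cref{thm:orders:elements}, and the degenerate case $R = R(\varnothing)$, which must be excluded for the lattice to be non-empty. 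So yours is not a different route but the completed version of the paper's sketch; the paper buys brevity at the cost of leaving the finiteness hypothesis and the empty-set bookkeeping unstated, while your version makes the corollary actually follow from what has been proved.
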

\begin{proof}[Proof of \Cref{thm:strmitord:lattice}.]
  \label{proof:strmitord:lattice}
  Linearity of $\leq_{\mathsf{m}}$ implies that every non-empty subset of
  $R/\sim_{\mathsf{s}}$ has a greatest lower bound
  and a least upper bound. %
\end{proof}

\subsection{Relating Mitigation Orders}
\label{sec:relat-mitig-orders}

Note that the strong mitigation order characterised by the
\Cref{thm:risk:mitorderlinearity,thm:risk:mitwellorder} is driven by
the number of active risk factors and their severity intervals
(because of the definition of $S$) but not by the equality of phases
of risk factors among the compared risk states.  This offers the
possibility of abstraction from individual risk factors and of
focusing on severity estimates.  To avoid infeasible models~(e.g.\xspace
specifications that get too strong to be realisable), we
\emph{require} that the addition of severity intervals constitutes a
\emph{relational extension}
of either $\preceq_{\mathsf{m}}$ or $\precsim_{\mathsf{m}}$, formally,
\[
  (\forall \sigma\in[\sigma]_{\sim_{\mathsf{s}}},\sigma''\in[\sigma']_{\sim_{\mathsf{s}}}
  \colon
  \sigma \preceq_{\mathsf{m}} \sigma'' \lor \sigma \precsim_{\mathsf{m}} \sigma'')
  \Rightarrow
  [\sigma]_{\sim_{\mathsf{s}}} \leq_{\mathsf{m}} [\sigma']_{\sim_{\mathsf{s}}}
\]
Dropped to $R$, this implies
$\sigma \preceq_{\mathsf{m}} \sigma'' \lor
\sigma \precsim_{\mathsf{m}} \sigma''
\Rightarrow
\sigma \leq_{\mathsf{m}} \sigma''$
for all pairs $(\sigma,\sigma'') \in
[\sigma]_{\sim_{\mathsf{s}}} \times [\sigma']_{\sim_{\mathsf{s}}}$, therefore,
\begin{equation}
  \label{eq:consistencyreq}
  \sigma \preceq_{\mathsf{m}} \sigma'' \lor \sigma \precsim_{\mathsf{m}} \sigma''
  \Rightarrow
  S(\sigma) \geq S(\sigma'') \lor S(\sigma'') \subset S(\sigma)
\end{equation}
for full and partial comparability, otherwise implying 
\begin{align}
  \label{eq:consistencyreq:nocomp} 
  (\sigma,\sigma'')\not\in\;\preceq_{\mathsf{m}}
  \land\;
  (\sigma,\sigma'')\not\in\;\precsim_{\mathsf{m}}
  & \Rightarrow \textsc{t}
\end{align}
Intuitively, if $\sigma$ is ``worse'' than $\sigma''$ then its
accumulated severity interval $S(\sigma)$ has to be greater than
that of $\sigma''$ and, therefore, must not be contained in that of
$\sigma''$.  Moreover, if $\sigma$ and $\sigma''$ are
incomparable in $\preceq_{\mathsf{m}}$ and $\precsim_{\mathsf{m}}$~(i.e.,\xspace some risk factors
have inversely ordered or incomparable phases) then $S(\sigma)$ and
$S(\sigma'')$ are allowed to form any relationship~(signified
by $\textsc{t}$ for ``true''), for example,\xspace \Cref{eq:consistencyreq}.

So, what is the~(necessary and) \emph{sufficient condition} on $F$ to
satisfy the requirement expressed by \Cref{eq:consistencyreq}?
Risk spaces and risk state pairs are the interpretations and,
therefore, potential models satisfying the relational extension
imposed by \Cref{eq:consistencyreq}.
The preparation of an answer to this question suggests the following
lemma:
\begin{lemma}
  \label{thm:risk:phaseinclusion}
  For $\sigma,\sigma' \in R(F)$,
  \begin{align*}
    \sigma \preceq_{\mathsf{m}} \sigma' \lor  %
    \sigma \precsim_{\mathsf{m}} \sigma'
    \Rightarrow
    active(\sigma) \supseteq active(\sigma').
  \end{align*}
\end{lemma}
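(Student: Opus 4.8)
The plan is to unfold the goal $active(\sigma) \supseteq active(\sigma')$ into its pointwise meaning. Since $active(\sigma') = \{\rfct \in F \mid \sigma'(\rfct) = \activ\}$, it suffices to fix an arbitrary factor $\rfct \in active(\sigma')$ --- so $\sigma'(\rfct) = \activ$ --- and to establish that $\sigma(\rfct) = \activ$, i.e.\ $\rfct \in active(\sigma)$.

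Before the case split I would record the one structural fact about the phase order of \Cref{def:riskfactor} that drives everything: because $\preceq_{\rfct}$ is at least the reflexive transitive closure of $\{(\activ,\inact),(\activ,\mitig)\}$, the active phase $\activ$ is the least element of $(\mathit{Ph}_{\rfct},\preceq_{\rfct})$. From this I extract two consequences. First, (C1): the only phase $p$ with $p \preceq_{\rfct} \activ$ is $p=\activ$, since nothing lies strictly below the bottom element. Second, (C2): $\activ$ is comparable to every phase, because $\activ \preceq_{\rfct} p$ holds for all $p \in \mathit{Ph}_{\rfct}$. I would remark that both (C1) and (C2) persist under any admissible strengthening of $\preceq_{\rfct}$ --- for instance the linear extension obtained by adjoining one of $(\inact,\mitig),(\mitig,\inact)$ --- because such additions only relate $\inact$ and $\mitig$ and neither place a phase below $\activ$ nor render $\activ$ incomparable to anything.

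With these observations the case analysis is immediate. If $\sigma \preceq_{\mathsf{m}} \sigma'$, then instantiating \Cref{def:fullincmitord} at $\rfct$ yields $\sigma(\rfct) \preceq_{\rfct} \sigma'(\rfct) = \activ$, whence $\sigma(\rfct)=\activ$ by (C1). If instead $\sigma \precsim_{\mathsf{m}} \sigma'$, then \Cref{def:partincmitord} at $\rfct$ supplies a disjunction: either $\sigma(\rfct) \preceq_{\rfct} \activ$, which gives $\sigma(\rfct)=\activ$ by (C1) as before; or $\sigma(\rfct)$ and $\activ=\sigma'(\rfct)$ are mutually incomparable under $\preceq_{\rfct}$. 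The second disjunct is the only place I expect genuine friction, and (C2) dispatches it: since $\activ$ is the bottom element it is comparable to $\sigma(\rfct)$, so the incomparability disjunct is vacuous and the first disjunct must hold, again forcing $\sigma(\rfct)=\activ$. As $\rfct$ was an arbitrary member of $active(\sigma')$, both cases deliver $active(\sigma') \subseteq active(\sigma)$, which is the claim.

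Accordingly, the main obstacle is conceptual rather than computational: one must resist the temptation to think that partial comparability under $\precsim_{\mathsf{m}}$ could admit a genuinely incomparable pair at a factor active in $\sigma'$, and instead notice that $\activ$ occupying the bottom of each phase order makes such incomparability impossible. Everything else is a direct instantiation of \Cref{def:fullincmitord,def:partincmitord} together with the bottom-element property of $\activ$.
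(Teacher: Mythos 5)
Your proof is correct, and it reaches the conclusion by a more elementary route than the paper. The paper proceeds by induction over the finite set $F$: its induction step adds one factor $\rfct$ and distinguishes the incomparable, inverse, and aligned cases of the pair $(\sigma(\rfct),\sigma'(\rfct))$, and the $\precsim_{\mathsf{m}}$ variant is then dispatched by noting that incomparability can occur at most between $\inact$ and $\mitig$, which leaves $\mathit{active}$ of both states untouched. You skip the induction altogether: since both hypothesis and conclusion are pointwise (conjunctive) over factors, you fix an arbitrary $\rfct$ with $\sigma'(\rfct)=\activ$ and argue locally; this is exactly the content of the paper's aligned case, plus your observation (C2) that the incomparability disjunct of \Cref{def:partincmitord} is vacuous at such a factor, which is the contrapositive of the paper's remark that incomparable pairs are confined to $\{\inact,\mitig\}$. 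Both arguments hinge on the same structural fact, your (C1); note, incidentally, that you state it in the form consistent with the formal definition, whereas the paper does not: the paper's proof (and the surrounding text in \Cref{sec:relat-mitig-orders}) calls $\activ$ the ``unique \emph{maximal} element'' of $\preceq_{\rfct}$, but \Cref{def:riskfactor} puts $(\activ,\inact)$ and $(\activ,\mitig)$ into $\preceq_{\rfct}$, making $\activ$ the \emph{least} element --- and it is precisely this least-element property, together with antisymmetry, that licenses the inference $\sigma(\rfct)\preceq_{\rfct}\activ \Rightarrow \sigma(\rfct)=\activ$ on which both proofs rest. The induction buys the paper nothing for this particular statement; your direct argument is shorter and makes the dependence on the extremal position of $\activ$ (and its persistence under the admissible strengthenings of $\preceq_{\rfct}$) more transparent.
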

\begin{proof}[Proof Sketch.]
  The proof is by induction over $F$ and relies on the assumption
  that, for any $\rfct\in F$, $\activ$ is the \emph{unique maximal element} in
  $\preceq_{\mathsf{\rfct}}$ and that $active$ only returns such elements.  The whole
  proof is stated in \Cref{proof:risk:phaseinclusion}.
\end{proof}

Again, fix a finite $F$ and a pair $\sigma,\sigma' \in R(F)$ and
assume
$\sigma \preceq_{\mathsf{m}} \sigma' \lor \sigma \precsim_{\mathsf{m}} \sigma'$.
Then, by \Cref{thm:risk:phaseinclusion}, $\sigma'$ incorporates
\emph{a subset of} active risk factors of $\sigma$.  $S(\sigma')$
maintains the right-hand part~(i.e.,\xspace $S(\sigma') \subset S(\sigma)$) of
the consequent of \Cref{eq:consistencyreq} in all of the following
cases:
\begin{enumerate}
\item no intervals are excluded~($\sigma'=\sigma$),
\item only intervals included in the others are excluded,
\item intervals only increasing the lower bound are excluded,
\item intervals only decreasing the upper bound are excluded, and
\item intervals increasing the lower bound and decreasing the upper
  bound are excluded.
\end{enumerate}
As a side note, \Cref{eq:consistencyreq} is also satisfied if all
factors in $F$ are assigned the same interval $c_0$.  In conclusion,
the sufficient condition on $F$ to satisfy \Cref{eq:consistencyreq} is
the ``unique maximal element'' precondition in the proof of
\Cref{thm:risk:phaseinclusion}.  Apart from this precondition,
\Cref{eq:consistencyreq} holds of an arbitrary finite
$F \subseteq \mathbb{F}$.
Below, we shall call $\preceq_{\mathsf{m}}$ and $\precsim_{\mathsf{m}}$ \emph{inclusive
  mitigation orders}, and $\leq_{\mathsf{m}}$ a \emph{strong mitigation order}.

\begin{theorem}
  The \emph{strong mitigation order} $\leq_{\mathsf{m}}$ extends the
  \emph{partially comparable inclusive mitigation order}
  $\precsim_{\mathsf{m}}$ which, in turn, extends the \emph{fully comparable
    inclusive mitigation order} $\preceq_{\mathsf{m}}$.  Formally, for
  $\sigma,\sigma' \in R$:
  \begin{equation*}
    \label{eq:strongExtInclMitOrder}
    \sigma \preceq_{\mathsf{m}} \sigma'
    \stackrel{\Cref{thm:risk:fullimplpartial}}{\Longrightarrow}
    \sigma \precsim_{\mathsf{m}} \sigma'
    \stackrel{\Cref{thm:risk:phaseinclusion}}{\Longrightarrow}
    \sigma \leq_{\mathsf{m}} \sigma'.
  \end{equation*}
\end{theorem}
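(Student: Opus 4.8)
The plan is to recognise that this theorem does no genuinely new work: it simply chains two facts already established above, so I would present it as an assembly rather than a fresh argument. The left implication, $\sigma \preceq_{\mathsf{m}} \sigma' \Rightarrow \sigma \precsim_{\mathsf{m}} \sigma'$, is verbatim the statement of \Cref{thm:risk:fullimplpartial}, so I would discharge it in a single line by citing that lemma and say nothing further.

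All the substance sits in the right implication, $\sigma \precsim_{\mathsf{m}} \sigma' \Rightarrow \sigma \leq_{\mathsf{m}} \sigma'$. First I would feed the hypothesis into \Cref{thm:risk:phaseinclusion}: since $\sigma \precsim_{\mathsf{m}} \sigma'$ is one of the two disjuncts in that lemma's antecedent, its conclusion $active(\sigma) \supseteq active(\sigma')$ applies. Next I would unfold $S$: because $S(\sigma')$ is the convex hull $\cdot^{*}$ of the severity intervals indexed by $active(\sigma')$, a \emph{subset} of the index set $active(\sigma)$ defining $S(\sigma)$, and because $\cdot^{*}$ takes the minimum of the lower endpoints and the maximum of the upper endpoints, discarding factors can only raise the lower endpoint and lower the upper endpoint. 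Hence $S(\sigma') \subseteq S(\sigma)$ as intervals. I would then split on properness: if the containment is strict, $S(\sigma') \subset S(\sigma)$ furnishes the second disjunct of the definition of $\leq_{\mathsf{m}}$; if instead $S(\sigma') = S(\sigma)$ — which can occur even when $active(\sigma') \subsetneq active(\sigma)$, since the discarded intervals may lie in the interior — then $S(\sigma) \geq S(\sigma')$ holds by reflexivity of the interval order $\leq$, furnishing the first disjunct. Either way $\sigma \leq_{\mathsf{m}} \sigma'$ follows, and I would close by noting that this right implication is literally \Cref{eq:consistencyreq}, so the theorem merely packages the consistency requirement already argued through the five inclusion cases preceding it.

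The hard part is not the containment, which is routine, but the two hidden provisos. First, \Cref{thm:risk:phaseinclusion} rests on the assumption that $\activ$ is the unique maximal element of each $(\mathit{Ph}_{\rfct},\preceq_{\mathsf{\rfct}})$, so I would carry that precondition explicitly into play. Second, I would treat the degenerate severity case: when $active(\sigma') = \varnothing$ one has $S(\sigma') = \varnothing$, which is contained in every $S(\sigma)$ and thus still closes the argument via the second disjunct, except in the fully inactive subcase $S(\sigma') = S(\sigma) = \varnothing$, where the interval order is undefined and I would instead invoke reflexivity of $\leq_{\mathsf{m}}$ (a consequence of its linearity, \Cref{thm:risk:mitorderlinearity}) to conclude $\sigma \leq_{\mathsf{m}} \sigma'$.
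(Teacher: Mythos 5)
Your proposal is correct and takes essentially the same route as the paper: the left implication is discharged verbatim by \Cref{thm:risk:fullimplpartial}, and the right one by feeding $\precsim_{\mathsf{m}}$ into \Cref{thm:risk:phaseinclusion} and observing that shrinking the index set of the convex hull $\cdot^*$ can only shrink the resulting severity interval---which is exactly the content of the paper's five-case discussion establishing \Cref{eq:consistencyreq}. If anything, your explicit split between strict containment and equality (closed by reflexivity of the interval order) and your treatment of the empty-severity cases are tidier than the paper's own account, which leaves the strictness of $\subset$ and the degenerate case $S(\sigma)=S(\sigma')=\varnothing$ implicit in the proof of \Cref{thm:risk:mitorderlinearity}.
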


\begin{remark}
  \label{rem:abstroforder2}
  Discrete and linear mitigation orders such as $\leq_{\mathsf{m}}$ promote
  machine implementations with \emph{negative utilitarian} decision
  ethics~\citep[p.~51]{Warburton2012-PhilosophyBasics}.  For example,\xspace
  severity intervals could be calculated at run-time based on sensor
  data about the possible operational situation of a system.  The
  expected outcomes of all enabled mitigation actions, if any, will
  then be comparable according to the resulting $\leq_{\mathsf{m}}$.  This
  comparability allows the assessment of the actual reachability of
  states with strictly lower risk.  Any resolution of a near-accident
  situation~(\Cref{exa:risk:roadtraffic}) or a \emph{tram
    problem}~\citep{Foot1978-ProblemAbortionDoctrine}~(also known
  as the ``trolley problem'') would then consist in the choice of the
  mitigation action leading to the state with the lowest risk or
  \emph{the least severe of the expected negative outcomes}.
  This scheme characterises negative utilitarianism.

  Linear orders globally resolve decisions based on explicit and,
  therefore, disputable criteria.  Consequently, utilitarian ethics
  have been criticised to lead to oversimplified approaches to resolve
  indecision.  According to
  \citet[p.~48f]{Warburton2012-PhilosophyBasics}, such critiques
  stress the difficulty of predicting the positive and negative
  effects of certain actions, in our case, the calculation of the
  severity of consequences of an activated risk factor.

  Fortunately, a structured risk model with dependencies between
  risk factors could be used to complement utilitarian decision ethics
  with \textsc{Kant}ian ethics, that is,\xspace to take conservative measures before
  high-severity risk factors get activated.
  For example,\xspace we can model the necessary preconditions of the tram problem as
  risk factors and a machine based on this model can use these factors
  to constrain its behaviour.

  Overall, although the presented model can be used with linear orders, the
  core discussions below try to stay agnostic of the mitigation order.
\end{remark}

\subsection{Local, Regional, and Global Safety}
\label{sec:risk:safety}

The orders $\preceq_{\mathsf{m}}, \precsim_{\mathsf{m}}$, and $\leq_{\mathsf{m}}$ are
\emph{local} in the sense that their definitions only require the
comparison of pairs of risk states.  Two more qualitative notions of
safety seem to be useful.

Let $R$ be non-empty and finite and
$reach\colon R \times \mathcal{P} \rightarrow 2^{R}$.  Given a
process $P \in \mathcal{P}$ and a risk state $\sigma \in R$,
$reach(\sigma,P)\subseteq R$ denotes the set of risk states reachable
from $\sigma$ by $P$ where $\sigma$ itself is always reachable and,
thus, $\sigma \in reach(\sigma,P)$.  Then, we use $\precsim_{\mathsf{m}}$ to
determine two non-empty sets of minimum and maximal elements in $R$
reachable in $P$, namely $max_{\precsim_{\mathsf{m}}}\{ reach(\sigma,P) \}$
and $min_{\precsim_{\mathsf{m}}}\{ reach(\sigma,P) \}$.

In a \emph{situation} represented by the process $P$ in a specific
risk state $\sigma$, these two sets signify the \emph{regionally
  safest}~($max$) and the \emph{regionally most hazardous}~($min$)
states, respectively.  The smallest such set will only and exactly
contain $\sigma$, representing ``the situation where $P$ cannot do
anything further about risk.''  This way, $\precsim_{\mathsf{m}}$ yields a
\emph{regional} notion of safety.  The notion is regional inasmuch as
once a maximal element in $max_{\precsim_{\mathsf{m}}}\{ reach(\sigma,P) \}$
is reached, the risk model allows no more reasoning about safer states
that $P$ could reach from $\sigma$ instead of maintaining its current
risk state.

In addition to local and regional safety,
$(R/\sim_{\mathsf{s}},\leq_{\mathsf{m}})$ yields a more \emph{global} notion
because of its linearity~(\Cref{thm:risk:mitorderlinearity}).  The
$\leq_{\mathsf{m}}$-based risk model is global in the sense that there are
always \emph{unique} safest and riskiest states in
$R/\sim_{\mathsf{s}}$~(\Cref{thm:orders:elements,thm:strmitord:lattice})
among all risk states reachable by $P$ from
$\sigma\in[\sigma]_{\sim_{\mathsf{s}}}$.  In contrast, the $\precsim_{\mathsf{m}}$-based
risk model will not guarantee uniqueness of the \emph{globally safest
  state} with respect to\xspace the reachable set of risk states.  Overall,
\Cref{thm:risk:mitwellorder} provides a necessary condition for
deriving \emph{strategies}~(i.e.,\xspace policies or choice resolutions)
that stabilise or terminate $P$ in these globally safest regions.
Note that the use of equivalence classes leads to more abstract forms
of safest and riskiest states.

These abstract $min/max$-bounded reachable sets can be used to
assess risk
at run-time, particularly, by estimating the probability of occurrence
and the severity of consequence from specific situational data only
available during operation and, then, by accumulating these
estimations according to the given risk model.
\emph{Safety of a process} then turns into the \emph{gradual presence
  and absence of the risk of undesired events} during operation,
during a system run, or, more formally, for subsets of $traces(P)$.
This notion complements safety as the crisp presence and absence of
undesired events, as the avoidance of risk factors~(e.g.\xspace undesired
events), and as the reduction of the probability of occurrence or the
severity of consequences of undesired events.
As a side note, the \emph{reliability of a process} as the
\emph{gradual presence and absence of defective behaviour of this
  process} can be seen as a special case of the presented approach
when considering only risk factors that model %
system faults.

\section{Dependencies between Risk Factors}
\label{sec:risk:constraints}

In risk analysis, we often wish to model \emph{causal relationships}
between~(phases of) risk factors and, consequently, risk states.  For example,\xspace
we might want to model that
\begin{enumerate}
\item the \emph{activation} of one risk factor \emph{causes} the
  \emph{activation} of another risk factor, 
\item the \emph{mitigation} of one risk factor \emph{causes} the
  \emph{activation} of another risk factor, or
\item the \emph{activation} of one risk factor \emph{requires} the
  \emph{activation} of another risk factor.
\end{enumerate}

\begin{example}
  \label{exa:depend1}
  Consider the following example illustrating these relationships: 
  \begin{enumerate}
  \item For example,\xspace water or oil on a robot's fingers~($\rfct_1$) \emph{cause}
    the robot's hand to be slippery~($\rfct_2$) such that holding a
    heavy object gets an
    action %
    with an increased likelihood of a negative
    outcome. %
  \item An increased grabbing pressure~($\rfct_3$) as a result to
    mitigate the object slipping out of the grabber's
    hold %
    could potentially \emph{cause} damage to the object.  This
    negative outcome can be modelled as a final risk factor $\rfct_4$.
  \item A damage to the object %
    ($\rfct_4$) \emph{requires %
      at least one of} high grabbing pressure~($\rfct_3$)
    or---applying backward identification of further risk factors---the
    object's high falling onto a hard surface~($\rfct_5$).  This
    fall~($\rfct_5$) %
    again \emph{requires at least one of} slippery fingers~($\rfct_1$)
    or a mistaken loosening of the grabber~($\rfct_6$).
  \end{enumerate}
\end{example}

\subsection{Relations over Risk Spaces}
\label{sec:constr-over-risk}

We can take account of relationships, such as illustrated in
\Cref{exa:depend1}, by imposing \emph{constraints} on pairs of risk
states and their comprising factors' phases.  For this, we use binary
relations over risk spaces $R$ to approximate \emph{causality
  assumptions} about parts of a process $P$ less known or less under
control~(typically, some kind of environment $E$) and \emph{causality
  requirements} to be imposed on parts of $P$ more known or more under
control.
In the following, we employ temporal logic~(\Cref{sec:temporal-logic})
and then relational specification to formalise constraints.

In the following, let $i,j\in[1..n]$ with $i\neq j$.  Now, consider
two distinct risk factors $\rfct_i,\rfct_j \in \mathbb{F}$.  For example,\xspace the
\emph{causes}~(or trigger) constraint\footnote{Used indirectly in form
  of \emph{minimum cut sequences} in FTA and directly in FMEA~(see
  \Cref{sec:relwork:randfail}).} can be defined as follows:
\begin{align}
  \label{eq:risk:constraint:causes:ltl}
  \rfct_i \;\mathtt{causes}\; \rfct_j
  & \equiv
  \square[\activ[\rfct_i]\Rightarrow\Diamond_{\leq
    t}(\activ[\rfct_j] \mathrel{\mathsf{U}} \neg\activ[\rfct_i])]
\end{align}
Note that in our model we have
$\neg\activ[\rfct_i] \Leftrightarrow \mitig[\rfct_i] \lor \inact[\rfct_i]$.
\Cref{eq:risk:constraint:causes:ltl} requires that for any path
through $\mathfrak{R}$ from a state in $R_0\subseteq R$ and at any step of this
path, if $\rfct_i$ is active then, within at most $t$ time units,
$\rfct_j$ must be active until $\rfct_i$ gets either inactive or
mitigated.  $\rfct_j$ can stay active forever and may already be
active before the activation of $\rfct_i$.
\begin{align}
  \label{eq:risk:constraint:causes:ltl:simple}
  \rfct_i \;\mathtt{causes}\; \rfct_j
  & \equiv
    \square[\activ[\rfct_i]
    \Rightarrow \circ(\activ[\rfct_j]
    \mathrel{\mathsf{U}} \neg\activ[\rfct_i])]
\end{align}
\Cref{eq:risk:constraint:causes:ltl:simple} forms a simplification of
\Cref{eq:risk:constraint:causes:ltl} taking into account the~(time)
abstraction used in our definition of $\mathfrak{R}$.  This abstraction implies
that the step of an inactive risk factor to its activated phase is a
logical time step, though, assuming a real-time duration of $>0$.

Let $\mathcal{C}$ be the set of all constraints.  For the translation
of TL constraints as described above into a form usable by parallel
composition, we use a map
$[\hspace{-.15em}[.]\hspace{-.15em}]_c \colon \mathcal{C} \rightarrow 2^{R \times R} $
to denote the relational semantics of constraints over $\mathfrak{R}$.
We will see in the following how applying a constraint
$c \in \mathcal{C}$ to a risk structure $\mathfrak{R}$ can restrict
the transition relation $\rightarrow$.

For example,\xspace \texttt{causes} constraints can be encoded as relations over $R$
as follows:
\begin{equation*}
  \label{eq:risk:constraint:causes}
  [\hspace{-.15em}[\rfct_i \;\mathtt{causes}\; \rfct_j]\hspace{-.15em}]_c
  =
  \{ (\sigma,\sigma') \in R \times R
  \mid
  \phase{i} = \activ[\rfct_i] \lor
  \phase[\sigma']{i} = \activ[\rfct_i]
  \Rightarrow
  \phase[\sigma']{j} = \activ[\rfct_j]\} 
\end{equation*}
\texttt{causes} extends to sets of risk factors.  Given
$F,F'\subseteq \mathbb{F}$ with $F \cap F' = \varnothing$, we define
\begin{align*}
  \label{eq:risk:constraint:causes:sets}
  [\hspace{-.15em}[F \;\mathtt{causes}\; F']\hspace{-.15em}]_c
  =
  \{ (\sigma,\sigma') \in R \times R 
  & \mid
    \exists \rfct_i \in F\colon
    \phase{i} = \activ[\rfct_i] \lor
    \phase[\sigma']{i} = \activ[\rfct_i]
    \Rightarrow
    \forall \rfct_j \in F' \colon \phase[\sigma']{j} = \activ[\rfct_j] \} 
\end{align*}
Note that all pairs $(\sigma,\sigma')$ violating the antecedent of the
conditional are in $[\hspace{-.15em}[\rfct_i \;\mathtt{causes}\; \rfct_j]\hspace{-.15em}]_c$ as
well.  Furthermore, this specific constraint allows \emph{immediate or
  weak causation} as well as \emph{delayed or strong causation} of at
most one transition~(that is,\xspace one ``logical'' step) in $\mathfrak{R}$.

The \texttt{requires} constraint can be written in TL form as follows:
\begin{equation}
  \label{eq:risk:constraint:requires:ltl}
  \rfct_i \;\mathtt{requires}\; \rfct_j
  \equiv
  \square[\neg\activ[\rfct_i] \mathrel{\mathsf{W}}_{\leq t} \activ[\rfct_j]] 
\end{equation}
and in relational form as follows:
\begin{equation*}
  \label{eq:risk:constraint:requires:relational}
  [\hspace{-.15em}[\rfct_i \;\mathtt{requires}\; \rfct_j]\hspace{-.15em}]_c
  =
  \{ (\sigma,\sigma') \in R \times R
  \mid
  \phase[\sigma']{i} = \activ[\rfct_i]
  \Rightarrow
  \phase{j} = \activ[\rfct_j]\} 
\end{equation*}
The lifting of \texttt{requires} to sets $F,F'\subseteq \mathbb{F}$ with
$F \cap F' = \varnothing$ is described as follows:
\begin{align*}
  \label{eq:risk:constraint:requires:sets}
  [\hspace{-.15em}[F \;\mathtt{requires}\; F']\hspace{-.15em}]_c
  =
  \{ (\sigma,\sigma') \in R \times R 
  & \mid
    \exists \rfct_i \in F\colon
    \phase[\sigma']{i} = \activ[\rfct_i]
    \Rightarrow
    \forall \rfct_j \in F' \colon \phase{j} = \activ[\rfct_j] \} 
\end{align*}
Note that this variant of the \texttt{requires} constraint refers to
\emph{all} factors specified on its left-hand side and, this way,
resembles an AND-gate as used in FTA.  The side condition
$F \cap F' = \varnothing$ rules out the case that a risk factor
requires or causes itself~(i.e.,\xspace the ``chicken and egg'' problem).

\begin{remark}
  The \texttt{causes} and \texttt{requires} constraints constitute
  basic templates for the design of constraints and exemplify how
  constraints can support the safety engineer in characterising
  causality in a state-based relational way.
\end{remark}

Analogously, there are many possible \emph{factor dependency
  constraints} over $\mathfrak{R}$ that resemble causal reasoning of techniques
such as FTA~(\Cref{sec:relwork:randfail}).  These constraints can be
classified along the following dimensions:
\label{def:constrdim}
\begin{itemize}
\item \emph{phase combination} (PC):
  active to active ($F \rightleftarrows F'$),
  active to mitigated ($F \rightleftarrows \overline{F'}$),
  active to inactive ($F \rightleftarrows 0^{F'}$),
  mitigated to mitigated ($\overline{F} \rightleftarrows \overline{F'}$),
  mitigated to active ($\overline{F} \rightleftarrows F'$),
  mitigated to inactive ($\overline{F} \rightleftarrows 0^{F'}$);
\item \emph{direction} (D) of cause-effect analysis:
  forward ($\rightarrow$, e.g.\xspace for modelling sufficient conditions or
  propagation), %
  backward ($\leftarrow$, e.g.\xspace for modelling necessary conditions or
  explanation);  %
\item \emph{polarity} (P): %
  obligation ($+$), %
  permission ($\circ$), %
  inhibition ($-$); %
\item \emph{causality} (C):
  strong (s), %
  weak (w); %
\item \emph{multiplicity} (M):
  one-to-one ($1:1$),
  one-to-many~($1:n$),
  many-to-one~($n:1$),
  many-to-many ($m:n$) where $m,n > 0$,
  self-referential\footnote{In this case, we allow $F = F' = \{\rfct\}$.} ($1:0$);
\item \emph{factor combination} (FC):
  ``${\sim}m$ out of $n$'' where $\sim \;\in \{\leq,=,\geq\}$ and $1 \leq m \leq n$.
\end{itemize}
These dimensions give rise to various types of
constraints.\footnote{Na\"ive combination of all dimensions
  ($6 * 2 * 3 * 2 * 5 * 3$) would result in 1080 constraints, many of
  them not essentially different, meaningful, or practical and, hence,
  resulting in significantly fewer.}
\Cref{tab:risks:constraints} describes types of constraints useful in
risk analysis.  Some of these constraints have been implemented in
\textsc{Yap}\xspace~\citep{Gleirscher-YapManual} which enables their use based on
previous discussions in
\citet{Gleirscher-YapManual,Gleirscher2018-Strukturenfurdie,Gleirscher2017-FVAV}.
However, their comprehensive treatment would exceed the scope of this
work.
As we have seen, constraints prune irrelevant state pairs and, this
way, determine the shape of $(R,\rightarrow)$.  This mechanism is
reflected by the following definition:

\begin{table}
  \small
  \caption{Overview of useful constraints. \textbf{Legend:} See 
    \emph{dimensions} on page~\pageref{def:constrdim}, Y \dots implemented in \textsc{Yap}\xspace.
    \label{tab:risks:constraints}}
  \begin{tabular}
    {%
    llcccccc
    p{8cm}}
    \toprule
    \textbf{Name}
    & \textbf{PC}
    & \textbf{D}
    & \textbf{P}
    & \textbf{C}
    & \textbf{M}
    & \textbf{FC}
    & \textbf{Y}
    & \small\textbf{Notes}
    \\\midrule
    causes
    & $F \rightleftarrows F'$ & $\rightarrow$ & + & w & 1:n & =n & \checkmark
    & The \emph{activation} of certain risk factors \emph{causes} the
      (successive) \emph{activation} of specific risk factors,
      \Cref{eq:risk:constraint:causes:ltl:simple}.
    \\
    causes$^{-1}$
    & $\overline{F} \rightleftarrows F'$ & $\rightarrow$ & + & w & m:n & =n & -
    & The \emph{mitigation} of certain risk factors \emph{causes} the
      \emph{activation} of other risk factors.
    \\\midrule
    requires
    & $F \rightleftarrows F'$ & $\leftarrow$ & + & w & 1:n & =n & \checkmark
    & The \emph{activation} of certain risk factors \emph{requires} the
      (prior) \emph{activation} of all out of a specified set of risk factors.
      AND-gate in FTA, \Cref{eq:risk:constraint:requires:ltl}.
    \\
    requires$_{1}$
    & $F \rightleftarrows F'$ & $\leftarrow$ & + & w & 1:n & $\geq$1 & -
    & The \emph{activation} of certain risk factors \emph{requires} the 
      (prior) \emph{activation} of at least \emph{one} out of a specified
      set of risk factors;
      OR-gate in FTA.
    \\
    prevents
    & $F \rightleftarrows F'$ & $\rightarrow$ & -- & w & 1:n & =n & \checkmark
    & The \emph{activation} of certain risk factors \emph{prevents} the 
      \emph{activation} of other risk factors
    \\
    preventsMit
    & $F \rightleftarrows \overline{F'}$ & $\rightarrow$ & -- & w & 1:n & =n & \checkmark
    & The \emph{activation} of certain risk factors \emph{prevents} the
      \emph{mitigation} of other risk factors
    \\\midrule
    excludes
    & $F \rightleftarrows 0^{F'}$ & $\rightarrow$ & + & w & 1:n & =n & \checkmark
    & the \emph{activation} of certain risk factors \emph{deactivates}
      (superposes or invalidates) other risk factors 
    \\\midrule
    direct
    & $F \rightleftarrows 0^{F'}$ & $\rightarrow$ & $\circ$ & s & 1:0 & =1 & \checkmark
    & the \emph{activation} of certain risk factors \emph{can} be
      immediately followed by \emph{their (!) deactivation}
    \\
    offRepair
    & $F \rightleftarrows 0^{F'}$ & $\rightarrow$ & -- & s & 1:0 & =1 & \checkmark
    & the \emph{activation} of certain risk factors \emph{cannot} be
      immediately followed by their \emph{deactivation}
    \\\bottomrule
  \end{tabular}
\end{table}

\begin{definition}[Relational Semantics for Constraints]
  \label{def:risk:constraints:sem}
  For a set of constraints $C \subseteq \mathcal{C}$,
  \begin{equation}
    \label{eq:risk:constraintsemantics}
    [\hspace{-.15em}[C]\hspace{-.15em}]_c =
    \left\{
      \begin{array}{ll}
        R \times R,
        & C = \varnothing
        \\
        \bigcap_{c\in C} [\hspace{-.15em}[c]\hspace{-.15em}]_c,
        & \text{otherwise}
      \end{array}
    \right.
    \quad \subseteq R \times R
  \end{equation}
\end{definition}
Constraints enable a top-down way of specifying risk models over risk
spaces.  In the context of the composition of risk spaces, this
alternative way requires the definition of the following healthiness
or well-formedness condition:

\begin{definition}[Well-formedness of Constraints]
  \label{def:risk:constraints:wf}
  Let $R(F)$ be a risk space formed by a family of risk factors
  $F = \{\rfct_i\}_{i\in[1..n]}$.  We say that a constraint
  $c \in \mathcal{C}$ is \emph{well-formed for $R(F)$} if and only if\xspace it does
  not refer to risk factors other than the ones in $F$, formally,
  if and only if\xspace $[\hspace{-.15em}[c]\hspace{-.15em}]_c\subseteq R(F) \times R(F)$.  We say that
  $C\subseteq\mathcal{C}$ is well-formed for $R(F)$ if and only if\xspace each
  element in $C$ is well-formed for $R(F)$.
\end{definition}

\subsection{Compatibility of Risk Factors}
\label{sec:comp-risk-fact}

Here, we continue with the methodological considerations from
\Cref{rem:abstroforder1} on consistent and meaningful interval and
probability specifications.

For example,\xspace assume to have two consequences $C_1$ and $C_2$ associated with
\emph{correct}\footnote{Correctness of $[l_1,u_1)$ and $[l_2,u_2)$
  assumes complete knowledge about consequences and their
  evaluation.} severity intervals $[l_1,u_1)$ and $[l_2,u_2)$.  Two
risk factors $\rfct_1$ and $\rfct_2$ can model three situations with their
\emph{individually estimated} intervals $\rfct_1.s$ and $\rfct_2.s$:
\begin{enumerate}
\item $\rfct_1$ and $\rfct_2$ share all their consequences, for example,\xspace
  $C_1$: If both factors get active, the convex hull can be backed
  by a meaningful consistency condition:
  $\{\rfct_1.s,\rfct_2.s\}^* \subseteq [l_1,u_1)$.
\item $\rfct_1$ and $\rfct_2$ do not share any consequences, for example,\xspace
  $\rfct_1$ models $C_1$ and $\rfct_2$ models $C_2$: If both factors
  get active, the convex hull extends both the range of consequences
  and severities and the consistency condition
  $\{\rfct_1.s,\rfct_2.s\}^* \subseteq \{[l_1,u_1),[l_2,u_2) \}^*$
  seems inappropriate.
  For example,\xspace if $C_1$ and $C_2$ signify the damage of two independent
  objects with the same interval, the convex hull would not account
  for this because of idempotency of interval union.
\item $\rfct_1$ and $\rfct_2$ share some of their consequences: For example,\xspace
  $\rfct_1$ potentially damages objects $A$ and $B$ and $\rfct_2$
  potentially damages objects $B$ and $C$. If both factors get
  active, the convex hull merges information about all consequences
  and, hence, results in a combination of the cases 1 and 2.
\end{enumerate}
While we can abstract from consequences shared by all risk factors,
the treatment of partially shared consequences requires more care.
The cases 2 and 3 can be modelled by an additional risk factor
$\rfct_3$ that is caused by $\rfct_1$ \emph{and} $\rfct_2$ and
carries an up-shifted severity interval, for example,\xspace
$\rfct_3.s = \rfct_1.s + \rfct_2.s$.  Below in
\Cref{sec:risk:constraints}, we will discuss how such a dependency
can be specified by constraints on the risk space, that is,\xspace by
$\{\rfct_1,\rfct_2\}$ \texttt{causes} $\{\rfct_3\}$ and $\{\rfct_3\}$
\texttt{excludes} $\{\rfct_1,\rfct_2\}$~(described in
\Cref{tab:risks:constraints}).

Based on this analysis, we call a set $F$ of risk factors
\emph{compatible} if there is a meaningful way of combining the
severity intervals for each subset $F' \subseteq F$.
For methodological support in achieving and maintaining compatibility,
the next paragraph exemplifies formal considerations of the consistency of
severity intervals.

\paragraph{Factor Characteristics and Dependencies.}

Constraints over risk spaces typically have implications on the
characteristics of risk factors such as severity intervals:
\begin{itemize}
\item For example,\xspace for a constraint $\rfct_1 \;\mathtt{causes}\; \rfct_2$, it is
  reasonable to claim $\rfct_1.s \supseteq \rfct_2.s$.
  
\item Analogously, for a constraint
  $\rfct_1 \;\mathtt{requires}\; \rfct_2$, we might wish to see
  $\rfct_1.s \subseteq \rfct_2.s$.
\end{itemize}
An in-depth analysis of techniques for making $F$ compatible and a
detailed discussion of a complete set of rules relating factor
characteristics and factor dependencies are out of scope of this work.

\section{Risk Structures}
\label{sec:risk-structures}

\emph{Risk spaces} are an abstract domain that can be used to equip
processes with \emph{risk awareness} and for the assessment of
\emph{risk mitigation capabilities} of such processes.  $R$
represents the \emph{unconstrained} composition of risk factors.
Events and transitions between risk states are ignored.  Instead of
considering $R$ in its full extension, we can \emph{select
  subsets} of $R$ for risk models of specific applications.
Even if not entirely known in advance, risk in a specific application
will usually have a \emph{specific structure} that might be more
adequately represented by a constrained \emph{composition} of risk
factors, eventually taking into account events and transitions between
risk states.  Specifically, based on \emph{constraints} on the
combinations of the risk factors' phases and on the
\emph{synchronisation} of events corresponding to the CSP model of
concurrency, we specify 
\begin{itemize}
\item which region of $R$ we want to pay attention to~(i.e.,\xspace scope of
  safety guarantees) and
\item which region of $R$ we consider to be safe for a process
  $P$~(i.e.,\xspace conventional safety).
\end{itemize}
In the following, we make use of risk factors' transition relations,
define a form of parallel composition, and discuss \emph{consistency,
  well-formedness, and validity} of risk structures.

\begin{definition}[Risk Structure]
  \label{def:riskstructure}
  A \emph{risk structure} $\mathfrak{R}$ is an expression of the
  form
  \[
    \mathfrak{R},\mathfrak{R}_1,\mathfrak{R}_2 ::= p \mid \mathfrak{R}_1 \parallel \mathfrak{R}_2 \mid [\mathfrak{R}]_C
  \]
  where $p \in \mathit{Ph}_{\rfct}$ (i.e.,\xspace
  $\inact, \activ, \mitig$) for any risk factor
  $\rfct \in F \subseteq \mathbb{F}$~(\Cref{def:riskfactor}) is an
  \emph{atom} and $C \subseteq \mathcal{C}$ is a set of
  \emph{constraints}~(\Cref{def:risk:constraints:sem}).
  $\mathcal{S}$ denotes the set of all risk structures,
  consequently, including the set of all phases of risk factors
  $\bigcup_{\rfct\in\mathbb{F}}\mathit{Ph}_{\rfct} \subset \mathcal{S}$.
\end{definition}
The operator $\parallel$ signifies the parallel composition of two
risk structures, and the operator $[\cdot]_C$ applies all constraints
in $C$ to a risk structure.  The semantics and algebraic properties of
these operators will be discussed below.

\paragraph{Operational Semantics of Risk Structures.}

With $\mathfrak{R}\in\mathcal{S}$ we associate a LTS
$[\hspace{-.15em}[\mathfrak{R}]\hspace{-.15em}]_{r} = (R, \Sigma, \rightarrow, R_0)$ with
\begin{itemize}
\item the risk space $R$~(\Cref{def:riskstatespace}),
\item the event set $\Sigma$~(\Cref{def:process}),
\item the transition relation
  $\rightarrow\; \subseteq R \times (2^{\Sigma}\setminus \varnothing)
  \times R$~(\Cref{def:risk:factorcomp}), and
\item a set of initial states $R_0 \subseteq R$.
\end{itemize}
Furthermore, let
$scope\colon \mathcal{S} \rightarrow 2^{\mathbb{F}}$ be a map
that identifies the risk factors referred to by a risk structure.  The
operational semantics of each construct of the language in
\Cref{def:riskstructure} are provided below.

\subsection{Atoms}

The operational semantics of a single risk factor $\rfct$ in phase
$p\in\mathit{Ph}_{\rfct}$ is given by
$[\hspace{-.15em}[p]\hspace{-.15em}]_{r} =
(\mathit{Ph}_{\rfct},\Sigma^{\rfct},\rightarrow_{\rfct},\{p\})$ and the
elements of this tuple are given by \Cref{def:riskfactor} and
described in \Cref{fig:riskfactor}.

\subsection{Parallel Composition}

Let $A,B,F \subseteq \mathbb{F}$ be sets of risk factors with
$A,B\subseteq F$ and, for $i\in\{1,2,3\}$, $\mathfrak{R}_i\in\mathcal{S}$
be arbitrary risk structures according to \Cref{def:riskstructure}
with $[\hspace{-.15em}[\mathfrak{R}_i]\hspace{-.15em}]_{r} = (R_i,\Sigma^u_i,\rightarrow_i,R_{i,0})$.

For situations where several safety engineers are trusted with the
task of risk analysis of a complex safety-critical system, we define
what it means to combine two risk structures with intersecting sets of
risk factors, formally,
$scope(\mathfrak{R}_1) \cap scope(\mathfrak{R}_2) \neq \varnothing$.  Because a single
risk factor cannot be in two different phases at the same time, we
employ a corresponding constraint to uniquely define the transition
relation resulting from parallel composition: \emph{only those risk
  states can be combined whose risk factors in the shared scope are in
  their identical phases.}
According to the discussion on
page~\labelcpageref{def:riskstatespace}, the following predicate
encodes this constraint:
\begin{equation}
  \label{eq:stateconsistency}
  \forall \sigma_1 \in R_1, \sigma_2 \in R_2 \colon
  cons(\sigma_1,\sigma_2)
  \iff
  \forall f \in scope(\mathfrak{R}_1) \cap scope(\mathfrak{R}_2) \colon \sigma_1^{(f)}
  = \sigma_2^{(f)}
\end{equation}

\begin{definition}[Parallel Composition]
  \label{def:risk:factorcomp}
  $[\hspace{-.15em}[\mathfrak{R}_1 \parallel \mathfrak{R}_2]\hspace{-.15em}]_{r} = (R, \Sigma^u,
  \rightarrow, R_{0})$ where
  \begin{itemize}
  \item $R = R_1 \otimes R_2$~(see \Cref{eq:riskstatespace:comp,lemma:riskstatespace:exchange}),
  \item
    $\rightarrow \; \subseteq R \times (2^{\Sigma} \setminus
    \varnothing) \times R$,
  \item the \emph{used alphabet}
    $\Sigma^u = \{ e \in 2^{\Sigma} \mid \exists \sigma,\sigma'
    \in R \colon \tran{\sigma}{e}{\sigma'} \}$, and
  \item $R_{0} = R_{1,0} \otimes R_{2,0}$.
  \end{itemize}
  For risk states
  $\sigma_1,\sigma_1' \in R_1, \sigma_2,\sigma_2' \in R_2$, the
  composed transition relation $\rightarrow$ is given by the
  following step rules:
  \begin{equation}
    \label{rule:risk:parcomp-l}
    \begin{prooftree}
      \hypo{\tran[1]{\sigma_1}{e}{\sigma_1'}}
      \hypo{\ntran[2]{\sigma_2}{}{}}
      \infer2[$[cons(\sigma_1,\sigma_2), cons(\sigma_1',\sigma_2)]$]{
        \tran{\sigma_1 \cup
          \sigma_2}{e}{\sigma_1' \cup \sigma_2}}
    \end{prooftree}
    \tag{$\parallel$\textsf{-l}, left}  
  \end{equation} 
  \begin{equation}
    \label{rule:risk:parcomp-r}
    \begin{prooftree}
      \hypo{\ntran[1]{\sigma_1}{}{}}
      \hypo{\tran[2]{\sigma_2}{f}{\sigma_2'}}
      \infer2[$[cons(\sigma_1,\sigma_2), cons(\sigma_1,\sigma_2')]$]{
        \tran{\sigma_1 \cup \sigma_2}{f}{\sigma_1 \cup
          \sigma_2'}}
    \end{prooftree}
    \tag{$\parallel$\textsf{-r}, right}
  \end{equation}
  \begin{equation}
    \label{rule:risk:parcomp-pl}
    \begin{prooftree}
      \hypo{\tran[1]{\sigma_1}{e}{\sigma_1'}}
      \hypo{\tran[2]{\sigma_2}{f}{\sigma_2'}}
      \infer2[$[cons(\sigma_1,\sigma_2), cons(\sigma_1',\sigma_2)]$]{
        \tran{\sigma_1 \cup \sigma_2}{e \setminus
          f}{\sigma_1' \cup \sigma_2}}
    \end{prooftree}
    \tag{$\parallel$\textsf{-pl}, partial left}
  \end{equation}
  \begin{equation}
    \label{rule:risk:parcomp-pr}
    \begin{prooftree}
      \hypo{\tran[1]{\sigma_1}{e}{\sigma_1'}}
      \hypo{\tran[2]{\sigma_2}{f}{\sigma_2'}}
      \infer2[$[cons(\sigma_1,\sigma_2), cons(\sigma_1,\sigma_2')]$]{
        \tran{\sigma_1 \cup \sigma_2}{f \setminus
          e}{\sigma_1 \cup \sigma_2'}}
    \end{prooftree}
    \tag{$\parallel$\textsf{-pr}, partial right}
  \end{equation}
  \begin{equation}
    \label{rule:risk:parcomp-b}
    \begin{prooftree}
      \hypo{\tran[1]{\sigma_1}{e}{\sigma_1'}}
      \hypo{\tran[2]{\sigma_2}{f}{\sigma_2'}}
      \infer2[$[cons(\sigma_1,\sigma_2), cons(\sigma_1',\sigma_2'), e \cap f \neq \varnothing]$]{
        \tran{\sigma_1 \cup \sigma_2}{e\cap f}{\sigma_1'
          \cup \sigma_2'}
      }
    \end{prooftree}
    \tag{$\parallel$\textsf{-b}, both}
  \end{equation}
\end{definition}
These step rules together resemble the step law of generalised
parallel composition in
CSP~\citep[Sec.~3.4]{Roscoe2010}
and can be used to determine the set of reachable states
$reach([\hspace{-.15em}[\mathfrak{R}_1\parallel\mathfrak{R}_2]\hspace{-.15em}]_{r}.R_0,P)$.

\begin{remark}
  \label{rem:parcomp}
  The side conditions in the rules of this composition operator
  prohibit any behaviour leading to inconsistent states.  In other
  words, the composition constrains the behaviour of two risk
  structures with an overlapping scope.  With \Cref{lem:constref}
  below, we will further discuss another form of behavioural
  constraint and the combination of composition and constraints for
  risk modelling.

  Were we to use arbitrary CSP processes as atoms and were we to allow
  different interfaces for each use of the parallel composition
  operator, several risk structures, when composed, would not
  guarantee freedom from interference, exhibit different event and
  state traces and, consequently, the order of their composition would
  lead to different risk models.
  Associative laws for generalised parallel
  composition in CSP are not universally applicable.
  The discussion by \citet[p.~60]{Roscoe2010} highlights how
  differences in the alphabets shared between each pair in a set of
  risk structures would entail meaning to the order in which these
  pairs are composed.  For example,\xspace in CSP, the equality
  $(P \genpar[X] Q) \genpar[Y] R = P \genpar[X] (Q \genpar[Y] R)$
  holds generally if $X = Y$.
  In case of $X \neq Y$, one has to compare the traces of the composed
  processes to prove specific guarantees to be preserved by their
  composition.  This can be computationally complex.
  More general forms of processes and composition have been dealt with
  in formalisms such as
  \textsf{\itshape Circus}\xspace~\citep{Oliveira2009,Oliveira2005-FormalDerivationState}
  and
  \textsc{Focus}~\citep{Broy2001-SpecificationDevelopmentInteractive}.
  
  Overall, we obtain two advantages from the use of risk factors as
  atoms in risk structures: This way, all atoms have exactly the same
  alphabet $\Sigma^F$, that is,\xspace all factors have to always agree on their
  view of the same overall process $P$.  Overlapping scopes in CSP
  terms then mean copies of risk factors that will be reduced by
  idempotency~(see \Cref{lemma:risk:rs:idem} below).
  The transformation of risk factors into CSP as described on
  page~\pageref{sec:risk:riskfactors-in-csp} encodes all information
  of the risk space $R$ into the processes $\inact, \activ$, and
  $\mitig$.  This way, we restrict the use of generalised
  parallel composition to the use of synchronous parallel
  composition. %
\end{remark}

\paragraph{Algebraic Properties of Composition ($\parallel$).}

The following discussion shows some properties desirable of the
(parallel) composition of risk structures.

\begin{lemma}[Idempotency of $\parallel$]
  \label{lemma:risk:rs:idem}
  For any $\mathfrak{R}_1 \in \mathcal{S}$, we have
  \begin{equation*}
    \mathfrak{R}_1 \parallel \mathfrak{R}_1 = \mathfrak{R}_1
    \tag{$\parallel$-idem}
  \end{equation*}
\end{lemma}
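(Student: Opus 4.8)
The plan is to unfold both sides to their LTS semantics through $\llbracket\cdot\rrbracket_{r}$ and prove that the two transition systems coincide componentwise. Writing $\llbracket\mathfrak{R}_1\rrbracket_{r} = (R_1,\Sigma_1^u,\rightarrow_1,R_{1,0})$ and instantiating $\mathfrak{R}_2 := \mathfrak{R}_1$ in \Cref{def:risk:factorcomp}, the shared scope is all of $F = scope(\mathfrak{R}_1)$, so the compatibility relation of \Cref{def:risk:eqcomp} and the consistency predicate \eqref{eq:stateconsistency} both collapse to plain equality of risk states: $\sigma_1 \approx \sigma_2 \iff cons(\sigma_1,\sigma_2) \iff \sigma_1 = \sigma_2$. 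First I would settle the static data. Since risk states are (injective) functions viewed as sets of pairs, $\sigma\cup\sigma = \sigma$ holds literally, so $R = R_1 \otimes R_1 = \{\sigma_1\cup\sigma_2 \mid \sigma_1=\sigma_2\} = R_1$, which is exactly the $F_1=F_2=F$ instance of \Cref{lemma:riskstatespace:exchange}; the same computation yields $R_0 = R_{1,0}\otimes R_{1,0} = R_{1,0}$. Hence the state space and the initial states are preserved exactly, and it remains to show $\rightarrow\;=\;\rightarrow_1$.

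For the inclusion $\rightarrow_1 \;\subseteq\; \rightarrow$ I would feed each original step $\tran[1]{\sigma}{e}{\sigma'}$ into rule~\eqref{rule:risk:parcomp-b} with \emph{both} premises instantiated to that same step. Then $e\cap e = e \neq \varnothing$, all consistency side conditions hold because source and target are literally equal in the two copies, and the conclusion is $\tran{\sigma\cup\sigma}{e}{\sigma'\cup\sigma'}$, i.e.\ $\tran{\sigma}{e}{\sigma'}$. So every original transition is regenerated.

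The reverse inclusion $\rightarrow\;\subseteq\;\rightarrow_1$ is the substantive part and proceeds by case analysis over the five step rules. The two asynchronous rules \eqref{rule:risk:parcomp-l} and \eqref{rule:risk:parcomp-r} can never fire: their side conditions force $\sigma_1=\sigma_2$, yet one premise asserts that this state has no outgoing transition while the other exhibits one, a contradiction. In rule~\eqref{rule:risk:parcomp-b} the source conditions give $\sigma_1=\sigma_2=\sigma$ and the target conditions give $\sigma_1'=\sigma_2'=\sigma'$, so the two premises read $\tran[1]{\sigma}{e}{\sigma'}$ and $\tran[1]{\sigma}{f}{\sigma'}$; invoking the fact that the LTS of a risk structure is \emph{functionally labelled} (each ordered pair of states carries at most one label, a property of the factor shape in \Cref{fig:riskfactor} that I would argue is preserved by $\otimes$ and by constraint application) forces $e=f$, whence the produced label $e\cap f = e$ lies in $\rightarrow_1$. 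The partial rules \eqref{rule:risk:parcomp-pl} and \eqref{rule:risk:parcomp-pr} are the delicate case: consistency forces the moving copy onto a self-loop $\tran[1]{\sigma}{e}{\sigma}$ and the composite target back to $\sigma$, producing a candidate self-loop labelled $e\setminus f$ (resp.\ $f\setminus e$) for some other outgoing label $f$ of $\sigma$.

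The main obstacle is precisely these set-difference labels. If the other move is itself that self-loop, functional labelling gives $f=e$ and $e\setminus f = \varnothing$, which is no transition since labels range over $2^{\Sigma}\setminus\varnothing$; if $f$ labels a genuine phase change, I would use the determinism conditions accompanying the CSP reading of a risk factor (the operation-preserving events $o_n^{\rfct},o_e^{\rfct},o_m^{\rfct}$ are disjoint from the endangerment and mitigation events leaving the same phase) to conclude $e\cap f = \varnothing$, so that $e\setminus f = e$ recovers exactly the original self-loop already in $\rightarrow_1$. Establishing functional labelling and this disjointness for a general, possibly constrained and composed, $\mathfrak{R}_1$ rather than only for a single atom is the step I expect to require the most care, and it is where the nondeterministic case must be handled (or the equality relaxed to a suitable equivalence). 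Once it is in place, the five cases close and, together with the coincidence of states and initial states, yield $\llbracket\mathfrak{R}_1\parallel\mathfrak{R}_1\rrbracket_{r} = \llbracket\mathfrak{R}_1\rrbracket_{r}$, i.e.\ $\mathfrak{R}_1\parallel\mathfrak{R}_1 = \mathfrak{R}_1$.
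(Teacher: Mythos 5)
Your proposal follows the same overall route as the paper's proof---settle the state space via \Cref{lemma:riskstatespace:exchange}, then analyse the step rules of \Cref{def:risk:factorcomp} after observing that full scope overlap collapses $\approx$ and $cons$ to plain equality---but it treats the direction $\rightarrow\;\subseteq\;\rightarrow_1$ with genuinely more care than the paper does. The paper disposes of all five rules at once by asserting that, ``because of the uniqueness of $\mathfrak{R}_1$,'' the two copies offer \emph{identical states and events at each step}, i.e.\ it postulates $e=f$ outright; under that postulate $\parallel$\textsf{-l,r} are contradictory, $\parallel$\textsf{-pl,pr} emit only the empty label $e\setminus e=\varnothing$ and are pruned by \Cref{def:riskfactor}, and $\parallel$\textsf{-b} degenerates into exactly the tautology you use for the easy inclusion $\rightarrow_1\;\subseteq\;\rightarrow$. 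You refuse to assume that the two premises coincide, and rightly so: nothing in \Cref{def:risk:factorcomp} forces the two copies of $\mathfrak{R}_1$ to be instantiated with the same transition, so the paper's $e=f$ is a working assumption, not a consequence of the rules. Your case analysis makes explicit what that assumption silently buys: rule $\parallel$\textsf{-b} needs at most one label per ordered state pair (your functional labelling) to force $e=f$, and rules $\parallel$\textsf{-pl,pr} need the determinism conditions ($o_n^{\rfct} \subseteq \Sigma\setminus e^{\rfct}$, etc.) so that a self-loop label is either equal to or disjoint from every other label leaving the same state. Both properties hold for deterministic atoms, and the obstacle you flag is real, not a defect of your argument: for a nondeterministic factor with $rnd = o_n^{\rfct}\cap e^{\rfct}\neq\varnothing$, rule $\parallel$\textsf{-pl} manufactures a self-loop labelled $o_n^{\rfct}\setminus e^{\rfct}$, a non-empty label absent from $\rightarrow_1$, which the paper's pruning (it removes only empty labels) cannot eliminate; there the literal equality fails and must be weakened to an equivalence, exactly as you suggest. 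In short, both proofs deliver the lemma for deterministic, functionally labelled structures; yours states the needed hypotheses explicitly and exposes where they can fail, whereas the paper hides them inside the informal ``consistent views'' sentence.
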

\begin{proof}[Proof of \Cref{lemma:risk:rs:idem}.]
  \label{proof:risk:rs:idem}
  From \Cref{def:riskstatespace}, by
  \Cref{lemma:riskstatespace:exchange},
  $R_1 = R_1 \otimes R_1$ is preserved.
  Using the convention from \Cref{sec:risk:space}, we can apply the
  rules from \Cref{def:risk:factorcomp} as follows:
  
  Because of the uniqueness of $\mathfrak{R}_1$, the composition takes two
  consistent views of $\mathfrak{R}_1$ offering identical states and events
  at each step.  So, because of $\sigma_1 = \sigma_2$ and $e = f$, the
  antecedents %
  in the rules $\parallel$\textsf{-l,r} can never be satisfied and the
  rules $\parallel$\textsf{-pl,pr} produce infeasible transitions
  to be pruned according to \Cref{def:riskfactor}. %
  The rule $\parallel$\textsf{-b} turns into a tautology
  and both views always exhibit an identical transition:
  \begin{equation*}
    \begin{prooftree}
      \hypo{\tran[1]{\sigma_1}{e}{\sigma_1'}}
      \infer1{\tran{\sigma_1 \cup \sigma_1}{e}{\sigma_1'
          \cup \sigma_1'}}
    \end{prooftree}
  \end{equation*}
  For $\sigma_1, \sigma_1' \in R_1$ and $e \in \Sigma^u_1$, this tautology
  preserves $\Sigma^u_1$, $\rightarrow_1$, and $R_{1,0}$.
\end{proof}

\begin{lemma}[Commutativity of $\parallel$]
  \label{lemma:risk:rs:comm}
  For any $\mathfrak{R}_1,\mathfrak{R}_2 \in \mathcal{S}$, we have
  \begin{equation*}
    \mathfrak{R}_1 \parallel \mathfrak{R}_2 = \mathfrak{R}_2 \parallel \mathfrak{R}_1
    \tag{$\parallel$-comm}
  \end{equation*}
\end{lemma}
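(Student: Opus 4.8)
The plan is to establish the equality at the level of the associated labelled transition systems, that is, to show that $[\hspace{-.15em}[\mathfrak{R}_1 \parallel \mathfrak{R}_2]\hspace{-.15em}]_{r}$ and $[\hspace{-.15em}[\mathfrak{R}_2 \parallel \mathfrak{R}_1]\hspace{-.15em}]_{r}$ agree in each of the four components $(R, \Sigma^u, \rightarrow, R_0)$ of \Cref{def:risk:factorcomp}. I would split the argument into the \emph{state part} (carrier and initial states) and the \emph{dynamic part} (transition relation and used alphabet).

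For the state part, both $R = R_1 \otimes R_2$ and $R_0 = R_{1,0} \otimes R_{2,0}$ reduce to showing that $\otimes$ is commutative. This follows directly from its definition in \Cref{eq:riskstatespace:comp}: the pointwise union $\sigma_1 \cup \sigma_2$ is a commutative set operation, and the compatibility predicate $\approx$ is symmetric, since by \Cref{def:risk:eqcomp} it quantifies over the shared domain $F_1 \cap F_2 = F_2 \cap F_1$ and requires an equality that is itself symmetric. Hence the two defining comprehensions denote literally the same set, giving $R_1 \otimes R_2 = R_2 \otimes R_1$ and likewise for the initial states $R_0$.

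For the transition relation, I would argue rule by rule that every derivation in one composition has a matching derivation in the other yielding an identical transition. The key observation is that swapping $\mathfrak{R}_1$ and $\mathfrak{R}_2$ permutes the step rules of \Cref{def:risk:factorcomp}: the left rule $\parallel$\textsf{-l} maps to the right rule $\parallel$\textsf{-r} (and back), the partial-left rule $\parallel$\textsf{-pl} maps to the partial-right rule $\parallel$\textsf{-pr} (and back), and the both rule $\parallel$\textsf{-b} maps to itself. In every case the source state $\sigma_1 \cup \sigma_2$, the successor, and the label coincide after applying commutativity of $\cup$ and $\cap$; the side conditions match because $cons$ is symmetric (again by symmetry of $F_1 \cap F_2$ and of phase equality in \Cref{eq:stateconsistency}), and the non-emptiness guard $e \cap f \neq \varnothing$ is itself symmetric. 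Since the used alphabet $\Sigma^u$ is defined purely in terms of $R$ and $\rightarrow$, its equality is then immediate.

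The main obstacle I expect lies in the partial-move rules, whose labels $e \setminus f$ and $f \setminus e$ are \emph{not} symmetric in isolation. The point to get right is that a label-$(e \setminus f)$ transition arising from $\parallel$\textsf{-pl} in $\mathfrak{R}_1 \parallel \mathfrak{R}_2$ advances only the $\mathfrak{R}_1$-component, and this very transition is reproduced in $\mathfrak{R}_2 \parallel \mathfrak{R}_1$ by the rule $\parallel$\textsf{-pr}, which there advances the $\mathfrak{R}_1$-component (now playing the right-hand role) and produces the same label $e \setminus f$; the case of $f \setminus e$ is symmetric. Thus the asymmetry of the set differences is exactly compensated by the asymmetry between the left- and right-partial rules, so that the two transition relations coincide and the claimed equality $\mathfrak{R}_1 \parallel \mathfrak{R}_2 = \mathfrak{R}_2 \parallel \mathfrak{R}_1$ holds.
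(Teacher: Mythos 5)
Your proposal is correct and takes essentially the same approach as the paper's proof: commutativity of $\otimes$ for the state-space and initial-state components, and the left/right duality of the step rules (with $\parallel$\textsf{-b} self-dual) for the transition relation and hence $\Sigma^u$. The only cosmetic differences are that the paper obtains $R_1 \otimes R_2 = R_2 \otimes R_1$ via \Cref{lemma:riskstatespace:exchange} rather than directly from the symmetry of the defining comprehension, and that it leaves implicit the compensation between the asymmetric labels $e \setminus f$ and $f \setminus e$ that you correctly spell out.
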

\begin{proof}
  \label{proof:risk:rs:comm}
  \Cref{lemma:riskstatespace:exchange} in \Cref{sec:risk:space} makes
  it easy to show
  $R_1 \otimes R_2 = R(scope(\mathfrak{R}_1) \cup scope(\mathfrak{R}_2)) = R_2 \otimes
  R_1$.
  Furthermore, the symmetric duals~(i.e.,\xspace $\parallel$-r is the dual of
  $\parallel$-l, $\parallel$-pr is the dual of $\parallel$-pl) of the
  rules in \Cref{def:risk:factorcomp} yield the same $\rightarrow$
  and, hence, the same $\Sigma^u$.
\end{proof}

\begin{lemma}[Associativity of $\parallel$]
  \label{lemma:risk:rs:assoc}
  For any $\mathfrak{R}_1,\mathfrak{R}_2,\mathfrak{R}_3 \in \mathcal{S}$ with \emph{disjoint}
  scopes, that is,\xspace
  \[
    scope(\mathfrak{R}_1) \cap scope(\mathfrak{R}_2) \cap scope(\mathfrak{R}_3) = \varnothing,
  \]
  we have
  \begin{equation*}
    (\mathfrak{R}_1 \parallel \mathfrak{R}_2) \parallel \mathfrak{R}_3
    =
    \mathfrak{R}_1 \parallel (\mathfrak{R}_2 \parallel \mathfrak{R}_3)
    \tag{$\parallel$-assoc-1}
  \end{equation*}
\end{lemma}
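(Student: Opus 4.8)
The plan is to prove associativity of $\parallel$ under the disjoint-scope hypothesis by showing that both compositions yield LTSs with identical components: the same risk space $R$, the same initial states $R_0$, and the same transition relation $\rightarrow$ (whence the same used alphabet $\Sigma^u$, which is derived from $\rightarrow$). I would first handle the state-space and initial-state equalities, which are purely algebraic consequences of \Cref{lemma:riskstatespace:exchange} and \Cref{lemma:risk:rs:comm}, and then concentrate the real work on the transition relation, where the step rules of \Cref{def:risk:factorcomp} must be shown to compose associatively.

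\emph{State space and initial states.} First I would invoke \Cref{lemma:riskstatespace:exchange}, which gives $R(F_1 \cup F_2) = R(F_1) \otimes R(F_2)$, together with the fact that set union $\cup$ on the scopes is associative. Writing $F_i = scope(\mathfrak{R}_i)$, both $(R_1 \otimes R_2) \otimes R_3$ and $R_1 \otimes (R_2 \otimes R_3)$ equal $R(F_1 \cup F_2 \cup F_3)$, so the composite risk spaces coincide. The same argument applied to the initial-state sets $R_{i,0}$ gives $R_0 = R_{1,0} \otimes R_{2,0} \otimes R_{3,0}$ unambiguously. This step is essentially bookkeeping once \Cref{lemma:riskstatespace:exchange} is in hand.

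\emph{Transition relation.} The substantive step is to show that a transition $\tran{\sigma}{e}{\sigma'}$ is derivable in $(\mathfrak{R}_1 \parallel \mathfrak{R}_2) \parallel \mathfrak{R}_3$ if and only if it is derivable in $\mathfrak{R}_1 \parallel (\mathfrak{R}_2 \parallel \mathfrak{R}_3)$. I would proceed by analysing, for a composite state $\sigma = \sigma_1 \cup \sigma_2 \cup \sigma_3$, which of the three component structures can move on a given event and tracking how the five rules ($\parallel$\textsf{-l}, $\parallel$\textsf{-r}, $\parallel$\textsf{-pl}, $\parallel$\textsf{-pr}, $\parallel$\textsf{-b}) combine the component labels. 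The key algebraic facts to isolate are that the label-combining operations on event sets --- intersection $e \cap f$ in rule $\parallel$\textsf{-b} and the asymmetric subtractions $e \setminus f$, $f \setminus e$ in the partial rules --- produce the same resulting event set regardless of bracketing, and that the consistency predicate $cons$ from \Cref{eq:stateconsistency} factors correctly across the nested composition. Here the disjoint-scope hypothesis $scope(\mathfrak{R}_1) \cap scope(\mathfrak{R}_2) \cap scope(\mathfrak{R}_3) = \varnothing$ is precisely what guarantees that the pairwise consistency checks performed in one bracketing are equivalent to those performed in the other: no single risk factor is forced into simultaneous phase-agreement constraints from all three structures at once, so the $cons$ side conditions do not interact in a bracketing-dependent way.

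\emph{The main obstacle} I anticipate is the label-combination bookkeeping in the partial and ``both'' rules. Because the composed event on rule $\parallel$\textsf{-b} is $e \cap f$ while the partial rules emit the set differences, associativity of the transition relation is not immediate from associativity of any single binary operation on sets --- one must check that every way of partitioning a synchronising event among the three components in the left bracketing corresponds to exactly one way in the right bracketing, yielding the same emitted label and the same target state. I would organise this as a case distinction on which subset of $\{\mathfrak{R}_1,\mathfrak{R}_2,\mathfrak{R}_3\}$ participates in the step, appealing to \Cref{lemma:risk:rs:comm} to cut the number of symmetric cases, and I would reduce the label-algebra obligations to the standard identities $e \cap (f \cap g) = (e \cap f) \cap g$ and the interaction of $\cap$ with $\setminus$. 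As in \Cref{rem:parcomp}, this mirrors the CSP step law for generalised parallel composition, where associativity holds cleanly precisely because all atoms share the single alphabet $\Sigma^F$; I would make that shared-alphabet observation explicit to justify that the synchronisation discipline is uniform across the three structures, which is what ultimately forces the two bracketings to coincide.
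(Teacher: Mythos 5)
Your proposal is correct and takes essentially the same route as the paper's proof: equality of the risk spaces and initial states via \Cref{lemma:riskstatespace:exchange} together with associativity of $\cup$, trivialisation of the $cons$ side conditions under the disjointness hypothesis, and the decisive observation that all atoms share the alphabet $\Sigma^F$, so the composition behaves as synchronous parallel composition. The only difference is one of detail: where the paper simply cites the general associative law available for synchronous parallel composition, you propose to re-derive it by explicit case analysis over the five step rules and the label algebra of $\cap$ and $\setminus$.
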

\begin{proof}[Proof Sketch of \Cref{lemma:risk:rs:assoc}.]
  \label{proof:risk:rs:assoc}
  Because of empty shared scopes, the side conditions will always hold
  on both sides and states will be merged by disjoint union.  Set
  operations on states and events are commutative and associative.
  As pointed out in \Cref{rem:parcomp}, risk
  factors~(\Cref{def:riskfactor}) always share the whole alphabet
  $\Sigma^F$, that is,\xspace they synchronise on all
  events~(\Cref{sec:risk:space}).  Interleaving is avoided, the
  alphabetised parallel operator takes $\Sigma^F$ on both sides and,
  this way, reduces to synchronous parallel composition for which a
  general associative law is available.
\end{proof}
\Cref{lemma:risk:rs:assoc} allows the safe use of
$\parallel^{\rfct\in F} \inact $ as a shortcut for
$(( \dots (\inact[\rfct_1] \parallel \inact[\rfct_2]) \parallel \dots )
\parallel \inact[\rfct_n])$ with $\rfct_i\in F$.

\begin{lemma}[Associativity of $\parallel$]
  \label{lemma:risk:rs:assoc2}
  For any $\mathfrak{R}_1,\mathfrak{R}_2,\mathfrak{R}_3 \in \mathcal{S}$ with \emph{equal}
  scopes, that is,\xspace
  \[
    scope(\mathfrak{R}_1) = scope(\mathfrak{R}_2) = scope(\mathfrak{R}_3),
  \]
  we have
  \begin{equation*}
    (\mathfrak{R}_1 \parallel \mathfrak{R}_2) \parallel \mathfrak{R}_3
    =
    \mathfrak{R}_1 \parallel (\mathfrak{R}_2 \parallel \mathfrak{R}_3)
    \tag{$\parallel$-assoc-2}
  \end{equation*}
\end{lemma}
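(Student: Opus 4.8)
The plan is to mirror the structure of the proofs of \Cref{lemma:risk:rs:idem,lemma:risk:rs:comm,lemma:risk:rs:assoc}, but now contending with the fact that the three structures share their \emph{entire} scope rather than being disjoint. Write $F = scope(\mathfrak{R}_1) = scope(\mathfrak{R}_2) = scope(\mathfrak{R}_3)$. The first move is to settle the easy components of the associated LTS $\llbracket\cdot\rrbracket_{r}$. Because the scopes coincide, the consistency predicate of \Cref{eq:stateconsistency} collapses to plain equality of risk states, $cons(\sigma_1,\sigma_2)\iff \sigma_1=\sigma_2$, so $\sigma_1\cup\sigma_2=\sigma_1=\sigma_2$ whenever two states are combined. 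Hence $R_1=R_2=R_3=R(F)$ and, by \Cref{lemma:riskstatespace:exchange} together with the homomorphism property (\Cref{lem:homo1}), the composed state spaces under either bracketing both equal $R(F)$. The same collapse turns $\otimes$ on the initial-state sets into intersection, $R_{1,0}\otimes R_{2,0}=R_{1,0}\cap R_{2,0}$, and since $\cap$ is associative the initial-state components agree as well. Thus only the transition relation requires real work.

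Second, I would prune the rule set of \Cref{def:risk:factorcomp}. As in the proof of \Cref{lemma:risk:rs:idem}, and from the input-enabledness guaranteeing deadlock freedom noted in \Cref{sec:risk:space}, every phase of every risk factor has an outgoing transition, so no state is ever stuck: the premises $\ntran[2]{\sigma_2}{}{}$ and $\ntran[1]{\sigma_1}{}{}$ of rules $\parallel$-l and $\parallel$-r are never met and these two rules are inert. With $cons$ forced to equality, the survivors simplify sharply: $\parallel$-b fires exactly when both operands carry a move to a common target with overlapping events, yielding $\tran{\sigma}{e\cap f}{\sigma'}$; while $\parallel$-pl and $\parallel$-pr fire only when one operand self-loops at $\sigma$, yielding the self-loops $\tran{\sigma}{e\setminus f}{\sigma}$ (resp. $\tran{\sigma}{f\setminus e}{\sigma}$), each subject to a non-empty label. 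I would record this as a closed-form, manifestly symmetric description of the composed transition relation in terms of $\rightarrow_1,\rightarrow_2$ and the set operations $\cap,\setminus$ on event labels.

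Third --- the core --- I would show the two bracketings induce the same transition relation. Following \Cref{rem:parcomp}, the decisive structural fact is that all atoms share the full alphabet $\Sigma^F$ and therefore synchronise on every event, so the generalised parallel operator degenerates to synchronous parallel composition, for which CSP supplies an unconditional associativity law \citep[Sec.~3.4]{Roscoe2010}. Concretely, I would take a derivation of $\tran{\sigma}{g}{\sigma'}$ in $(\mathfrak{R}_1\parallel\mathfrak{R}_2)\parallel\mathfrak{R}_3$ and rebuild it as a derivation of the same triple in $\mathfrak{R}_1\parallel(\mathfrak{R}_2\parallel\mathfrak{R}_3)$, and conversely, case-splitting on which of $\parallel$-b, $\parallel$-pl, $\parallel$-pr produced each composition step. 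For state-changing moves this is the ``both'' case throughout, so the emitted label is $e_1\cap e_2\cap e_3$ and re-association of $\cap$ settles it; for self-loops one must track how a $\parallel$-pl or $\parallel$-pr step on the outer composition interacts with a $\parallel$-b step on the inner one, using identities such as $(e\setminus f)\setminus g = e\setminus(f\cup g)$ and the distributive laws relating $\cap$, $\setminus$, and $\cup$ to show the emitted labels coincide.

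I expect the self-loop bookkeeping to be the main obstacle. Unlike the disjoint case of \Cref{lemma:risk:rs:assoc}, here the $\parallel$-pl and $\parallel$-pr rules are genuinely active, and their set-difference labels must be shown bracketing-invariant; this is exactly where the reduction to synchronous parallel composition over the common alphabet $\Sigma^F$ earns its keep, since it guarantees there is no interleaving that could make the order of composition matter. Once the label algebra is checked in each case, the matching of the used alphabets $\Sigma^u$ follows as in \Cref{lemma:risk:rs:comm}, completing the equality of the two LTSs and hence of the two risk structures.
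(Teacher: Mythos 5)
Your opening moves are correct and in fact sharper than anything the paper writes down for this lemma: with equal scopes the predicate $cons$ of \Cref{eq:stateconsistency} does collapse to equality of risk states, the composed risk space and initial-state sets do reduce to $R(F)$ and to plain intersection, and your classification of the rules of \Cref{def:risk:factorcomp} (rules $\parallel$\textsf{-l}/\textsf{-r} inert under input-enabledness; $\parallel$\textsf{-pl}/\textsf{-pr} producing only self-loops, because the side conditions force $\sigma_1'=\sigma_2=\sigma_1$; $\parallel$\textsf{-b} the sole source of state-changing steps) is accurate. Your headline justification --- all atoms share the alphabet $\Sigma^F$, so the operator degenerates to synchronous parallel composition, whose associativity is a standard CSP law --- is precisely the paper's entire argument: its proof of this lemma is the single sentence that it is analogous to \Cref{lemma:risk:rs:assoc}, whose own proof consists of exactly that reduction.

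The gap lies in the step you flag as ``the main obstacle,'' and it is worse than bookkeeping: the claim that the \emph{emitted labels coincide} under identities for $\cap$, $\setminus$, $\cup$ is false, so the derivation-rebuilding plan cannot be completed at the level of set-labelled transitions. Apply your own closed-form description at a state $\sigma$ where the three operands offer self-loops labelled $e_1=\{a,b\}$, $e_2=\{b,c\}$, $e_3=\{a,c\}$. On the left bracketing, inner $\parallel$\textsf{-b} yields a self-loop labelled $e_1\cap e_2=\{b\}$ and outer $\parallel$\textsf{-pr} then emits $e_3\setminus\{b\}=\{a,c\}$; the full set of left self-loop labels is $\{a\},\{b\},\{c\},\{a,c\}$, whereas the right bracketing produces $\{a\},\{b\},\{c\},\{a,b\}$. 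The triple $(\sigma,\{a,c\},\sigma)$ has no counterpart on the right, and $(\sigma,\{a,b\},\sigma)$ none on the left; no label-by-label algebra of $\cap$, $\setminus$, $\cup$ can repair a mismatch in which sets occur at all. What \emph{is} bracketing-invariant is only the flattened relation in which a transition labelled by a set stands for one transition per member event (both sides then give self-loops on $a$, $b$ and $c$). So you must either carry out the whole argument inside the CSP translation of page~\pageref{sec:risk:riskfactors-in-csp} --- where labels are single events, $\parallel$ is $\genpar[\Sigma^F]$, and the associativity law applies verbatim, which is what the paper's appeal to synchronous composition amounts to --- or first fix the convention that $[\hspace{-.15em}[\cdot]\hspace{-.15em}]_{r}$ is compared modulo the label subsumption/pruning the paper already invokes informally in \Cref{lemma:risk:rs:idem}, and only then compare transition relations. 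A smaller caveat of the same kind: the inertness of $\parallel$\textsf{-l}/\textsf{-r} rests on input-enabledness, which the paper assumes for risk factors but never shows to be preserved by $\parallel$ and $[\cdot]_C$, so for arbitrary members of $\mathcal{S}$ (e.g., constrained structures containing risk-locked states) that claim also needs an argument.
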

\begin{proof}[Proof Sketch of \Cref{lemma:risk:rs:assoc2}.]
  \label{proof:risk:rs:assoc2}
  The proof is analogous to the proof of \Cref{lemma:risk:rs:assoc}.
\end{proof}

\subsection{Constraints}
\label{sec:risk:constraints:appliedtors}

\Cref{sec:risk:constraints} introduced the concept of relations over
risk spaces with the aim of shaping~$(R,\rightarrow)$.  Based
on this concept, we now discuss how constraints can be embedded into
an operator of the language of risk structures as introduced in
\Cref{def:riskstructure}.  This way, constraints from a redundant
specification of \emph{beliefs about an application and operational
  risk associated with this application}.

Such redundancy can be used to identify inconsistencies between $\mathfrak{R}$
and the real world, potentially helpful in \emph{model refinement},
\emph{completion}, and \emph{validation}~\citep{Gleirscher2014a}.
Particularly, these inconsistencies allow choices for their
resolution.  We will make such inconsistencies explicit as follows.

For a set of risk factors $F \subseteq \mathbb{F}$ and a risk state
$\sigma \in R(F)$, we call the risk structure
\[
  \mathfrak{R}_{\sigma} =\, [\parallel^{\rfct\in F} phase(\sigma,\rfct)]_{C}
\]
with $R_0 = \{\sigma\}$ the \emph{characteristic risk
  structure} of $\sigma$.
From a risk state $\sigma$ or its characteristic structure
$\mathfrak{R}_{\sigma}$, we distinguish
three ways to determine the set of transitions $\tran{\sigma}{}{}$:\\
\begin{tabular}{llll}
  1.
  & $(\sigma,\sigma') \in [\hspace{-.15em}[C]\hspace{-.15em}]_c$
  & $\quad\Rightarrow \tran{\sigma}{\tau_c}{\sigma'} \in
    \tran[1]{\sigma}{}{}$,
  & derived from constraint $C$,
  \\
  2.
  & $\tran[]{\sigma}{e}{\sigma'}$
  & $\quad\Rightarrow
    \tran{\sigma}{e}{\sigma'} \in \tran[2]{\sigma}{}{}$,
  & derived from the \Cref{def:riskfactor,def:risk:factorcomp},
  \\
  3.
  & $e \in \mathit{initials}(P)$ and $\sigma' \in reach(\sigma,P)$
  & $\quad\Rightarrow
  \tran{\sigma}{e}{\sigma'} \in
  \tran[3]{\sigma}{}{}$,
  & derived from process $P$.
\end{tabular}

This framework gives rise to the following \emph{types of
  inconsistencies}:
\begin{enumerate}
\item The relation $\tran[1]{\sigma}{}{} \setminus\; \tran[2]{\sigma}{}{}$
  describes \emph{sensible} transitions with \emph{invisible events}
  signified by $\tau_c$.  We choose to prune transitions from
  $\rightarrow$ if they deviate from what is provided by the risk
  factors.
\item The relation $\tran[2]{\sigma}{}{} \setminus\; \tran[1]{\sigma}{}{}$ describes
  \emph{violent} transitions.  We choose to prune transitions from
  $\rightarrow$ if they violate constraints and, hence, lead to
  inconsistencies in $\mathfrak{R}$.
\item \label{case:risk:imperceptibletransitions}
  The relation $\tran[3]{\sigma}{}{} \setminus (\tran[1]{\sigma}{}{} \cap\; \tran[2]{\sigma}{}{})$
  describes \emph{imperceptible} transitions.  In $\rightarrow$, we
  choose to label such transitions with a $\tau_p$, making them
  subject of \emph{process-driven disclosure} %
  of $\mathfrak{R}$.
\item
  The relation $(\tran[1]{\sigma}{}{} \cap\; \tran[2]{\sigma}{}{}) \setminus \tran[3]{\sigma}{}{}$
  describes \emph{unrealised} transitions.  We choose to prune such
  transitions from $\rightarrow$, because they are not realised in $P$
  and, hence, would only add little value to $\mathfrak{R}$.
\end{enumerate}
This case analysis suggests several possibilities to design a semantics
for constraints.  As indicated above, we will investigate the
following transition relation $\rightarrow$:
\begin{align*}
  \tran{\sigma}{}{} 
  &= (\tran[1]{\sigma}{}{} \cup\; \tran[2]{\sigma}{}{}\; \cup\;
    \tran[3]{\sigma}{}{}) \setminus \big(
    \underbrace{(\tran[1]{\sigma}{}{} \setminus
    \tran[2]{\sigma}{}{})}_{\mbox{1. reduce to $F$}}
    \cup
    \underbrace{(\tran[2]{\sigma}{}{} \setminus
    \tran[1]{\sigma}{}{})}_{\mbox{2. reduce to $C$}}
    \cup
    \underbrace{(\tran[1]{\sigma}{}{} \cup\; \tran[2]{\sigma}{}{}
    \setminus \tran[3]{\sigma}{}{})}_{\mbox{4. reduce to $P$}}
    \big)
\end{align*}
Fortunately, we have
$\tran[1]{\sigma}{}{} \setminus\; \tran[2]{\sigma}{}{}\; = \varnothing$
because we require all risk factors to be input-enabled.

\begin{definition}[Constraint]
  \label{def:risk:constraint}
  Let $\mathfrak{R}$ be a risk structure~(\Cref{def:riskstructure}) with
  $[\hspace{-.15em}[\mathfrak{R}]\hspace{-.15em}]_{r} = (R,\Sigma^u,\rightarrow,R_0)$ and $C$ a set of
  constraints well-formed for
  $R$~(\Cref{sec:risk:constraints,def:risk:constraints:wf}).  Then,
  the \emph{constrained form} $[\mathfrak{R}]_C$ is defined by
  $[\hspace{-.15em}[[\mathfrak{R}]_C]\hspace{-.15em}]_{r} = (R, \Sigma^u_C, \rightarrow_C,
  R_0)$.
  $\rightarrow_C$ is determined by the
  following rule.  For any pair of risk
  states $\sigma,\sigma' \in R$
  \begin{equation}
    \label{rule:risk:constraint}
    \begin{prooftree}
      \hypo{\tran{\sigma}{e}{\sigma'}}
      \hypo{(\sigma,\sigma') \in [\hspace{-.15em}[C]\hspace{-.15em}]_c}
      \infer2{
        \tran[C]{\sigma}{e}{\sigma'}
      }
    \end{prooftree}
    \tag{$[\cdot]_C$-step}
  \end{equation}
\end{definition}
This definition incorporates the handling of inconsistencies according
to the cases 1 and 2.  The cases 3 and 4 can be helpful in the
incremental construction of $\mathfrak{R}$.

\paragraph{Algebraic Properties of Constraints ($[\cdot]_C$).}

Let $C_i \subseteq \mathcal{C}$ for $i \in \{1,2,3\}$.  We have that
$[\mathfrak{R}]_{\varnothing} = \mathfrak{R}$ by \Cref{def:risk:constraints:sem}.  The
following lemma incorporates that the order in which single
constraints are applied to $\mathfrak{R}$ does not matter.

\begin{lemma}[Exchange]
  \label{lem:risk:constraints:exchange}
  \begin{equation}
    \label{eq:risk:constraints:exchange}
    [[\mathfrak{R}]_{C_1}]_{C_2} = [\mathfrak{R}]_{C_1 \cup C_2}
  \end{equation}
\end{lemma}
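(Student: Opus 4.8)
The plan is to reduce the claimed equality of risk structures to a purely set-theoretic identity between the underlying constraint semantics, exploiting the fact that constraint application (\Cref{def:risk:constraint}) touches neither the risk space $R$ nor the initial states $R_0$. Concretely, both $[[\mathfrak{R}]_{C_1}]_{C_2}$ and $[\mathfrak{R}]_{C_1 \cup C_2}$ denote LTSs of the form $(R, \Sigma^u_{\bullet}, \rightarrow_{\bullet}, R_0)$ with the \emph{same} $R$ and $R_0$, inherited unchanged from $\mathfrak{R}$. Moreover, by \Cref{def:risk:factorcomp} the used alphabet $\Sigma^u_{\bullet}$ is fully determined by the transition relation. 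Hence it suffices to prove that the two transition relations coincide; the equality of alphabets, and thus of the whole tuples, then follows for free.

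First I would unfold the step rule $[\cdot]_C$-step twice on the left. Writing $\rightarrow$ for the relation of $\mathfrak{R}$, applying $C_1$ gives $\tran[C_1]{\sigma}{e}{\sigma'}$ exactly when $\tran{\sigma}{e}{\sigma'}$ and $(\sigma,\sigma')\in[\hspace{-.15em}[C_1]\hspace{-.15em}]_c$. Applying $C_2$ to $[\mathfrak{R}]_{C_1}$ then retains a transition of $\rightarrow_{C_1}$ precisely when $(\sigma,\sigma')\in[\hspace{-.15em}[C_2]\hspace{-.15em}]_c$ holds as well, so the left-hand relation is the set of $(\sigma,e,\sigma')$ with $\tran{\sigma}{e}{\sigma'}$ and $(\sigma,\sigma')\in[\hspace{-.15em}[C_1]\hspace{-.15em}]_c\cap[\hspace{-.15em}[C_2]\hspace{-.15em}]_c$. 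On the right, one application of the rule yields the set of $(\sigma,e,\sigma')$ with $\tran{\sigma}{e}{\sigma'}$ and $(\sigma,\sigma')\in[\hspace{-.15em}[C_1\cup C_2]\hspace{-.15em}]_c$. The two relations therefore agree if and only if $[\hspace{-.15em}[C_1]\hspace{-.15em}]_c\cap[\hspace{-.15em}[C_2]\hspace{-.15em}]_c = [\hspace{-.15em}[C_1\cup C_2]\hspace{-.15em}]_c$.

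The core step is then this set identity, which I would obtain directly from \Cref{def:risk:constraints:sem}. For non-empty $C_1,C_2$ it reads $\bigcap_{c\in C_1}[\hspace{-.15em}[c]\hspace{-.15em}]_c \cap \bigcap_{c\in C_2}[\hspace{-.15em}[c]\hspace{-.15em}]_c = \bigcap_{c\in C_1\cup C_2}[\hspace{-.15em}[c]\hspace{-.15em}]_c$, which is the standard fact that an intersection indexed over a union of index sets splits into the intersection of the two sub-indexed intersections (commutativity, associativity, and idempotency of $\cap$). I expect the only real subtlety --- and hence the main obstacle, mild as it is --- to lie in the piecewise definition of $[\hspace{-.15em}[\cdot]\hspace{-.15em}]_c$, where the empty constraint set is sent to $R\times R$ rather than to an empty intersection. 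I would dispatch this by a short case split: if $C_1=\varnothing$ then $[\hspace{-.15em}[C_1]\hspace{-.15em}]_c = R\times R$ acts as the unit of intersection within the ambient set $R\times R$ and $C_1\cup C_2 = C_2$, so both sides collapse to $[\hspace{-.15em}[C_2]\hspace{-.15em}]_c$; the case $C_2=\varnothing$ is symmetric, and if both are empty both sides equal $R\times R$. With the identity established in every case, the transition relations coincide, the used alphabets coincide as a consequence, and the two risk structures are equal, closing the argument. Finally, I would remark that \Cref{eq:risk:constraints:exchange} also exhibits $[\cdot]_C$-application as commutative and idempotent in its constraint argument, since $C_1\cup C_2$ is symmetric and absorbs duplicates.
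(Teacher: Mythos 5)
Your proof is correct, but it takes a genuinely different route from the paper's. The paper proceeds by induction over $C_1$: it first establishes a singleton helper lemma $[[\mathfrak{R}]_{\{c\}}]_{C} = [\mathfrak{R}]_{\{c\} \cup C}$ via a bidirectional case analysis on whether the $[\cdot]_C$-step rule of \Cref{def:risk:constraint} applies to a fixed transition, and then peels elements off $C_1$ one at a time, chaining the helper lemma with the induction hypothesis. You avoid the induction entirely: you characterise both composed transition relations explicitly---the left side as the transitions of $\rightarrow$ with $(\sigma,\sigma') \in [\hspace{-.15em}[C_1]\hspace{-.15em}]_c \cap [\hspace{-.15em}[C_2]\hspace{-.15em}]_c$, the right side as those with $(\sigma,\sigma') \in [\hspace{-.15em}[C_1 \cup C_2]\hspace{-.15em}]_c$---and reduce the lemma to the single set identity $[\hspace{-.15em}[C_1]\hspace{-.15em}]_c \cap [\hspace{-.15em}[C_2]\hspace{-.15em}]_c = [\hspace{-.15em}[C_1 \cup C_2]\hspace{-.15em}]_c$, which for non-empty sets is the standard splitting of an intersection over a union of index sets, plus a case split for the empty-set branch of \Cref{def:risk:constraints:sem}. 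In effect, your set identity is the $n$-ary generalisation of the paper's singleton helper lemma, proved once and for all rather than threaded through an induction; your handling of the $C_i = \varnothing$ cases plays the role of the paper's induction start. What your approach buys is brevity and a clean separation of concerns: the LTS components $R$, $R_0$, and $\Sigma^u$ are dispatched up front (correctly noting that the used alphabet is determined by the transition relation, a point the paper's proof leaves implicit), and the remaining content is purely set-theoretic. What the paper's approach buys is that its one-constraint-at-a-time decomposition mirrors the incremental application of $[\cdot]_C$ that the surrounding development emphasises. Your closing remark that commutativity and idempotency in the constraint argument fall out of the symmetry and absorption of $\cup$ matches exactly how the paper derives \Cref{lem:risk:constraints:idem,lem:risk:constraints:comm} as corollaries.
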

\begin{proof}[Proof Sketch.]
  The proof is by induction over $C_1$, supported by an additional
  lemma for the induction step, and takes advantage of associativity of set
  intersection as used in \Cref{def:risk:constraints:sem}.  The
  detailed proof is stated in \Cref{proof:risk:constraints:exchange}.
\end{proof}

\begin{lemma}[Idempotency]
  \label{lem:risk:constraints:idem}
  \begin{align}
    [[\mathfrak{R}]_C]_C = [\mathfrak{R}]_C
    \tag{$[\cdot]_C$-idem}
  \end{align}
\end{lemma}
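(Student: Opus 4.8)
The plan is to derive this idempotency law directly from the Exchange lemma (\Cref{lem:risk:constraints:exchange}) rather than reasoning about the transition relation from scratch. Instantiating that lemma with $C_1 := C$ and $C_2 := C$ immediately yields $[[\mathfrak{R}]_C]_C = [\mathfrak{R}]_{C \cup C}$. Since set union is idempotent, $C \cup C = C$, and hence $[\mathfrak{R}]_{C \cup C} = [\mathfrak{R}]_C$, which is exactly the claim. The only precondition to discharge is that $C$ be well-formed for the underlying risk space $R$ (\Cref{def:risk:constraints:wf}), but this is already assumed whenever $[\mathfrak{R}]_C$ is defined via \Cref{def:risk:constraint}, so no new hypothesis is needed.

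For a self-contained argument I would instead unfold \Cref{def:risk:constraint} twice. Applying $[\cdot]_C$ leaves the state space $R$ and the initial states $R_0$ untouched and merely restricts the transition relation to those triples satisfying the constraint, so that $\rightarrow_C \;=\; \{(\sigma,e,\sigma') \in\; \rightarrow \mid (\sigma,\sigma') \in [\hspace{-.15em}[C]\hspace{-.15em}]_c\}$. Applying the operator a second time restricts $\rightarrow_C$ by the very same membership condition $(\sigma,\sigma') \in [\hspace{-.15em}[C]\hspace{-.15em}]_c$; but by construction every transition of $\rightarrow_C$ already satisfies this condition, so the second pass removes nothing and the twice-constrained relation equals $\rightarrow_C$. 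Because the used alphabet $\Sigma^u_C$ is fully determined by the transition relation (it collects exactly the labels appearing in some transition), equality of the transition relations forces equality of the used alphabets, and the two LTS therefore coincide on all four components.

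There is essentially no hard step: the result is a one-line consequence of $C \cup C = C$ together with the Exchange lemma, and the direct route reduces to the observation that the pruning condition is already an invariant of $\rightarrow_C$. The only point demanding a little care is the bookkeeping—confirming that $[\cdot]_C$ preserves $R$ and $R_0$ and that $\Sigma^u$ is a derived quantity—so that establishing equality of the transition relations suffices to establish equality of the whole structures. I expect no genuine obstacle beyond this.
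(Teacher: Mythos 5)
Your primary argument---instantiating \Cref{lem:risk:constraints:exchange} with $C_1 = C_2 = C$ and appealing to idempotency of set union---is precisely the paper's own proof, and it is correct. The additional self-contained unfolding of \Cref{def:risk:constraint} (observing that the second application prunes nothing because membership in $[\hspace{-.15em}[C]\hspace{-.15em}]_c$ is already an invariant of $\rightarrow_C$, and that $R$, $R_0$, and the used alphabet are determined accordingly) is a sound alternative but not needed.
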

\begin{proof}
  This lemma follows from \Cref{lem:risk:constraints:exchange} and
  idempotency of set union.
\end{proof}

\begin{lemma}[Commutativity] %
  \label{lem:risk:constraints:comm}
  \begin{equation*}
    [[\mathfrak{R}]_{C_1}]_{C_2} = [[\mathfrak{R}]_{C_2}]_{C_1}
    \tag{$[\cdot]_C$-comm}
  \end{equation*}
\end{lemma}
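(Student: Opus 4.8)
The plan is to derive commutativity as an immediate corollary of the Exchange lemma (\Cref{lem:risk:constraints:exchange}), exactly as the Idempotency lemma (\Cref{lem:risk:constraints:idem}) was obtained from it. First I would apply \Cref{lem:risk:constraints:exchange} to the left-hand side to collapse the two nested constraint applications into a single one, giving $[[\mathfrak{R}]_{C_1}]_{C_2} = [\mathfrak{R}]_{C_1 \cup C_2}$. Next I would invoke the commutativity of set union, $C_1 \cup C_2 = C_2 \cup C_1$, which merely rewrites the index set of the combined constraint application. Finally I would apply \Cref{lem:risk:constraints:exchange} once more, now in the reverse direction, to re-expand $[\mathfrak{R}]_{C_2 \cup C_1}$ back into the nested form $[[\mathfrak{R}]_{C_2}]_{C_1}$, which is precisely the right-hand side of the claim.

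Concretely, the whole argument reduces to the chain of equalities
\[
[[\mathfrak{R}]_{C_1}]_{C_2} = [\mathfrak{R}]_{C_1 \cup C_2} = [\mathfrak{R}]_{C_2 \cup C_1} = [[\mathfrak{R}]_{C_2}]_{C_1},
\]
where the outer two steps are instances of \Cref{lem:risk:constraints:exchange} and the middle step is the commutativity of $\cup$ over the set $\mathcal{C}$ of constraints. In the write-up I would place this display inside the proof and annotate each step (``by \Cref{lem:risk:constraints:exchange}'', ``by commutativity of $\cup$'', ``by \Cref{lem:risk:constraints:exchange}'') so that the reader sees at a glance that nothing beyond the Exchange lemma is needed.

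I do not expect any genuine obstacle here: all the substantive work has already been discharged in establishing the Exchange lemma, whose proof proceeds by induction over $C_1$ and rests on associativity of the set intersection that defines $[\hspace{-.15em}[C]\hspace{-.15em}]_c$ in \Cref{def:risk:constraints:sem}. The only point meriting a moment's care is well-formedness: I would note that both $C_1$ and $C_2$ are each well-formed for the same risk space $R$ (\Cref{def:risk:constraints:wf}), so that $C_1 \cup C_2$ and $C_2 \cup C_1$ denote the same well-formed constraint set and \Cref{lem:risk:constraints:exchange} is legitimately applicable in both directions. Granting that hypothesis, the result is a one-line consequence of the commutativity of set union.
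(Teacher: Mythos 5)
Your proof is correct and follows exactly the paper's route: the paper likewise derives $[\cdot]_C$-comm in one line from the Exchange lemma (\Cref{lem:risk:constraints:exchange}) together with commutativity of set union. Your explicit chain of equalities and the remark on well-formedness only spell out what the paper leaves implicit.
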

\begin{proof}
  This lemma follows from \Cref{lem:risk:constraints:exchange} and
  commutativity of set union.
\end{proof}

\begin{lemma}[Associativity]
  \label{lem:risk:constraints:assoc}
  \begin{equation*}
    [[\mathfrak{R}]_{C_1 \cup C_2}]_{C_3} = [[\mathfrak{R}]_{C_1}]_{C_2 \cup C_3}
    \tag{$[\cdot]_C$-assoc}
  \end{equation*}
\end{lemma}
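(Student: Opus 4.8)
The plan is to reduce the associativity law for $[\cdot]_C$ to the already-established Exchange Lemma (\Cref{lem:risk:constraints:exchange}) together with the associativity of set union, exactly in the style of the preceding idempotency and commutativity proofs. First I would apply \Cref{lem:risk:constraints:exchange} to the left-hand side, rewriting $[[\mathfrak{R}]_{C_1 \cup C_2}]_{C_3}$ as $[\mathfrak{R}]_{(C_1 \cup C_2) \cup C_3}$. Then I would apply the same lemma to the right-hand side, rewriting $[[\mathfrak{R}]_{C_1}]_{C_2 \cup C_3}$ as $[\mathfrak{R}]_{C_1 \cup (C_2 \cup C_3)}$. At that point the two sides differ only by the bracketing of the union of constraint sets, so the result follows from the associativity of $\cup$ on the set $\mathcal{C}$ of constraints.

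Concretely, I would write a short chain of equalities:
\begin{align*}
  [[\mathfrak{R}]_{C_1 \cup C_2}]_{C_3}
  &= [\mathfrak{R}]_{(C_1 \cup C_2) \cup C_3}
  \tag{by \Cref{lem:risk:constraints:exchange}}\\
  &= [\mathfrak{R}]_{C_1 \cup (C_2 \cup C_3)}
  \tag{associativity of $\cup$}\\
  &= [[\mathfrak{R}]_{C_1}]_{C_2 \cup C_3}
  \tag{by \Cref{lem:risk:constraints:exchange}}
\end{align*}
The only subtlety is that \Cref{lem:risk:constraints:exchange} must be applied with the correct grouping of its two arguments in each step: on the first line $C_1$ plays the role of the inner constraint set and $C_2 \cup C_3$ is treated as a single set in the reverse direction on the last line, so I would make sure the instantiation of the Exchange Lemma matches these groupings (it is stated symmetrically enough that both orientations are available). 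Since all of this rests on \Cref{def:risk:constraints:sem}, where $[\hspace{-.15em}[C]\hspace{-.15em}]_c = \bigcap_{c\in C}[\hspace{-.15em}[c]\hspace{-.15em}]_c$, the underlying semantic object is an intersection indexed by the union of constraint sets, and associativity of $\cup$ guarantees the two indexing sets coincide as sets regardless of bracketing.

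I do not expect any genuine obstacle here: the entire content has been front-loaded into \Cref{lem:risk:constraints:exchange}, which already carries the induction over $C_1$ and the reliance on associativity of set intersection. The associativity law is therefore a purely formal corollary, and the hardest part is merely bookkeeping — ensuring that each invocation of the Exchange Lemma is instantiated with the intended partition of the constraints and that the union associativity step is stated explicitly rather than silently assumed.
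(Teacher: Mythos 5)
Your proof is correct and follows exactly the paper's route: the paper also derives $[\cdot]_C$-assoc directly from \Cref{lem:risk:constraints:exchange} together with associativity of set union, and your explicit chain of equalities is just a spelled-out version of that one-line argument. No gaps; the instantiation of the Exchange Lemma on each side is exactly as the paper intends.
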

\begin{proof}
  This lemma follows from \Cref{lem:risk:constraints:exchange} and
  associativity of set union.
\end{proof}

For an arbitrary $C \subseteq \mathcal{C}$, we \emph{relax}
\Cref{def:risk:constraint} by establishing the following equivalence:
$[\mathfrak{R}]_C = [\mathfrak{R}]_{C'}$ for the largest $C'\subseteq C$ well-formed for
$R$~(\Cref{def:risk:constraints:wf}).  Then, $[\mathfrak{R}]_C$
denotes the risk structure resulting from applying only and exactly
the constraints in $C'$.  Moreover, \Cref{def:risk:constraint}
generalises \Cref{def:risk:factorcomp} by guarding the
$\parallel$-step.
\Cref{def:risk:constraint}, together with the relational pruning for
three out of four cases according the analysis on
page~\pageref{case:risk:imperceptibletransitions}, yields the
following general refinement law:
\begin{lemma}[Constraints Always Refine]
  \label{lem:constref}
  \begin{equation*}
    \mathfrak{R} \refinedBy\ [\mathfrak{R}]_C
  \end{equation*}
\end{lemma}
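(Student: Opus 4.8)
The plan is to reduce the refinement claim to a purely set-theoretic containment of transition relations and then lift that containment to traces. By \Cref{def:risk:constraint}, the semantics $[\hspace{-.15em}[\mathfrak{R}]\hspace{-.15em}]_{r} = (R,\Sigma^u,\rightarrow,R_0)$ and $[\hspace{-.15em}[[\mathfrak{R}]_C]\hspace{-.15em}]_{r} = (R,\Sigma^u_C,\rightarrow_C,R_0)$ are LTSs over the \emph{same} risk space $R$ and the \emph{same} set of initial states $R_0$; they can differ only in their transition relations and, derivatively, in their used alphabets. Hence the entire argument rests on comparing $\rightarrow_C$ with $\rightarrow$.

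First I would establish the containment $\rightarrow_C \;\subseteq\; \rightarrow$. This is immediate from the single inference rule $[\cdot]_C$-step in \Cref{def:risk:constraint}: every derivable transition $\tran[C]{\sigma}{e}{\sigma'}$ has as a premise a pre-existing transition $\tran{\sigma}{e}{\sigma'}$, guarded by the side condition $(\sigma,\sigma') \in [\hspace{-.15em}[C]\hspace{-.15em}]_c$. Thus constraining never introduces new transitions; it only discards those whose source/target pair violates $C$. This matches the handling of the inconsistency cases~1 and~2 discussed after the definition, and because all risk factors are input-enabled we already have $\tran[1]{\sigma}{}{} \setminus\; \tran[2]{\sigma}{}{} = \varnothing$, so no spurious $\tau_c$-labelled transitions are added. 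As a by-product, the used alphabet can only shrink, i.e.\ $\Sigma^u_C \subseteq \Sigma^u$.

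Next I would lift this containment to runs and then to traces. Since $R_0$ is common to both LTSs and each $\rightarrow_C$-step is also a $\rightarrow$-step, a routine induction on run length shows that every run of $[\hspace{-.15em}[[\mathfrak{R}]_C]\hspace{-.15em}]_{r}$ starting in some $\sigma_0 \in R_0$ is likewise a run of $[\hspace{-.15em}[\mathfrak{R}]\hspace{-.15em}]_{r}$; consequently no state reachable under $\rightarrow_C$ fails to be reachable under $\rightarrow$, and $runs(\sigma_0,[\hspace{-.15em}[[\mathfrak{R}]_C]\hspace{-.15em}]_{r}) \subseteq runs(\sigma_0,[\hspace{-.15em}[\mathfrak{R}]\hspace{-.15em}]_{r})$. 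Projecting runs onto their sequences of visible events then yields $\traces{[\mathfrak{R}]_C} \subseteq \traces{\mathfrak{R}}$, and by the definition of refinement from \Cref{sec:kripke-struct-risk} this trace containment is precisely $\mathfrak{R} \refinedBy [\mathfrak{R}]_C$.

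The step I expect to be the only real subtlety is the bookkeeping in the run-to-trace lifting: one must check that the trace abstraction treats the invisible events uniformly across both LTSs and that reachability interacts correctly with the smaller relation. Both concerns dissolve once $\rightarrow_C \subseteq \rightarrow$ and the shared $R_0$ are in hand---the induction needs no additional hypotheses, and the genuinely routine inclusion of transition relations does all the work. In short, the lemma is a direct consequence of the guarded, purely pruning nature of the $[\cdot]_C$-step rule.
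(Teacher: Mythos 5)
Your proposal is correct and follows essentially the same route as the paper's proof: both rest on the observation that the \ref{rule:risk:constraint} rule only prunes transitions (every $\rightarrow_C$-step has a $\rightarrow$-step as its premise), lifted by induction along a behaviour to obtain $\traces{[\mathfrak{R}]_C} \subseteq \traces{\mathfrak{R}}$, i.e.\ $\mathfrak{R} \refinedBy [\mathfrak{R}]_C$. The only difference is bookkeeping: the paper inducts directly on trace prefixes, whereas you first isolate $\rightarrow_C\;\subseteq\;\rightarrow$ and induct on runs before projecting to traces, which if anything identifies the intermediate states more cleanly than the paper's step from $\tran{\sigma}{e}{\sigma'}$ to the existence of $f \mathop{\hat{}} \trace[e] \mathop{\hat{}} l'' \in \traces{\mathfrak{R}}$.
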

\begin{proof}
  The proof is by showing that
  $\traces{\mathfrak{R}} \supseteq \traces{[\mathfrak{R}]_C}$.  Fix
  $t \in \traces{[\mathfrak{R}]_C}$ with $t = f \mathop{\hat{}} l$ for induction
  over the split of $t$ into $f$ and $l$.
  Induction step: Assume that $f \in \traces{\mathfrak{R}}$ (IH).  With
  $l = \trace[e] \mathop{\hat{}} l'$ and $t = f \mathop{\hat{}} \trace[e] \mathop{\hat{}} l'$,
  there exist $\sigma,\sigma'\in R$ such that
  $\tran[C]{\sigma}{e}{\sigma'}$ and, according to
  \Cref{rule:risk:constraint}, such that $\tran{\sigma}{e}{\sigma'}$
  and $(\sigma,\sigma') \in [\hspace{-.15em}[C]\hspace{-.15em}]_c$.  Because of
  $\tran{\sigma}{e}{\sigma'}$, there must be a trace
  $f \mathop{\hat{}} \trace[e] \mathop{\hat{}} l'' \in \traces{\mathfrak{R}}$.  $e$ is part of
  $l$ and, therefore, $t$ such that we complete the induction step and
  establish a new IH.
  (We do not need to prove the equivalence $l' = l''$.)
  The case $f = \trace$ provides the induction start.
\end{proof}

\Cref{lem:constref} establishes a meaning of constraints that we might
usually expect from a methodological viewpoint.  The constraint
operator prunes the transition relation of $\mathfrak{R}$ according to a
relational specification of risk in terms of a set of constraints.

Finally, we discuss the special case that constraints are applied to
an atom $p\in\mathit{Ph}_{\rfct}$ associated with risk factor $\rfct$.  Observe that
$scope(p) = \{\rfct\}$ and, by \Cref{def:risk:constraints:wf}, $C$ is
well-formed for $R(scope(p))$ if $C$ contains only
constraints that refer to $\rfct$. Hence, our previous convention
entails $C' = \varnothing$ for the largest well-formed
$C'\subseteq C$.  From this observation, we obtain
\begin{align*}
  [p]_C
  \underbrace{= [p]_{C'}}_{\mbox{by convention}} \quad
  \underbrace{= [p]_{\varnothing}}_{\mbox{by \Cref{def:risk:constraints:wf}}} \quad
  \underbrace{= p}_{\mbox{by \Cref{def:risk:constraints:sem}}}.
\end{align*}

\section{Discussion}
\label{sec:rap:discussion}

Here, we briefly discuss potentials of risk structures to be used as a
formal artefact in failure assessments~(\Cref{sec:relwork:randfail})
and in the design of safety monitors~(\Cref{sec:relwork:monitors,sec:backg:monitors}).

\paragraph{Integration with Failure Analysis.}

As indicated in \Cref{sec:risk:constraints}, fault
trees~(\Cref{sec:relwork:randfail}) can be transformed into risk
structures by using dependencies.  Because of the step semantics
underlying constraints~(\Cref{def:risk:constraints:sem}), such a
translation is also possible for gates like \texttt{PAND} or
\texttt{POR}\footnote{\texttt{AND} and \texttt{OR} gates that express
  priority in a fault tree by taking into account the order of event
  occurrences.} as used in dynamic fault trees.  This way, the
presented approach offers the possibility to use fault trees generated
from architectures to be used as a risk structure or to be integrated
into an existing risk structure.  This leads to a practical way of
combining risk factors internal to an autonomous robot with risk
factors stemming from its operational environment.

\paragraph{Monitoring.}
\label{sec:notes-monitoring}

Each risk factor $\rfct$~(\Cref{fig:riskfactor}) can be implemented as
a \emph{monitor} of the process $P$.  Events of~$\rfct$ can be
translated into observers~(i.e.,\xspace sensors and variable checkers) of
transitions maintaining the monitor state and of transitions leading
to a phase change.  Consequently, implementing the events and phases
of each risk factor by observers allows the use of a risk structure
$\mathfrak{R}$ as a monitor of $P$'s risk state.

Given $\mathfrak{R}$ and $P$, we can devise an incremental and concurrent
approach to monitoring:  While the
situation %
of $P$ is monitored, $P$'s risk state is monitored and
continuously evaluated according to this situation.
Risk monitoring then comprises two parts: \emph{safety monitoring} for the
detection of endangerments~(i.e.,\xspace violations of safety properties), and
\emph{co-safety monitoring} for the detection of mitigations~(i.e.,\xspace
acceptances of mitigation properties).  For learning
$\tau_c$~(\Cref{sec:risk:constraints:appliedtors}), both monitoring
tasks can be carried through in an incremental mode, that is,\xspace introducing
non-existing states and transitions according the given states.

When using a risk factor as a monitor automaton for $P$, \emph{how
  do we deal with nondeterministic risk factors}?  In nondeterministic
risk factors, from an observed phase and after an observed event,
$P$ may have reached several distinct phases and, thus, several
distinct risk states.  Without further information the monitor does
not know which of these states has actually been reached by $P$.
The monitor would then be in a risk region which is ordered and
bounded by $min_{\leq_{\mathsf{m}}}/max_{\leq_{\mathsf{m}}}$ as discussed in
\Cref{sec:risk:safety}.  Given that the phases of risk factors carry
information about $P$'s state in terms of disjoint state invariants,
we can design state estimators into our monitor.  These estimators
might gradually be able to restore some of the lost state information
and again uniquely identify $P$'s actual risk state according to
$\mathfrak{R}$.

\section{Conclusions and Future Work}
\label{sec:concl-future-work}

The certification of robots and autonomous systems requires the
validation and verification of their controllers.  High automation
requires these controllers to have risk monitoring and handling
functions.  Complex machines and environments will make such functions
complex as well.  Such complexity might best be handled by incremental
construction and verification.  It is therefore helpful to have a
compositional method that allows these functions be incrementally
modelled and assessed at design time.  Eventually, these models will
be transformed into verified run-time monitors and mitigation
controllers that implement these functions.

In this work, we discussed a formal framework for analysing risk of an
autonomous system in its operational environment and for constructing
corresponding models that represent risk monitoring and handling
functions for this system.  Algebraic laws over these models support
systematic design and help the engineer to handle large models.  We
close the discussion with an explanation of how the risk model can be
used for verified monitor synthesis.

\paragraph{Future Work.}

Further steps in developing this framework will include
\begin{itemize}
\item the addition of probabilistic semantics to risk
  factors~(\Cref{sec:risk:factors}),
\item the extension of the presented framework to more general forms
  of risk factors~(\Cref{sec:risk:factors}),
\item the construction of risk lattices from risk spaces, mitigation
  orders, and event structures~(\Cref{sec:risk:mitorders}),
\item the development of reachability analysers for risk
  estimation %
  and synthesis of mitigation strategies, based on subsets of the risk
  space associated with a process state and
  dynamics~(\Cref{sec:risk:safety}), %
\item the use of a dynamical model of the process for the reachability
  analysers~(\Cref{sec:risk:safety}),
\item the enhancement of the discussion of factor dependency
  constraints~(\Cref{sec:risk:constraints}),
\item the investigation of distributivity of composition and
  constraints in risk
  structures~(\Cref{sec:risk:constraints:appliedtors}),
\item a mechanisation and extension of the given proofs in
  Isabelle/HOL.
\end{itemize}

\paragraph{Acknowledgements.}

This work is supported by the Deutsche Forschungsgemeinschaft~(DFG)
under Grants no.~GL~915/1-1 and GL~915/1-2.
I am deeply grateful to Jim Woodcock and Simon Foster for many
inspiring discussions, strongest guidance, particularly on the use of
algebraic methods, and for feedback on previous versions of this
manuscript.  It is my pleasure to thank Ana Cavalcanti and Cliff Jones
for raising important questions about the abstraction, composition,
and methodology underlying risk structures.  Furthermore, I owe
sincere gratitude to James Baxter, Alvaro Miyazawa, and Pedro Ribeiro
for many enlightening discussions and for contributing to an extremely
productive work environment at University of York.

\bibliographystyle{abbrvnat}  %
\bibliography{}

\appendix

\section{Nomenclature}
\label{sec:list-abbreviations}

See \Cref{tab:abbrev}.

\begin{table}
  \small
  \caption{Important abbreviations used in this article
    \label{tab:abbrev}}
  \begin{tabular}{ll}
    \toprule
    AV & Autonomous Vehicle\\
    BTA & Bow Tie Analysis\\
    CSP & Communicating Sequential Processes\\
    ETA & Event Tree Analysis \\
    FMEA & Failure Mode Effects Analysis \\ 
    FTA & Fault Tree Analysis \\
    HazOp & Hazard Operability (studies)\\
    LOPA & Layer Of Protection Analysis\\
    LTL & Linear Temporal Logic \\
    LTS & Labelled Transition System\\
    MCS & Minimum Cut Set\\
    PRA & Probabilistic Risk Assessment\\
    ROS & Robot Operating System \\
    STPA & System-Theoretic Process Analysis \\
    STAMP & System-Theoretic Accident Model and Process \\
    TL & Temporal Logic\\
    UML & Unified Modeling Language \\
    WBA & Why-Because Analysis \\
    \bottomrule
  \end{tabular}
\end{table}

\section{Proof Details}
\label{sec:proofs}

This section collects some of the more detailed proofs.

\begin{proof}[Proof of \Cref{lemma:riskstatespace:exchange}.]
  \label{proof:riskstatespace:exchange}
  The proof is by mutual existence and uniqueness:
  For each $\sigma \in R(F_1 \cup F_2)$ (i) there exists a $\sigma_1
  \cup \sigma_2 \in R(F_1) \otimes R(F_2)$ and (ii) this pair is
  unique, and (iii, iv) vice versa.

  We show (i): Let $\sigma \in R(F_1 \cup F_2)$, then, by definition,
  $\sigma$ is a total injection and, hence, every restriction of
  $\sigma$ is a total injection, particularly, the restrictions
  $\sigma|_{F_1}$ and $\sigma|_{F_2}$.  Obviously, we have
  $\sigma|_{F_1}(\rfct) = \sigma|_{F_2}(\rfct)$ for all $\rfct \in F_1 \cap F_2$.
  Furthermore, by definition, $\sigma(\rfct)$ is faithful to $\mathit{Ph}_{\rfct}$ for
  all $\rfct \in F_1 \cup F_2$ and, thus, so are both these restrictions.  These
  two results lead to $\sigma_1 = \sigma|_{F_1} \in R(F_1)$ and
  $\sigma_2 = \sigma|_{F_2} \in R(F_2)$ and, finally, to the existence
  of the wanted pair.
  \hfill\resizebox{.5em}{.4em}{$\boxslash$}

  We show (ii): Suppose there are two pairs
  $(\sigma_1,\sigma_2) \neq (\sigma_1',\sigma_2) \in R(F_1) \otimes
  R(F_2)$.  Then there exists $\rfct \in F_1 \setminus F_2$ where
  $\sigma_1(\rfct) \neq \sigma_1'(\rfct)$, thus, also
  $\sigma_1 \cup \sigma_2 \neq \sigma_1' \cup \sigma_2$.  However,
  there can be no faithful total injection $\sigma \in R(F_1 \cup F_2)$ such
  that
  $\sigma = \sigma_1 \cup \sigma_2 \land \sigma = \sigma_1' \cup
  \sigma_2$.
  \hfill\resizebox{.5em}{.4em}{$\boxslash$}

  We show (iii): For each
  $\sigma_1 \cup \sigma_2 \in R(F_1)\otimes R(F_2)$ there exists a
  $\sigma \in R(F_1 \cup F_2) = \{ \sigma \in (F_1 \cup F_2)
  \rightarrow \bigcup_{\rfct \in F_1 \cup F_2} \mathit{Ph}_{\rfct} \mid \sigma\;
  \text{is a total injection} \land \forall \rfct \in F_1 \cup F_2 \colon
  \sigma(\rfct)\in\mathit{Ph}_{\rfct} \}$: By definition, both $\sigma_1$ and
  $\sigma_2$ are faithful total injections~(i.e.,\xspace matching results for
  all $\rfct \in F_1 \cap F_2$).  Then, we can construct a faithful total 
  injection $\sigma$ by applying set union to the domains and
  co-domains of these two.  Thus, $\sigma \in R(F_1 \cup F_2)$.
  \hfill\resizebox{.5em}{.4em}{$\boxslash$}

  We show (iv): Suppose from $\sigma_1 \cup
  \sigma_2$ we can construct $\sigma \neq \sigma' \in R(F_1 \cup
  F_2)$ then there exists $\rfct \in F_1 \cup
  F_2$ such that $\sigma(\rfct) \neq \sigma'(\rfct)$.  However, then either
  $\sigma_1$ or $\sigma_2$ must have violated injectivity which
  in turn would have violated the definition of $R(F_1) \otimes R(F_2)$.
\end{proof}

\begin{proof}[Proof of \Cref{lem:homo1}.]
  \label{proof:homo1}
  $(\mathbb{F},\cup)$ is a semi-group because $\cup$ is an associative
  binary operation on $\mathbb{F}$.
  We have that
  \begin{equation}
    \sigma_1 \approx (\sigma_2 \cup \sigma_3) \Leftrightarrow
    \sigma_1\approx\sigma_2 \land \sigma_1 \approx
    \sigma_3.
    \label{frm:proof:lem:homo1}
  \end{equation}
  (The sub-proof based on the definition of $\approx$ is omitted here.)
  Based on \Cref{frm:proof:lem:homo1}, we show by algebraic
  manipulation that the binary operation $\otimes$ on $\mathcal{R}$ is
  associative:
  \begin{align*}
    & R(F_1) \otimes (R(F_2) \otimes R(F_3)) \\
    & = \{\sigma_1\cup\sigma\mid \sigma_1\in R(F_1)
      \land \sigma \in R(F_2) \otimes R(F_3) \land \sigma_1 \approx \sigma\} \\
    & = \{\sigma_1\cup(\sigma_2\cup\sigma_3)\mid \sigma_1\in R(F_1)
      \land (\sigma_2\cup\sigma_3) \in R(F_2) \otimes R(F_3) \land \sigma_1 \approx (\sigma_2\cup\sigma_3)\} \\
    & = \{\sigma_1\cup(\sigma_2\cup\sigma_3)\mid \sigma_1\in R(F_1)
      \land (\sigma_2\in R(F_2) \land \sigma_3 \in R(F_3) \land
      \sigma_2 \approx \sigma_3) \land \sigma_1 \approx (\sigma_2\cup\sigma_3)\} \\
    & = \{(\sigma_1\cup\sigma_2)\cup\sigma_3\mid (\sigma_1\in R(F_1)
      \land \sigma_2\in R(F_2)) \land \sigma_3 \in R(F_3) \land
      \sigma_2 \approx \sigma_3 \land \sigma_1 \approx
      (\sigma_2\cup\sigma_3)\}
      \tag{by \Cref{frm:proof:lem:homo1}} \\
    & = \{(\sigma_1\cup\sigma_2)\cup\sigma_3\mid (\sigma_1\in R(F_1)
      \land \sigma_2\in R(F_2)) \land \sigma_3 \in R(F_3) \land
      \sigma_2 \approx \sigma_3 \land \sigma_1\approx\sigma_2 \land
      \sigma_1 \approx \sigma_3\}\\
    & = \{(\sigma_1\cup\sigma_2)\cup\sigma_3\mid (\sigma_1\in R(F_1)
      \land \sigma_2\in R(F_2) \land \sigma_1 \approx \sigma_2) \land \sigma_3 \in R(F_3) \land
      (\sigma_1 \cup \sigma_2)\approx\sigma_3\} \\
    & = \{(\sigma_1\cup\sigma_2)\cup\sigma_3\mid
      (\sigma_1\cup\sigma_2)\in R(F_1) \otimes R(F_2) \land \sigma_3 \in R(F_3) \land
      (\sigma_1 \cup \sigma_2)\approx\sigma_3\} \\
    &
    = (R(F_1) \otimes R(F_2)) \otimes R(F_3)
  \end{align*}
  Hence, $(\mathcal{R},\otimes)$ is a semi-group, too.   
  \Cref{lemma:riskstatespace:exchange} then completes the proof.
\end{proof}

\begin{proof}[Proof of \Cref{thm:risk:mitorderlinearity}.]
  \label{proof:risk:mitorderlinearity}
  For this we only need to show that any two risk states
  $\sigma,\sigma'\in R$ are
  \begin{inparaenum}[(i)]
  \item comparable and
  \item antisymmetric:
    $[\sigma]_{\sim_{\mathsf{s}}} \leq_{\mathsf{m}} [\sigma']_{\sim_{\mathsf{s}}} \land
    [\sigma']_{\sim_{\mathsf{s}}} \leq_{\mathsf{m}} [\sigma]_{\sim_{\mathsf{s}}} \Rightarrow [\sigma]_{\sim_{\mathsf{s}}} =_{\mathsf{m}}
    [\sigma']_{\sim_{\mathsf{s}}}$.
  \end{inparaenum}

  We show (i) by showing that the conditions (a) and (b) guarantee the
  comparability of any two risk states in $R$ based on the interval
  order $\leq$ as defined above and use the fact that comparability
  can be dropped from %
  $R/\sim_{\mathsf{s}}$ to $R$ using
  $[\sigma]_{\sim_{\mathsf{s}}} \leq_{\mathsf{m}} [\sigma']_{\sim_{\mathsf{s}}} \iff
  \forall
  \sigma\in[\sigma]_{\sim_{\mathsf{s}}},\sigma''\in[\sigma']_{\sim_{\mathsf{s}}} \colon
  \sigma \leq_{\mathsf{m}} \sigma''$~(\Cref{eq:strmitord:dropped}).  We have
  to consider the following three 
  cases to complete the sub-proof of (i):

  \emph{Case ``no risk factors activated'':}
  $\varnothing \geq \varnothing \lor \varnothing \subset \varnothing$
  validates (b) and lack of comparability invalidates (a),
  yet we have $[\sigma]_{\sim_{\mathsf{s}}} \leq_{\mathsf{m}} [\sigma']_{\sim_{\mathsf{s}}}$.
  \hfill\resizebox{.5em}{.4em}{$\boxslash$}

  \emph{Case ``at least one risk factor activated only in $\sigma$'':}
  Let $(a,b) = S(\sigma)$.
  $(a,b) \geq \varnothing \lor (a,b) \subset \varnothing$ invalidates
  (a) and (b).  However, the observation
  $\varnothing \geq (a,b) \lor \varnothing \subset (a,b)$ yields
  $[\sigma']_{\sim_{\mathsf{s}}} \leq_{\mathsf{m}} [\sigma]_{\sim_{\mathsf{s}}}$.  The dual of this
  case works analogously.
  \hfill\resizebox{.5em}{.4em}{$\boxslash$}
  
  \emph{Case ``at least one risk factor activated in both $\sigma,\sigma'$'':}
  Let $(a,b) = S(\sigma)$ and $(c,d) = S(\sigma')$.  We need to show
  $(a,b) \geq (c,d) \lor (a,b) \subset (c,d)$ for all
  $a,b,c,d \in\mathbb{R}$: We can assume $a\leq b \land c\leq d$ by
  definition of intervals. Then, we face the following cases:
  \begin{itemize}
  \item $c \leq a \land d \leq b$ validates (a) by definition of $\leq$
    over intervals.
  \item $c > a \land d \leq b$ validates (b).
  \item $c \leq a \land d > b$ implies (b) for the dual case
    $[\sigma']_{\sim_{\mathsf{s}}} \leq_{\mathsf{m}} [\sigma]_{\sim_{\mathsf{s}}}$.
  \item $c > a \land d > b$ implies (a) for the dual case.
  \end{itemize}
  Hence, from each of these four cases, either $[\sigma]_{\sim_{\mathsf{s}}}
  \leq_{\mathsf{m}} [\sigma']_{\sim_{\mathsf{s}}}$
  or $[\sigma']_{\sim_{\mathsf{s}}} \leq_{\mathsf{m}} [\sigma]_{\sim_{\mathsf{s}}}$ follows.

  Then, we show (ii) by contradiction: Assuming
  $[\sigma]_{\sim_{\mathsf{s}}} \leq_{\mathsf{m}} [\sigma']_{\sim_{\mathsf{s}}} \land
  [\sigma']_{\sim_{\mathsf{s}}} \leq_{\mathsf{m}} [\sigma]_{\sim_{\mathsf{s}}}$, we have
  $\forall \sigma\in[\sigma]_{\sim_{\mathsf{s}}},\sigma''\in[\sigma']_{\sim_{\mathsf{s}}}
  \colon \sigma \leq_{\mathsf{m}} \sigma'' \land \sigma \geq_{\mathsf{m}} \sigma''$ by
  dropping from $R/\sim_{\mathsf{s}}$.  Now, we claim that
  $[\sigma]_{\sim_{\mathsf{s}}} \not=_{\mathsf{m}} [\sigma']_{\sim_{\mathsf{s}}}$.  Hence,
  $\exists \sigma\in[\sigma]_{\sim_{\mathsf{s}}},
  \sigma''\in[\sigma']_{\sim_{\mathsf{s}}}\colon \sigma \not=_{\mathsf{m}} \sigma''$ and,
  consequently,
  $\exists \sigma\in[\sigma]_{\sim_{\mathsf{s}}},
  \sigma''\in[\sigma']_{\sim_{\mathsf{s}}}\colon \sigma \not\leq_{\mathsf{m}} \sigma'' \lor \sigma
  \not\geq_{\mathsf{m}} \sigma''$.  The latter contradicts our assumption.
\end{proof}

\begin{proof}[Proof of \Cref{thm:risk:phaseinclusion}.]
  \label{proof:risk:phaseinclusion}
  Fix a finite $F \subseteq \mathbb{F}$ and a pair
  $\sigma, \sigma' \in R(F)$.  The proof is by induction over $F$ and
  relies on the assumption that, for any $\rfct\in F$, the phase $\activ$ is the
  \emph{unique maximal element} of $\preceq_{\mathsf{\rfct}}$ and that 
  $\mathit{active}$ only returns such elements:

  Induction start $F_0 = \varnothing$:
  $\sigma|_{\varnothing} \preceq_{\mathsf{m}} \sigma'|_{\varnothing}$ holds
  trivially and so does
  $active(\sigma'|_{\varnothing}) \subseteq
  active(\sigma|_{\varnothing})$.

  Induction step (IS) $F_{n+1} = F_n \cup \{\rfct\}$ where $n \geq 0$ and
  $\rfct \in F\setminus F_n$: For the induction hypothesis (IH), assume
  $\sigma|_{F_n} \preceq_{\mathsf{m}} \sigma'|_{F_n} \Rightarrow
  active(\sigma'|_{F_n}) \subseteq active(\sigma|_{F_n})$.  Then, we
  show that
  $\sigma|_{F_{n+1}} \preceq_{\mathsf{m}} \sigma'|_{F_{n+1}} \Rightarrow
  active(\sigma'|_{F_{n+1}}) \subseteq
  active(\sigma|_{F_{n+1}})$~(i.e.,\xspace the IS).
  
  For this, prove
  \begin{itemize}
  \item the case of \emph{incomparable phases
      $(\sigma(\rfct),\sigma'(\rfct)) \not\in\; \preceq_{\mathsf{\rfct}}$}: In this case, the
    state pair gets incomparable and the implication is trivially fulfilled,
  \item the \emph{inverse case} $\sigma(\rfct) \succ_{\rfct} \sigma'(\rfct)$: In
    this case, the state pair gets incomparable and the implication is again
    trivially fulfilled, and
  \item the \emph{aligned case} $\sigma(\rfct) \preceq_{\mathsf{\rfct}} \sigma'(\rfct)$: In
    this case, the state pair stays comparable.  However, $\sigma'(\rfct)$ can
    either be $\activ$~(hence, $\sigma(\rfct) = \activ$ and maintaining
    $\subseteq$), $\mitig$~(hence,
    $\sigma(\rfct) \in \{\activ,\mitig,\inact\}$ and maintaining
    $\subseteq$), or $\inact$~(see former sub-case).
  \end{itemize}
  Having proved these cases completes the induction step by
  establishing IS.

  For $\precsim_{\mathsf{m}}$, we only have to substitute case 1 of the IS: For
  $\rfct$, the smallest $\preceq_{\mathsf{\rfct}}$ after \Cref{def:riskfactor}
  implies incomparability at most for
  $(\inact,\mitig)$ and $(\mitig,\inact)$.  These two phase
  pairs of $\rfct$ would not alter $\mathit{active}$ of both states and hence
  maintain $\subseteq$ as well.
\end{proof}

\begin{proof}[Proof of \Cref{lem:risk:constraints:exchange}.]
  \label{proof:risk:constraints:exchange}
  From associativity of set intersection in
  \Cref{def:risk:constraints:sem}, we know that the order in which
  constraints are applied to $R \times R$ does not matter.
  Consequently, we also have
  $[\hspace{-.15em}[c]\hspace{-.15em}]_c \supseteq [\hspace{-.15em}[\{c,c'\}]\hspace{-.15em}]_c$ for any
  $c,c' \in \mathcal{C}$, that is,\xspace constraints only prune
  $R \times R$.
  First, we prove the following lemma:
  \begin{equation}
    \label{eq:risk:constraints:exchange:helper}
    [[\mathfrak{R}]_{\{c\}}]_{C} = [\mathfrak{R}]_{\{c\} \cup C}
  \end{equation}
  Fix $\tran[]{\sigma}{e}{\sigma'}$.

  $\Rightarrow$:
  \begin{itemize}
  \item If $\tran[]{\sigma}{e}{\sigma'} \in [\hspace{-.15em}[c]\hspace{-.15em}]_c$ then
    $[\cdot]_C$-step is applicable to $[\mathfrak{R}]_{\{c\}}$, and if
    $\tran[]{\sigma}{e}{\sigma'} \in [\hspace{-.15em}[C]\hspace{-.15em}]_c$ then to
    $[[\mathfrak{R}]_{\{c\}}]_C$.  Then, $\tran[]{\sigma}{e}{\sigma'}$ must be
    in $[\hspace{-.15em}[c]\hspace{-.15em}]_c \cap [\hspace{-.15em}[C]\hspace{-.15em}]_c$ which by
    \Cref{def:risk:constraints:sem} is $[\hspace{-.15em}[\{c\} \cup C]\hspace{-.15em}]_c$.
    Because of
    $\tran[]{\sigma}{e}{\sigma'} \in [\hspace{-.15em}[\{c\} \cup C]\hspace{-.15em}]_c$, 
    $[\cdot]_C$-step applies to $[\mathfrak{R}]_{\{c\} \cup C}$.
  \item If $\tran[]{\sigma}{e}{\sigma'} \not\in [\hspace{-.15em}[c]\hspace{-.15em}]_c$ or
    $\tran[]{\sigma}{e}{\sigma'} \not\in [\hspace{-.15em}[C]\hspace{-.15em}]_c$ then the
    $[\cdot]_C$-step is not applicable to $[[\mathfrak{R}]_{\{c\}}]_C$.
    Because of
    $[\hspace{-.15em}[\{c\} \cup C]\hspace{-.15em}]_c = [\hspace{-.15em}[c]\hspace{-.15em}]_c \cap [\hspace{-.15em}[C]\hspace{-.15em}]_c$, this also
    holds of $[\mathfrak{R}]_{\{c\} \cup C}$.
  \end{itemize}
  \hfill\resizebox{.5em}{.4em}{$\boxslash$}

  $\Leftarrow$:
  \begin{itemize}
  \item If $\tran[]{\sigma}{e}{\sigma'} \in [\hspace{-.15em}[\{c\} \cup C]\hspace{-.15em}]_c$
    then $[\cdot]_C$-step is applicable to $[\mathfrak{R}]_{\{c\} \cup C}$ and,
    because of $[\hspace{-.15em}[c]\hspace{-.15em}]_c \cap [\hspace{-.15em}[C]\hspace{-.15em}]_c$, twice to
    $[[\mathfrak{R}]_{\{c\}}]_C$.
  \item If
    $\tran[]{\sigma}{e}{\sigma'} \not\in [\hspace{-.15em}[\{c\} \cup C]\hspace{-.15em}]_c$ then
    $[\cdot]_C$-step is not applicable to $[\mathfrak{R}]_{\{c\} \cup C}$ and,
    because of
    $\tran[]{\sigma}{e}{\sigma'} \not\in [\hspace{-.15em}[c]\hspace{-.15em}]_c \cap
    [\hspace{-.15em}[C]\hspace{-.15em}]_c$, at most once to $[[\mathfrak{R}]_{\{c\}}]_C$.
  \end{itemize}
  \hfill\resizebox{.5em}{.4em}{$\boxslash$}
  
  Second, we show by induction over $C_1$ that the $[\cdot]_C$-step
  rule applies in the same way to both sides of
  \Cref{eq:risk:constraints:exchange}.
  Induction start with $C_1^0 = \varnothing$: In this case, we have
  $[[\mathfrak{R}]_{\varnothing}]_{C_2} = [\mathfrak{R}]_{\varnothing \cup C_2}$ and
  because $[\cdot]_C$-step is always applicable, 
  $[\mathfrak{R}]_{C_2} = [\mathfrak{R}]_{C_2}$.
  \hfill\resizebox{.5em}{.4em}{$\boxslash$}

  Induction step (IS) with $C_1^{n+1} = C_1^n \cup \{c\}$ where
  $n \geq 0$ and $c \in C_1 \setminus C_1^{n}$:
  By assuming $[[\mathfrak{R}]_{C_1^n}]_{C_2} = [\mathfrak{R}]_{C_1^n \cup C_2}$ (IH), we show
  \begin{align*}
    & [[\mathfrak{R}]_{C_1^{n+1}}]_{C_2}
      \tag{by def of IS}
    \\
    &= [[\mathfrak{R}]_{C_1^n \cup \{c\}}]_{C_2}
        \tag{by $\cup$-comm and \Cref{eq:risk:constraints:exchange:helper}}
    \\
    &= [[[\mathfrak{R}_{\{c\}}]_{C_1^n}]_{C_2}
        \tag{by IH}
    \\
    &= [[\mathfrak{R}_{\{c\}}]_{C_1^n \cup C_2}
        \tag{by \Cref{eq:risk:constraints:exchange:helper}}
    \\
    &= [\mathfrak{R}]_{\{c\} \cup C_1^n \cup C_2}
        \tag{by def}
    \\
    &= [\mathfrak{R}]_{C_1^{n+1} \cup C_2}
  \end{align*}
\end{proof}

\label{lastpage}
\end{document}